\newcommand{\R}{\mathbb{R}}
\newcommand{\ol}{\overline}
\newcommand{\yr}[1]{\overline{#1}\vrule\ }
\newtheorem{theorem}{THEOREM} 
\newtheorem{corollary}{COROLLARY} 
\newtheorem{proposition}{PROPOSITION}
\newtheorem{lemma}{LEMMA} 
\newtheorem{definition}{DEFINITION}
\newtheorem{assumption}{ASSUMPTION}
\def\bi{\begin{itemize}}
\def\ei{\end{itemize}}
\def\i{\item}
\begin{document}

\title{No-Arbitrage Prices of Cash Flows and Forward Contracts as Choquet Representations}
\author{Tom Fischer\thanks{Institute for Mathematics, University of Wuerzburg, 
Emil-Fischer-Strasse 30, 97074 Wuerzburg, Germany.
Tel.: +49 931 3188911.
E-mail: {tom.fischer@uni-wuerzburg.de}.
}
\thanks{Thanks go to Torsten Kleinow for pointing out Ragnar Norberg's article on payment measures, and also
to Ragnar Norberg for a comment regarding this work.}\\
University of Wuerzburg}
\date{\today}

\maketitle

\begin{abstract}
In a market of deterministic cash flows, given as an additive, symmetric relation of exchangeability
on the finite signed Borel measures on the non-negative real time axis,
it is shown that the only arbitrage-free price functional that fulfills some
additional mild requirements is the integral of the unit zero-coupon
bond prices with respect to the payment measures. 
For probability measures, this is a Choquet representation, where the
Dirac measures, as unit zero-coupon bonds, are the extreme points. Dropping one of the requirements, 
the Lebesgue decomposition is used to construct counterexamples, where the Choquet price formula
does not hold despite of an arbitrage-free market model. The concept is then extended
to deterministic streams of assets and currencies in general, yielding a valuation principle for 
forward markets. Under mild assumptions, it is shown that a foreign cash flow's worth in
local currency is identical to the value of the cash flow in local currency for which 
the Radon-Nikodym derivative with respect to the foreign cash flow is the forward FX rate.
\end{abstract}

{\bf Keywords:} Cash flows, Choquet representation, discounting, forward markets, interest rate parity,
Law of One Price, no-arbitrage pricing, present value, term structure.\\

\noindent{\bf JEL Classification:} G12, G13\\

\noindent{\bf MSC2010:} 91B24, 91B25, 91G99


\section{Introduction}

\label{Introduction}

One of the first things a student of financial or actuarial mathematics commonly learns is that the 
present value
of a temporary annuity that pays one currency unit annually in advance for $n$ years is given by the partial
sum of a geometric series, $a_{\yr{n}} = v^0 + v^1 +\ldots + v^{n-1}$, 
where the common ratio $v = 1/(1+i)$ is
the discount factor belonging to the annual effective rate of interest $i>0$
(e.g.~Gerber (1997), p.~9; McCutcheon and Scott (1986), p.~45). Often, at this introductory
stage of a course in this field, it stays unmentioned that this value can in fact be understood as
a no-arbitrage price. Similarly, if one leaves
the constant interest environment and assumes effective spot rates $y_t$ for a maturity $t\geq 0$, 
the present value of the annuity
would be given by $(1+y_0)^0 + (1+y_1)^{-1} + \ldots + (1+y_{n-1})^{1-n}$.
Denote now the prices of unit zero-coupon bonds with maturity $t$ by $P_t$. Since spot rates are the yields
of zero-coupon bonds, and since such an annuity simply is a collection of unit zero-coupon bonds, 
it holds that
\begin{equation}
\label{cont_ann}
a_{\yr{n}}
= 
\sum_{t=0}^{n-1}
P_t
=
\int_0^n P_t d\gamma(t) ,
\end{equation}
where 
\begin{equation}
\gamma = \sum_{t=0}^{n-1} \delta_t 
\end{equation}
and where $\delta_t$ is the Dirac measure in $t$ (for the first equality in \eqref{cont_ann}
compare also Hull (2008), p.~79, in the case of the value of a coupon paying bond). 
The main objective of this article is to derive under
fairly mild conditions that this pricing principle in the form of an integral of the unit zero-coupon bond
prices with regards to a measure that describes the cash flow is indeed 
the uniquely determined no-arbitrage price $\pi(\gamma)$ of any $\gamma$ which is a general
deterministic cash flow in the sense that it is an arbitrary finite signed Borel measure on 
the non-negative real time axis, i.e.~it holds that
\begin{equation}
\label{price1}
\pi(\gamma) = \int_0^\infty \pi(\delta_t)\, d\gamma(t) = \int_0^\infty P_t\, d\gamma(t) .
\end{equation}
Loosely speaking, \eqref{price1} means that any payments of a cash flow get discounted with the
unit zero-coupon prices, before they get added up. However, an important point here is that while,
for given $P_t$, \eqref{price1} follows for discrete cash flows with finitely many payments immediately 
from simple no-arbitrage considerations, this is not the case for cash flows which
pay continuously over time. For instance, the present value of an annuity which time-continuously
pays 1 per year at a constant rate in a flat term structure is in the literature usually -- and without
much further comment -- given as (e.g.~McCutcheon and Scott (1986), p.~51)
\begin{equation}
\label{counterexample1}
\overline{a}_{\yr{n}} 
= 
\int_0^n P_t \, dt 
= 
\int_0^n (1+i)^{-t} \, dt .
\end{equation}
However, this is not a simple consequence of no-arbitrage pricing in relation to zero-coupon bonds, 
as an example of an arbitrage-free
market with $P_t = (1+i)^{-t}$ in Section \ref{Counterexamples} of this article will demonstrate, 
where \eqref{counterexample1} does not hold. 
So, the additional, but relatively mild, assumptions beyond no-arbitrage that are made in this paper, 
are in fact crucial for \eqref{price1} to hold in general.

While the article is mostly focused on the valuation of cash flows relative to unit zero-coupon bonds,
it is possibly worth noting that the notion of a market as a relation between cash flows will require a minimum
of assumptions, such as symmetry, additivity, and the existence of at least one strictly positive spot price 
for strictly positive cash flows. 
That the exchangeability relation is in fact an equivalence relation, or that short-selling is possible,
is a mere consequence of additivity. Similarly, linearity of the equivalence relation is derived from no-arbitrage,
and not assumed {\em a priori}.

In a second step, the article then applies the obtained pricing principle to forward markets,
meaning the consideration of a money market with deterministic cash flows in a 
local currency combined with 
a market in which deterministic flows of assets, commodities, or foreign currency are
exchanged. Mathematically, this combination is modeled by a direct sum of two linear spaces of
finite signed measures.
This results in an arbitrage-free forward contract pricing formula which states, 
for instance in the case of a FX forward market which trades the EUR/USD cross, that
(in slight abuse of later used notation)
the EUR--price 
$\pi_\text{\euro}(\gamma_\$)$ of a U.S.--Dollar cash flow $\gamma_\$$ is given by
\begin{eqnarray}
\label{price2}
\pi_\text{\euro}(\gamma_\$)
& = &
\text{USDEUR}_0 \cdot\pi_\$(\gamma_\$) 
\; = \;
\text{USDEUR}_0 \int_0^\infty P^\text{\$}_t\, d\gamma_\text{\$}(t) 
\\
\nonumber
& = & 
\int_0^\infty P^\text{\euro}_t\, d\gamma_\text{\euro}(t) 
\; = \;
\pi_\text{\euro}(\gamma_\text{\euro}) ,
\end{eqnarray}
where 
\begin{equation}
\frac{d\gamma_\text{\euro}}{d\gamma_\$}(t) = \text{USDEUR}_t ,
\end{equation}
i.e.~where $\gamma_\text{\euro}$ is the measure with the Radon-Nikodym derivative
$\text{USDEUR}_t$ with respect to $\gamma_\$$, where
$\text{USDEUR}_t$  is the forward price in EUR of USD 1.00 for delivery at time $t$,
$\text{USDEUR}_0$ is the corresponding FX spot rate,
and $P^\text{\euro}_t$ ($P^\text{\$}_t$) is the \euro-price ($\$$-price) of a zero-coupon bond that pays 
EUR 1.00 (USD 1.00) at $t$.
As is well known in theory and practice (e.g.~pp.~167--169 in Hull, 2008), Eq.~\eqref{price2} means that 
the no-arbitrage price of a FX cash flow can be calculated in two ways: 

1) discount the cash
flow in the foreign term structure and then convert it at the spot exchange rate, or 

2) 
use the prevailing forward rates, turn the entire FX cash flow into one in the local currency,
and then discount this new cash flow with respect to the local term structure.\\

While simplified versions of both valuation principles, \eqref{price1} and \eqref{price2}, 
are ubiquitously used in theory and practice, 
they are usually not formulated for such general cash flows, 
and \eqref{price1} is usually not mathematically rigorously derived as a no-arbitrage
price in a specific model of cash flows that can be exchanged for one another.
An important exception, but not an arbitrage theory approach in the strict sense,
where \eqref{price1} was formulated and derived in a somewhat different, but
also very general context, is Norberg (1990), which will be discussed in detail later 
on (Section \ref{Norberg's theory of consistent cash flow valuation}).
The presented article also makes a point in proving that examples of arbitrage-free
models of money markets or forward markets, where these price formulae hold,
indeed exist, and it explicitly shows how such models can be constructed, but also, how
arbitrage-free counterexamples, where the price formulae do not hold, can be constructed.

The outline is as follows. In the next section, several results for additive relations on
linear spaces will be derived in preparation for the application of those to market models
of deterministic cash flows. After some preliminary remarks on measure
and integration in Section \ref{Preliminary remarks on measures and integration}, the
notion of a market for cash flows will be introduced in Section \ref{Markets and prices of deterministic cash flows}
by means of an additive, symmetric relation of exchangeability. 
The results of Section \ref{Some results for additive relations on linear spaces}
imply transitivity and the possibility of short-selling.
In Section \ref{No-arbitrage prices of deterministic cash flows}, using the
results of Section \ref{Some results for additive relations on linear spaces},
the equivalence of no-arbitrage and the Law of One Price is shown, as well as the linearity of
exchangeable trades under no-arbitrage. 
Furthermore, uniqueness and linearity of the arbitrage-free price functional and a
change of numeraire formula are derived. 
The main results on no-arbitrage pricing 
relative to zero-coupon bonds by Choquet representations are given in Section
\ref{No-arbitrage pricing relative to zero-coupon bonds}. It will be seen that the additional
assumptions, that have to be made to obtain the price formula, are very natural.
The next two sections then contain examples of market models where the price formula holds, 
and counter-examples of arbitrage-free market models, where the formula does not hold because
one of the earlier mentioned assumptions is not enforced. This is followed by a consideration of
forward rates and forward prices of cash flows in Section \ref{Forward rates}.
Section \ref{Norberg's theory of consistent cash flow valuation} contains the earlier mentioned
discussion and comparison of the results in Norberg (1990). This is followed by the derivation
of valuation formulae for combined markets and/or forward FX or commodities markets in 
Section \ref{Forward markets}. A final section contains a short conclusion.


\section{Some results for additive relations on linear spaces}

\label{Some results for additive relations on linear spaces}

Consider a real vector space $\mathcal{V}$, where the null element is denoted by $o$.

\begin{assumption}
\label{assu:relation}
For the real vector space $\mathcal{V}$, there exists a relation 
$(\mathcal{V}, \sim)\subset \mathcal{V}\times\mathcal{V} $, such that
for any $\gamma_1, {\ldots},\gamma_4\in\mathcal{V}$ one has
\begin{enumerate}
\i reflexivity: $\gamma_1 \sim \gamma_1$.
\i symmetry: $\gamma_1 \sim \gamma_2$ implies $\gamma_2 \sim \gamma_1$.
\i additivity:
If $\gamma_1 \sim \gamma_2$ and $\gamma_3 \sim \gamma_4$,
then $\gamma_1 + \gamma_3 \sim \gamma_2 + \gamma_4$.
\end{enumerate}
\end{assumption}

\begin{lemma}[Invariance w.r.t.~sign]
\label{lemma:inversionc}
Under Assumption \ref{assu:relation},
if $\gamma_1,\gamma_2\in \mathcal{V}$ and $\gamma_1 \sim \gamma_2$, then
$o \sim \gamma_2 - \gamma_1 \sim \gamma_1 - \gamma_2$, and 
$- \gamma_1 \sim -\gamma_2$.
\end{lemma}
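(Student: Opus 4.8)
The plan is to derive all four assertions directly from the three axioms of Assumption \ref{assu:relation}, using reflexivity only to introduce the auxiliary elements $-\gamma_1$, $-\gamma_2$, $o$, and combining pairs of already-established relations by additivity. The one point that needs care is that transitivity is \emph{not} available at this stage (it is obtained only later), so whenever two relations share a common member and I want to ``chain'' them, I must do so by adding them via the additivity axiom (using a copy of $o$ as a neutral summand), rather than by invoking transitivity.

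First I would establish $o \sim \gamma_2 - \gamma_1$. Reflexivity gives $-\gamma_1 \sim -\gamma_1$; applying additivity to this and the hypothesis $\gamma_1 \sim \gamma_2$ yields $\gamma_1 + (-\gamma_1) \sim \gamma_2 + (-\gamma_1)$, that is, $o \sim \gamma_2 - \gamma_1$. In the same way, from $\gamma_1 \sim \gamma_2$ and symmetry I get $\gamma_2 \sim \gamma_1$, and additivity with $-\gamma_2 \sim -\gamma_2$ gives $o \sim \gamma_1 - \gamma_2$.

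Next, to get $\gamma_2 - \gamma_1 \sim \gamma_1 - \gamma_2$ without transitivity, I would apply symmetry to $o \sim \gamma_2 - \gamma_1$ to obtain $\gamma_2 - \gamma_1 \sim o$, and then use additivity on this together with $o \sim \gamma_1 - \gamma_2$: $(\gamma_2 - \gamma_1) + o \sim o + (\gamma_1 - \gamma_2)$, which simplifies to $\gamma_2 - \gamma_1 \sim \gamma_1 - \gamma_2$. Combined with $o \sim \gamma_2 - \gamma_1$ this is exactly the asserted chain $o \sim \gamma_2 - \gamma_1 \sim \gamma_1 - \gamma_2$.

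Finally, for $-\gamma_1 \sim -\gamma_2$ I would combine $o \sim \gamma_2 - \gamma_1$ with $-\gamma_2 \sim -\gamma_2$ (reflexivity) via additivity to get $-\gamma_2 \sim \gamma_2 - \gamma_1 - \gamma_2 = -\gamma_1$, and then apply symmetry. I do not expect any genuine obstacle here: the argument is a short bookkeeping exercise, and the only thing demanding attention is the substitution of ``additivity with a neutral $o$'' for the not-yet-available transitivity whenever relations are linked through a shared term.
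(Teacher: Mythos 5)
Your proposal is correct and follows essentially the same route as the paper: the paper's (terse) proof likewise obtains all claims from $\gamma_1 \sim \gamma_2$ by consecutively adding the reflexive relations $-\gamma_1 \sim -\gamma_1$ and $-\gamma_2 \sim -\gamma_2$ via additivity and using symmetry, exactly as you do. Your explicit care to replace the not-yet-available transitivity by adding a relation with $o$ as a summand is a sound (and appropriate) spelling-out of the same argument.
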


\begin{proof}
From $\gamma_1 \sim \gamma_2$ and symmetry by consecutively adding $-\gamma_1 \sim -\gamma_1$
and $-\gamma_2 \sim -\gamma_2$.
\end{proof}

\begin{lemma}[Transitivity, equivalence relation]
\label{lemma:equivalencec}
Under Assumption \ref{assu:relation},
the relation $\sim$ is an equivalence relation on $\mathcal{V}$, since 
for any $\gamma_1, \gamma_2, \gamma_3\in\mathcal{V}$, one has that
if $\gamma_1 \sim \gamma_2$ and $\gamma_2 \sim \gamma_3$,
then $\gamma_1 \sim \gamma_3$.
\end{lemma}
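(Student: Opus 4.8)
The plan is to note first that reflexivity and symmetry are already built into Assumption~\ref{assu:relation}, so the only genuinely new content of the lemma is transitivity; once transitivity is established, $\sim$ is by definition an equivalence relation and nothing further is needed. To prove transitivity I would exploit two facts: additivity (part~3 of Assumption~\ref{assu:relation}) lets us add two valid relations, and Lemma~\ref{lemma:inversionc} lets us rewrite any valid relation as one whose left-hand side is the null element $o$. Adding such an ``$o \sim \cdot$'' relation to another relation then effectively subtracts a term on the right without disturbing the left-hand side, which is exactly the manipulation transitivity requires.

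Concretely, assume $\gamma_1 \sim \gamma_2$ and $\gamma_2 \sim \gamma_3$. Applying Lemma~\ref{lemma:inversionc} to the second relation yields $o \sim \gamma_3 - \gamma_2$. Now apply additivity to the pair $\gamma_1 \sim \gamma_2$ and $o \sim \gamma_3 - \gamma_2$, obtaining $\gamma_1 + o \sim \gamma_2 + (\gamma_3 - \gamma_2)$. Invoking the vector space axioms — $\gamma_1 + o = \gamma_1$ on the left, and associativity together with $-\gamma_2 + \gamma_2 = o$ on the right — this simplifies to $\gamma_1 \sim \gamma_3$, which is the assertion.

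I do not expect any real obstacle; the proof is a two-line consequence of additivity and Lemma~\ref{lemma:inversionc}, and the only thing to watch is the bookkeeping with the vector space operations. There is a symmetric alternative: one could instead apply Lemma~\ref{lemma:inversionc} to $\gamma_1 \sim \gamma_2$ to get $o \sim \gamma_1 - \gamma_2$, add $\gamma_2 \sim \gamma_3$ via additivity to obtain $(\gamma_1 - \gamma_2) + \gamma_2 \sim o + \gamma_3$, and reduce again to $\gamma_1 \sim \gamma_3$. Either route works, and the conclusion that $\sim$ is an equivalence relation then follows immediately by combining this with the reflexivity and symmetry already postulated.
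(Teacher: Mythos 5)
Your proof is correct and follows essentially the same route as the paper: the paper adds the relations $\gamma_1 \sim \gamma_2$, $-\gamma_2 \sim -\gamma_2$, and $\gamma_2 \sim \gamma_3$ directly (invoking Lemma \ref{lemma:inversionc} for the middle one), which is just a compressed version of your step of first producing $o \sim \gamma_3 - \gamma_2$ and then adding. No gap; the bookkeeping is exactly as you describe.
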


\begin{proof}
By Lemma \ref{lemma:inversionc}, adding
$\gamma_1 \sim \gamma_2$,  $- \gamma_2 \sim -\gamma_2$, and $\gamma_2 \sim \gamma_3$.
\end{proof}

\begin{assumption}
\label{assu:cone}
The real vector space $\mathcal{V}$ contains a positive cone
$\mathcal{V}^+\subset\mathcal{V}$ such that
\begin{enumerate}
\item
if $\alpha, \beta\in\mathcal{V}^+$ and $a, b \geq 0$, then $a\alpha+b\beta\in\mathcal{V}^+$.
\item
if $\gamma\in\mathcal{V}$, then there exist $\gamma^+, \gamma^-\in\mathcal{V}^+$
such that $\gamma =\gamma^+ - \gamma^-$.
\end{enumerate}
\end{assumption}

\begin{definition}
Under Assumption \ref{assu:relation}, let $o\neq\gamma_0\in\mathcal{V}$.
If for some $\gamma\in\mathcal{V}$ and $b\in\R$
\begin{equation}
\label{gamma sim bdelta_0c}
\gamma \sim b\gamma_0 ,
\end{equation}
then $b$ is called the $\gamma_0$-value of $\gamma$.
\end{definition}

\begin{assumption}
\label{assu:gamma-value}
Under Assumption \ref{assu:relation} and \ref{assu:cone},
there exists at least one $\gamma_0\in\mathcal{V}^+\setminus\{o\}$
such that there exists at least one strictly positive $\gamma_0$-value
for any $\gamma\in\mathcal{V}^+\setminus\{o\}$.
\end{assumption}

\begin{lemma}[Existence of a $\gamma_0$-value] 
\label{price existencec}
Under Assumption \ref{assu:relation} and \ref{assu:cone}, and if
$\gamma_0\in\mathcal{V}^+\setminus\{o\}$ is as in Assumption \ref{assu:gamma-value}, then
any $\gamma\in\mathcal{V}$ has at least one $\gamma_0$-value, and any
$\gamma \in \mathcal{V}^+\setminus\{o\}$ has at least one strictly positive $\gamma_0$-value.
\end{lemma}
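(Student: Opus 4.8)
The plan is to reduce everything to Assumption \ref{assu:gamma-value} by means of the cone decomposition supplied by Assumption \ref{assu:cone} together with the sign-invariance of $\sim$ from Lemma \ref{lemma:inversionc}. The second assertion — that every $\gamma\in\mathcal{V}^+\setminus\{o\}$ admits a strictly positive $\gamma_0$-value — is simply Assumption \ref{assu:gamma-value} restated for the particular $\gamma_0$ singled out there, so beyond noting this there is nothing to prove.

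For the first assertion I would take an arbitrary $\gamma\in\mathcal{V}$ and apply Assumption \ref{assu:cone}(2) to write $\gamma=\gamma^+-\gamma^-$ with $\gamma^+,\gamma^-\in\mathcal{V}^+$. The next step is to observe that each of $\gamma^+$ and $\gamma^-$ possesses some (not necessarily strictly positive) $\gamma_0$-value: if $\gamma^+\neq o$ this is exactly Assumption \ref{assu:gamma-value}, which gives $b^+>0$ with $\gamma^+\sim b^+\gamma_0$; and if $\gamma^+=o$ then $b^+=0$ works, since $0\cdot\gamma_0=o$ and $o\sim o$ by reflexivity. The same reasoning applied to $\gamma^-$ produces $b^-$ with $\gamma^-\sim b^-\gamma_0$.

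It then remains to recombine these relations. By Lemma \ref{lemma:inversionc} applied to $\gamma^-\sim b^-\gamma_0$ we obtain $-\gamma^-\sim -b^-\gamma_0$, and additivity (item 3 of Assumption \ref{assu:relation}) applied to $\gamma^+\sim b^+\gamma_0$ and $-\gamma^-\sim -b^-\gamma_0$ yields $\gamma^+-\gamma^-\sim (b^+-b^-)\gamma_0$, that is $\gamma\sim(b^+-b^-)\gamma_0$, so $b^+-b^-$ is a $\gamma_0$-value of $\gamma$.

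There is no genuine obstacle here; the only points needing a moment of care are the bookkeeping of the degenerate cases $\gamma^+=o$ or $\gamma^-=o$, and the fact that the decomposition $\gamma=\gamma^+-\gamma^-$ is not unique, so a priori different decompositions could yield different values $b^+-b^-$. Since only the \emph{existence} of a $\gamma_0$-value is claimed, this causes no problem; uniqueness of the value is a separate issue that will later be tied to no-arbitrage.
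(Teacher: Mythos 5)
Your proposal is correct and follows essentially the same route as the paper: decompose $\gamma=\gamma^+-\gamma^-$ via Assumption \ref{assu:cone}, get $\gamma_0$-values $b^+,b^-$ from Assumption \ref{assu:gamma-value}, and combine with Lemma \ref{lemma:inversionc} and additivity to obtain $\gamma\sim(b^+-b^-)\gamma_0$. Your explicit treatment of the degenerate cases $\gamma^\pm=o$ (value $0$ via reflexivity) is a small point the paper handles implicitly by allowing $b^\pm\in\R_0^+$, but the argument is identical in substance.
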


\begin{proof}
The second statement holds by Assumption \ref{assu:gamma-value}.
For $\gamma\in\mathcal{V}$, one has 
$\gamma = \gamma^+ - \gamma^-$ for some $\gamma^+,\gamma^- \in \mathcal{V}^+$
by Property 2 of  Assumption \ref{assu:cone}. 
Because of the second statement, there exist 
$b^+,b^-\in\R^+_0$ such that
$b^+\gamma_0\sim \gamma^+$ and $b^-\gamma_0\sim \gamma^-$. By additivity and by
Lemma \ref{lemma:inversionc}, 
$\gamma = \gamma^+ - \gamma^-
\sim
b^+\gamma_0 - b^-\gamma_0
= (b^+-b^-)\gamma_0$. Hence, $(b^+-b^-)$ is a  $\gamma_0$-value of $\gamma$.
\end{proof}

\begin{definition}
\label{def:arbitragec}
Under Assumption \ref{assu:relation} and \ref{assu:cone},
$(\mathcal{V}, \sim)$ has the non-triviality property {\bf NT} with respect to $\mathcal{V}^+$ 
if there exists no $\gamma\in\mathcal{V}^+\setminus\{o\}$ such that
\begin{equation}
\label{eq:arbitragec}
o \sim \gamma .
\end{equation}
\end{definition}

\begin{proposition}[{\bf NT} $\Leftrightarrow$ uniqueness of $\gamma_0$-values]
\label{LOP_theoc}
Under Assumption \ref{assu:relation} and \ref{assu:cone}, and if
$\gamma_0\in\mathcal{V}^+\setminus\{o\}$ is as in Assumption \ref{assu:gamma-value}, then
{\bf NT} holds for $(\mathcal{V}, \sim)$ if and only if the $\gamma_0$-value of any $\gamma\in\mathcal{V}$
is uniquely determined. Under {\bf NT}, there therefore exists a for this specific $\gamma_0$
uniquely determined functional 
\begin{eqnarray}
\label{pi_1c}
\pi_{\gamma_0}:  \mathcal{V} & \rightarrow & \R  \index{$\pi$}
\end{eqnarray}
given by
\begin{equation}
\label{gammasimpigammadelta_0c}
\gamma \sim \pi_{\gamma_0}(\gamma)\gamma_0 .
\end{equation}
\end{proposition}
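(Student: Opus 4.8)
The plan is to prove the biconditional by establishing each implication directly, and then to read off the existence and uniqueness of $\pi_{\gamma_0}$ as an immediate by-product. Throughout I would use that, by Lemma \ref{price existencec}, every $\gamma\in\mathcal{V}$ has \emph{at least one} $\gamma_0$-value, so that in each direction only a (non)uniqueness statement remains to be argued.

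First I would assume \textbf{NT} and show that $\gamma_0$-values are unique. Suppose $\gamma\sim b_1\gamma_0$ and $\gamma\sim b_2\gamma_0$. Symmetry together with transitivity (Lemma \ref{lemma:equivalencec}) gives $b_1\gamma_0\sim b_2\gamma_0$, and then Lemma \ref{lemma:inversionc} yields $o\sim(b_2-b_1)\gamma_0$ as well as $o\sim(b_1-b_2)\gamma_0$. If $b_1\neq b_2$, put $c:=|b_2-b_1|>0$; since $\gamma_0\neq o$ in the vector space $\mathcal{V}$ we have $c\gamma_0\neq o$, while $c\gamma_0\in\mathcal{V}^+$ by Property 1 of Assumption \ref{assu:cone}. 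One of the two relations above reads $o\sim c\gamma_0$ with $c\gamma_0\in\mathcal{V}^+\setminus\{o\}$, contradicting \textbf{NT}. Hence $b_1=b_2$.

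Next I would assume uniqueness of $\gamma_0$-values and derive \textbf{NT}, arguing by contradiction. Suppose $o\sim\gamma$ for some $\gamma\in\mathcal{V}^+\setminus\{o\}$. Assumption \ref{assu:gamma-value} (for the $\gamma_0$ fixed in the statement) supplies a strictly positive $\gamma_0$-value $b>0$ of $\gamma$, i.e.\ $\gamma\sim b\gamma_0$; transitivity then gives $o\sim b\gamma_0$. Since $o=0\cdot\gamma_0$, reflexivity also gives $o\sim 0\cdot\gamma_0$, so $o$ would admit the two distinct $\gamma_0$-values $0$ and $b$, contradicting uniqueness. Thus \textbf{NT} holds. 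Finally, under \textbf{NT} the assignment $\gamma\mapsto\pi_{\gamma_0}(\gamma)$, with $\pi_{\gamma_0}(\gamma)$ the now-unique $\gamma_0$-value of $\gamma$, is a well-defined map $\mathcal{V}\to\R$, total by Lemma \ref{price existencec}, and it is the only functional satisfying \eqref{gammasimpigammadelta_0c}, precisely because that relation pins down the scalar on its right-hand side uniquely for each $\gamma$.

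I do not anticipate a genuine obstacle in this argument. The one place that calls for a little care is the first implication: one must combine the sign-invariance of Lemma \ref{lemma:inversionc} (to cover the case $b_2-b_1<0$) with the elementary observation that a strictly positive multiple of a nonzero element of the cone is again a nonzero element of $\mathcal{V}^+$, in order to actually exhibit the element of $\mathcal{V}^+\setminus\{o\}$ whose exchangeability with $o$ is forbidden by \textbf{NT}.
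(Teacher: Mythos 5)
Your proof is correct and follows essentially the same route as the paper: under \textbf{NT}, two distinct $\gamma_0$-values produce $o\sim c\gamma_0$ with $c\gamma_0\in\mathcal{V}^+\setminus\{o\}$ via Lemma \ref{lemma:inversionc}, and conversely a violation of \textbf{NT} combined with Assumption \ref{assu:gamma-value} exhibits two distinct $\gamma_0$-values. The only cosmetic differences are that you pass through transitivity ($b_1\gamma_0\sim b_2\gamma_0$) before applying sign invariance, and you attach the two competing values to $o$ rather than to $\gamma$ as the paper does; neither affects the argument.
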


\begin{proof}
``$\Rightarrow$":
Assume that {\bf NT} holds.
By Lemma \ref{price existence} every $\gamma\in\mathcal{V}$ has a $\gamma_0$-value. Assume now
that there are two, i.e.~for $a < b$ one has $\gamma \sim a\gamma_0$ and $\gamma \sim b\gamma_0$.
Lemma \ref{lemma:inversionc} then implies $\gamma - \gamma = o \sim (b-a)\gamma_0 \in \mathcal{V}^+\setminus\{o\}$,
which violates {\bf NT}.\\
``$\Leftarrow$": Assume that {\bf NT} is violated. Then $\gamma\sim o =0\gamma_0$
for some $\gamma\in\mathcal{V}^+\setminus\{o\}$, but also $\gamma \sim b\gamma_0$
for some $b>0$ by Assumption \ref{assu:gamma-value}. Thus, $\gamma$ has the $\gamma_0$-value $0$ and
$b>0$ at the same time, and the $\gamma_0$-value is not unique. Therefore, uniqueness of the $\gamma_0$-value implies {\bf NT}.
\end{proof}

Obviously, $\pi_{\gamma_0}$ also depends on $\sim$. In that sense,
\begin{equation}
\label{pipisimc}
\pi_{\gamma_0} = \pi_{\gamma_0, \sim} .\index{$\pi_{\gamma_0}_{\sim, \gamma_0}$}
\end{equation}

\begin{definition}[{\bf NT} value w.r.t.~$\gamma_0$]
\label{pi_defc}
$\pi_{\gamma_0}(\gamma)$ of Proposition \ref{LOP_theoc} is called the {\bf NT} value of the vector 
$\gamma\in\mathcal{V}$ with respect to $\gamma_0$. 
\end{definition}

\begin{proposition}[Linearity of $\sim$ under {\bf NT}] 
\label{theo:linearityc}
Under Assumption \ref{assu:relation}, \ref{assu:cone}, and \ref{assu:gamma-value},
and if {\bf NT} holds on $(\mathcal{V}, \sim)$,
then $\gamma_1, {\ldots},\gamma_4\in\mathcal{V}$,
$\gamma_1 \sim \gamma_2$, $\gamma_3 \sim \gamma_4$, and $a,b\in\R$ implies that
$a\gamma_1 + b\gamma_3 \sim a\gamma_2 + b\gamma_4$.
\end{proposition}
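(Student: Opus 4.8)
The plan is to funnel everything through the functional $\pi_{\gamma_0}$ of Proposition~\ref{LOP_theoc}: I would show that, under the stated hypotheses, $\pi_{\gamma_0}$ is $\R$-linear on $\mathcal{V}$, and then read off the linearity of $\sim$ from the fact that, by uniqueness of $\gamma_0$-values together with transitivity (Lemma~\ref{lemma:equivalencec}), one has $\gamma\sim\gamma'$ if and only if $\pi_{\gamma_0}(\gamma)=\pi_{\gamma_0}(\gamma')$.

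\emph{Step 1 (additivity, $\Q$-homogeneity, positivity of $\pi_{\gamma_0}$).} Applying additivity of $\sim$ to $\gamma\sim\pi_{\gamma_0}(\gamma)\gamma_0$ and $\gamma'\sim\pi_{\gamma_0}(\gamma')\gamma_0$ yields $\gamma+\gamma'\sim(\pi_{\gamma_0}(\gamma)+\pi_{\gamma_0}(\gamma'))\gamma_0$, so uniqueness of the $\gamma_0$-value forces $\pi_{\gamma_0}(\gamma+\gamma')=\pi_{\gamma_0}(\gamma)+\pi_{\gamma_0}(\gamma')$. Iterating gives $\pi_{\gamma_0}(n\gamma)=n\pi_{\gamma_0}(\gamma)$ for $n\in\N$; Lemma~\ref{lemma:inversionc} with uniqueness gives $\pi_{\gamma_0}(-\gamma)=-\pi_{\gamma_0}(\gamma)$; and comparing $\pi_{\gamma_0}(n(q\gamma))$ with $\pi_{\gamma_0}(m\gamma)$ for $q=m/n$ extends this to $\pi_{\gamma_0}(q\gamma)=q\pi_{\gamma_0}(\gamma)$ for every $q\in\Q$. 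Reflexivity and uniqueness give $\pi_{\gamma_0}(o)=0$, while Lemma~\ref{price existencec} and uniqueness give $\pi_{\gamma_0}(\gamma)>0$ whenever $\gamma\in\mathcal{V}^+\setminus\{o\}$; hence $\pi_{\gamma_0}\ge 0$ on $\mathcal{V}^+$, and together with additivity this makes $\pi_{\gamma_0}$ monotone for the order induced by $\mathcal{V}^+$ (i.e.\ $\gamma'-\gamma\in\mathcal{V}^+$ implies $\pi_{\gamma_0}(\gamma')\ge\pi_{\gamma_0}(\gamma)$).

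\emph{Step 2 (from $\Q$- to $\R$-homogeneity --- the crux).} This is the only step that is not pure bookkeeping. Fix $\gamma\in\mathcal{V}^+$ and a real $a\ge 0$, and take rationals $q\le a\le r$. By Assumption~\ref{assu:cone} both $(a-q)\gamma$ and $(r-a)\gamma$ lie in $\mathcal{V}^+$, so the monotonicity from Step 1 gives $q\pi_{\gamma_0}(\gamma)=\pi_{\gamma_0}(q\gamma)\le\pi_{\gamma_0}(a\gamma)\le\pi_{\gamma_0}(r\gamma)=r\pi_{\gamma_0}(\gamma)$. Letting $q\uparrow a$ and $r\downarrow a$ forces $\pi_{\gamma_0}(a\gamma)=a\pi_{\gamma_0}(\gamma)$. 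For an arbitrary $\gamma\in\mathcal{V}$ write $\gamma=\gamma^+-\gamma^-$ with $\gamma^\pm\in\mathcal{V}^+$ (Assumption~\ref{assu:cone}); additivity and the cone case then give $\pi_{\gamma_0}(a\gamma)=a\pi_{\gamma_0}(\gamma)$ for $a\ge 0$, and the case $a<0$ follows by applying $\pi_{\gamma_0}(-\zeta)=-\pi_{\gamma_0}(\zeta)$ to $\zeta=(-a)\gamma$. Thus $\pi_{\gamma_0}$ is $\R$-linear on $\mathcal{V}$.

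\emph{Step 3 (conclusion).} If $\gamma_1\sim\gamma_2$, uniqueness of $\gamma_0$-values gives $\pi_{\gamma_0}(\gamma_1)=\pi_{\gamma_0}(\gamma_2)$, whence $\pi_{\gamma_0}(a\gamma_1)=a\pi_{\gamma_0}(\gamma_1)=a\pi_{\gamma_0}(\gamma_2)=\pi_{\gamma_0}(a\gamma_2)$, i.e.\ $a\gamma_1\sim a\gamma_2$; similarly $b\gamma_3\sim b\gamma_4$. Adding these via additivity of $\sim$ yields $a\gamma_1+b\gamma_3\sim a\gamma_2+b\gamma_4$. I expect the rational-to-real passage in Step 2 to be the only genuine obstacle; the squeeze argument shows that the cone axioms of Assumption~\ref{assu:cone}, through the monotonicity of $\pi_{\gamma_0}$, are precisely what closes it without imposing any topology on $\mathcal{V}$.
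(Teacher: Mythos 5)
Your proof is correct, and there is no circularity in routing everything through $\pi_{\gamma_0}$: Proposition \ref{LOP_theoc} is established before, and independently of, the present proposition. Your organization, however, differs from the paper's. The paper stays at the level of the relation throughout: it first proves $\alpha/n\sim\beta/n$ for $\alpha\sim\beta$ in $\mathcal{V}^+$ by contradiction with {\bf NT} (otherwise $o\sim nb\gamma_0$ for some $b>0$), hence $q\alpha\sim q\beta$ for rational $q$; it then passes to real $r$ by sandwiching the $\gamma_0$-value of $r\alpha$ between $q_1p+b_1$ and $q_2p-b_2$ for rationals $q_1<r<q_2$ and invoking uniqueness; and it handles signed vectors by rearranging $\alpha^+-\alpha^-\sim\beta^+-\beta^-$ into the cone statement $\alpha^++\beta^-\sim\beta^++\alpha^-$ before scaling. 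You instead prove $\R$-linearity of the functional $\pi_{\gamma_0}$ first (additivity and $\mathbb{Q}$-homogeneity from uniqueness of $\gamma_0$-values, strict positivity on $\mathcal{V}^+\setminus\{o\}$ from Lemma \ref{price existencec}, hence monotonicity, then the rational squeeze), and read off the claim from the characterization $\gamma\sim\gamma'\Leftrightarrow\pi_{\gamma_0}(\gamma)=\pi_{\gamma_0}(\gamma')$, which indeed needs the transitivity of Lemma \ref{lemma:equivalencec} that you cite. The mathematical crux --- a rational sandwich resting on positivity over the cone plus uniqueness, and the decomposition $\gamma=\gamma^+-\gamma^-$ for signed elements --- is the same in both arguments; what your packaging buys is that you establish en route most of the content of Theorem \ref{pi_linearc} (linearity of $\pi_{\gamma_0}$), effectively inverting the paper's logical order, in which that theorem is deduced from this proposition, while the paper's packaging keeps the proposition free of any statement about the functional so that Theorem \ref{pi_linearc} can be presented afterwards as the clean equivalence.
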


\begin{proof}
Consider $\alpha, \beta\in\mathcal{V}^+$ with $\alpha \sim \beta$, and let $n$ be a natural number
larger than zero. Obviously, $\alpha/n \sim \beta/n$ holds if one of the vectors is $o$, since
{\bf NT} then implies that the corresponding other one is $o$, too.
If $\alpha/n \sim \beta/n$ does not hold for $\alpha, \beta\in\mathcal{V}^+\setminus\{o\}$, 
then Lemma \ref{price existencec} and Lemma \ref{lemma:inversionc} 
imply without loss of generality that 
$\alpha/n - \beta/n \sim b\gamma_0$ for some $b>0$
and any $\gamma_0\in\mathcal{V}^+\setminus\{o\}$ as in Assumption \ref{assu:gamma-value}. 
Adding $\beta/n \sim \beta/n$ and using
additivity yields $\alpha \sim \beta + nb\gamma_0$, to which, by Lemma \ref{lemma:inversionc}, 
$-\alpha \sim -\beta$ can be added to obtain $o \sim nb\gamma_0$.
Since $nb\gamma_0\in\mathcal{V}^+\setminus\{o\}$ by Property 1 of Assumption \ref{assu:cone},
this is a violation of {\bf NT}.
By contradiction, $\alpha/n \sim \beta/n$, and therefore, by additivity and by Lemma \ref{lemma:inversionc}, 
$q\alpha \sim q\beta$ for any rational $q\in\mathbb{Q}$.\\
Consider again $\alpha, \beta\in\mathcal{V}^+$ with $\alpha \sim \beta$, and let $r\in\mathbb{R}_0^+$. 
As earlier, $r\alpha \sim r\beta$ holds if one of the vectors is $o$, or if $r=0$.
Therefore, assume now $\alpha, \beta\in\mathcal{V}^+\setminus\{o\}$ and $r>0$.
By Proposition \ref{LOP_theoc} and Assumption \ref{assu:gamma-value}, 
$\alpha \sim \beta \sim p\gamma_0$ for some uniquely determined $p > 0$.
Furthermore, if $q_1 < r < q_2$ for $q_1, q_2\in\mathbb{Q}$, then 
$(r-q_1)\alpha \sim b_1\gamma_0$ and $(r-q_2)\alpha \sim -b_2\gamma_0$ for some $b_1, b_2 >0$
by Property 1 of Assumption \ref{assu:cone}, by Assumption \ref{assu:gamma-value}, 
and by Lemma \ref{lemma:inversionc}.
Since, by the first part, $q_1\alpha \sim q_1p\gamma_0$
and $q_2\alpha \sim q_2p\gamma_0$, one obtains $r\alpha \sim (q_1p+b_1)\gamma_0$  and
$r\alpha \sim (q_2p-b_2)\gamma_0$. 
Because of the uniqueness of the $\gamma_0$-values,
$q_1p < q_1p+b_1 = q_2p-b_2 < q_2p$.
However, since this inequality holds for any such $q_1, q_2$, it is established that
$rp = q_1p+b_1 = q_2p-b_2$ and $r\alpha \sim rp \gamma_0$.
Similarly, $r\beta \sim rp \gamma_0$, and therefore $r\alpha \sim r\beta$ holds for $r\in\mathbb{R}$.\\
Consider now $\alpha, \beta\in\mathcal{V}$ with $\alpha \sim \beta$, where one has 
the decomposition $\alpha = \alpha^+ - \alpha^-$ and
$\beta = \beta^+ - \beta^-$ for $\alpha^+,\alpha^-, \beta^+, \beta^- \in \mathcal{V}^+$
by Property 2 of Assumption \ref{assu:cone}.
Observe that $\alpha^- \sim \alpha^-$ and $\beta^- \sim \beta^-$ can be added to
\begin{equation}
\alpha^+ - \alpha^- \sim \beta^+ - \beta^-
\end{equation}
to obtain
\begin{equation}
\alpha^+ + \beta^- \sim \beta^+ + \alpha^- .
\end{equation}
For any $r\in\mathbb{R}$ it now holds that
\begin{eqnarray}
r(\alpha^+ + \beta^-) & \sim & r(\beta^+ + \alpha^-) \\
-r\alpha^- & \sim & -r\alpha^- \\
-r\beta^- & \sim & -r\beta^- ,
\end{eqnarray}
which adds up to $r\alpha \sim r\beta$. Together with additivity, this completes the proof. 
\end{proof}

Note that the trivial relation, where all vectors are equivalent to one another, is an example,
where linearity holds without {\bf NT}. So, {\bf NT} is sufficient for linearity of $\sim$,
but not necessary. 

\begin{theorem}[Linear characterization of {\bf NT}]
\label{pi_linearc}
Suppose Assumption \ref{assu:cone} holds.
Assumption \ref{assu:relation}, Assumption \ref{assu:gamma-value}, and {\bf NT}
then hold on $(\mathcal{V}, \sim)$ if and only 
if there exists a linear functional $\pi': \mathcal{V} \rightarrow \R$ with 
\begin{equation}
\pi'(\gamma)>0 \quad \text{ for all }\gamma\in\mathcal{V}^+\setminus\{o\} , 
\end{equation}
where for all $\gamma_1, \gamma_2 \in\mathcal{V}$
\begin{equation}
\label{simm_2c}
\gamma_1\sim\gamma_2 \quad \Leftrightarrow \quad \pi'(\gamma_1)=\pi'(\gamma_2) .
\end{equation}
Furthermore, if one of the two equivalent statement holds, then the property described in
Assumption \ref{assu:gamma-value} holds for any $\gamma_0\in\mathcal{V}^+\setminus\{o\}$.
Moreover, for any such $\pi'$ and any $\gamma_0\in\mathcal{V}^+\setminus\{o\}$, it holds that
\begin{equation}
\label{eq:pipigamma0}
\frac{\pi'}{\pi'(\gamma_0)}
= 
\pi_{\gamma_0} ,
\end{equation}
i.e.~\eqref{eq:pipigamma0} it is the uniquely determined {\bf NT} value with respect to $\gamma_0$.
\end{theorem}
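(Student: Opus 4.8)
The plan is to prove the stated equivalence by verifying each direction separately and then to extract the two supplementary claims from the constructions used. Throughout, Assumption \ref{assu:cone} is in force.

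For the direction assuming that a linear $\pi'$ with the stated properties exists, I would verify Assumption \ref{assu:relation}, Assumption \ref{assu:gamma-value}, and \textbf{NT} directly from the characterization \eqref{simm_2c}. Reflexivity and symmetry of $\sim$ are immediate, and additivity follows since $\pi'(\gamma_1)=\pi'(\gamma_2)$ and $\pi'(\gamma_3)=\pi'(\gamma_4)$ give $\pi'(\gamma_1+\gamma_3)=\pi'(\gamma_1)+\pi'(\gamma_3)=\pi'(\gamma_2)+\pi'(\gamma_4)=\pi'(\gamma_2+\gamma_4)$ by linearity. For the property of Assumption \ref{assu:gamma-value} — in fact in the stronger ``for any $\gamma_0$'' form asserted in the Furthermore — given $\gamma_0\in\mathcal{V}^+\setminus\{o\}$ and $\gamma\in\mathcal{V}^+\setminus\{o\}$, the number $b:=\pi'(\gamma)/\pi'(\gamma_0)$ is strictly positive and satisfies $\pi'(\gamma)=\pi'(b\gamma_0)$, hence $\gamma\sim b\gamma_0$ by \eqref{simm_2c}. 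Finally \textbf{NT} holds because $o\sim\gamma$ for $\gamma\in\mathcal{V}^+\setminus\{o\}$ would force $\pi'(\gamma)=\pi'(o)=0$, contradicting strict positivity of $\pi'$ on $\mathcal{V}^+\setminus\{o\}$.

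For the converse, assume Assumption \ref{assu:relation}, Assumption \ref{assu:gamma-value}, and \textbf{NT}. Fix the $\gamma_0$ of Assumption \ref{assu:gamma-value} and set $\pi':=\pi_{\gamma_0}$, well-defined by Proposition \ref{LOP_theoc}. Strict positivity of $\pi'$ on $\mathcal{V}^+\setminus\{o\}$ is precisely the statement that the unique $\gamma_0$-value of such a $\gamma$ is strictly positive, i.e.~Assumption \ref{assu:gamma-value} together with the uniqueness from Proposition \ref{LOP_theoc}. The equivalence \eqref{simm_2c} follows from transitivity (Lemma \ref{lemma:equivalencec}) and uniqueness: if $\gamma_1\sim\gamma_2$ then $\gamma_1\sim\pi_{\gamma_0}(\gamma_2)\gamma_0$, so $\pi'(\gamma_1)=\pi'(\gamma_2)$; conversely $\pi'(\gamma_1)=\pi'(\gamma_2)$ gives $\gamma_1\sim\pi'(\gamma_1)\gamma_0=\pi'(\gamma_2)\gamma_0\sim\gamma_2$. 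The one point needing the earlier machinery is linearity of $\pi'$: from $\gamma_i\sim\pi'(\gamma_i)\gamma_0$ and Proposition \ref{theo:linearityc} one gets $a\gamma_1+b\gamma_2\sim\bigl(a\pi'(\gamma_1)+b\pi'(\gamma_2)\bigr)\gamma_0$ for $a,b\in\R$, and uniqueness of the $\gamma_0$-value then identifies $\pi'(a\gamma_1+b\gamma_2)$ with $a\pi'(\gamma_1)+b\pi'(\gamma_2)$.

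It remains to read off the two supplementary assertions. The Furthermore — that the property of Assumption \ref{assu:gamma-value} holds for every $\gamma_0\in\mathcal{V}^+\setminus\{o\}$ — is exactly what was established in the $\pi'$-direction above, and it holds under either of the two equivalent statements. Consequently every $\gamma_0\in\mathcal{V}^+\setminus\{o\}$ is ``as in Assumption \ref{assu:gamma-value},'' so Proposition \ref{LOP_theoc} yields uniqueness of $\gamma_0$-values for every such $\gamma_0$. For the Moreover, fix any admissible $\pi'$ and any $\gamma_0\in\mathcal{V}^+\setminus\{o\}$: the vector $\frac{\pi'(\gamma)}{\pi'(\gamma_0)}\gamma_0$ has the same $\pi'$-value as $\gamma$, hence by \eqref{simm_2c} is $\sim$-equivalent to $\gamma$, so $\pi'(\gamma)/\pi'(\gamma_0)$ is a $\gamma_0$-value of $\gamma$ and by the just-noted uniqueness equals $\pi_{\gamma_0}(\gamma)$, giving \eqref{eq:pipigamma0}. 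I do not expect a genuine obstacle here: the argument is an assembly of the already-proved uniqueness, transitivity, and linearity results, with the only non-trivial step being the linearity of $\pi'$, which reduces immediately to Proposition \ref{theo:linearityc}.
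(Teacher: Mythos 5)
Your proposal is correct and follows essentially the same route as the paper: both directions are verified from \eqref{simm_2c} and Proposition \ref{LOP_theoc}, linearity of $\pi'$ in the converse direction is reduced to Proposition \ref{theo:linearityc}, and the two supplementary claims are obtained exactly as in the paper by noting that $\gamma\sim(\pi'(\gamma)/\pi'(\gamma_0))\gamma_0$ for every $\gamma_0\in\mathcal{V}^+\setminus\{o\}$ and invoking uniqueness of $\gamma_0$-values.
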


\begin{proof}
The equivalence is proven first.\\
``$\Rightarrow$": 
From Proposition \ref{LOP_theoc} it is known that {\bf NT} implies that a unique $\gamma_0$-value functional 
$\pi_{\gamma_0}$ exists for any $\gamma_0\in\mathcal{V}^+\setminus\{o\}$ as in 
Assumption \ref{assu:gamma-value}. 
It is shown that it has the required properties.
For the proof of linearity, let $\gamma_i \sim b_i\gamma_0$ and therefore $\pi_{\gamma_0}(\gamma_i)=b_i$
for $i=1,2$. For any $a_i\in\R$ ($i=1,2$),
\begin{equation}
a_1\gamma_1 + a_2\gamma_2 \sim a_1b_1\gamma_0 + a_2b_2\gamma_0 = (a_1\pi_{\gamma_0}(\gamma_{1})+a_2\pi_{\gamma_0}(\gamma_{2}))\gamma_0
\end{equation}
by Proposition \ref{theo:linearityc}. Hence, 
$\pi_{\gamma_0}(a_1\gamma_1 + a_2\gamma_2) = a_1\pi_{\gamma_0}(\gamma_{1})+a_2\pi_{\gamma_0}(\gamma_{2})$ by \eqref{gammasimpigammadelta_0c}.
$\pi_{\gamma_0}(\gamma)>0$ for $\gamma\in\mathcal{V}^+\setminus\{o\}$
follows from Lemma \ref{price existence}.
For \eqref{simm_2c}: ``$\Leftarrow$": $\gamma_1 \sim \pi_{\gamma_0}(\gamma_1)\gamma_0 = \pi_{\gamma_0}(\gamma_2)\gamma_0 \sim \gamma_2$.
``$\Rightarrow$": If $\gamma_1\sim\gamma_2$,
then $\pi_{\gamma_0}(\gamma_1)\gamma_0 \sim \gamma_1 \sim \gamma_2 \sim \pi_{\gamma_0}(\gamma_2)\gamma_0$, and,
 by uniqueness of $\gamma_0$-values under {\bf NT}, $\pi_{\gamma_0}(\gamma_1)=\pi_{\gamma_0}(\gamma_2)$.\\
``$\Leftarrow$":
It needs to be checked that the properties of $(\mathcal{V}, \sim)$ in Assumption \ref{assu:relation} are fulfilled. 
Since $\pi'$ is linear, the Properties 1 to 3 are given through \eqref{simm_2c}.
Observe now that $\pi'/\pi'(\gamma_0)$ is a $\gamma_0$-value for any
$\gamma_0\in\mathcal{V}^+\setminus\{o\}$ since, for $\gamma\in\mathcal{V}$,  
\begin{equation}
\pi'(\gamma)
=
\frac{\pi'(\gamma)\pi'(\gamma_0)}{\pi'(\gamma_0)}
=
\frac{\pi'(\pi'(\gamma)\gamma_0)}{\pi'(\gamma_0)}
= 
\pi'\left(\frac{\pi'(\gamma)}{\pi'(\gamma_0)}\gamma_0\right)
\end{equation}
implies 
\begin{equation}
\label{for2ndclaim}
\gamma\sim(\pi'(\gamma)/\pi'(\gamma_0))\gamma_0
\end{equation}
by \eqref{simm_2c}. The property of Assumption \ref{assu:gamma-value} follows now
as $\pi'(\gamma)>0$ for $\gamma\in\mathcal{V}^+\setminus\{o\}$.
Because of this latter fact, {\bf NT} follows since $o \sim \gamma$ for some $\gamma\in\mathcal{V}^+\setminus\{o\}$
is impossible as $\pi'(o)=\pi'(0o)=0<\pi'(\gamma)$ by linearity.\\
The second claim has already been shown since \eqref{for2ndclaim} holds
for any $\gamma_0\in\mathcal{V}^+\setminus\{o\}$ under any of the two equivalent
conditions.
Since it has been shown that $\pi'/\pi'(\gamma_0)$ is a $\gamma_0$-value for any
$\gamma_0\in\mathcal{V}^+\setminus\{o\}$,
the last statement now follows from the second claim and from Proposition \ref{LOP_theoc}.
\end{proof}

It is now clear that the set of equivalence classes for any $(\mathcal{V}, \sim)$, in which {\bf NT} holds 
under Assumption \ref{assu:relation}, \ref{assu:cone}, and \ref{assu:gamma-value}, is 
for any $\gamma_0\in\mathcal{V}^+\setminus\{o\}$ the quotient space 
\begin{equation}
\mathcal{V}/\!\!\sim \; = \mathcal{V}/\pi_{\gamma_0}^{-1}(0) .
\end{equation}

\begin{corollary}[Properties of the {\bf NT} value]
\label{cor:1c}
Under Assumption \ref{assu:relation}, \ref{assu:cone}, and \ref{assu:gamma-value},
and for any $\gamma_0\in\mathcal{V}^+\setminus\{o\}$,
if {\bf NT} holds on $(\mathcal{V}, \sim)$, then
\begin{eqnarray}
\pi_{\gamma_0}(o) & = & 0 ,\\
\gamma - \pi_{\gamma_0}(\gamma)\gamma_0 & \sim & o ,\\
\label{gamma-pi(gamma)delta0c}
\pi_{\gamma_0}(\gamma - \pi_{\gamma_0}(\gamma)\gamma_0) & = & 0 .
\end{eqnarray}
\end{corollary}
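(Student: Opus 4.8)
The plan is to obtain all three identities as immediate consequences of the defining relation \eqref{gammasimpigammadelta_0c} of the {\bf NT} value, of the sign-invariance of $\sim$ (Lemma \ref{lemma:inversionc}), and of the uniqueness of $\gamma_0$-values that {\bf NT} guarantees by Proposition \ref{LOP_theoc}.

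First I would record the normalisation $\pi_{\gamma_0}(\gamma_0)=1$: since $\gamma_0\sim 1\cdot\gamma_0$ holds trivially, uniqueness of the $\gamma_0$-value under {\bf NT} forces $\pi_{\gamma_0}(\gamma_0)=1$. By the same token, $o\sim 0\cdot\gamma_0$ by reflexivity, so uniqueness yields $\pi_{\gamma_0}(o)=0$, which is the first claim (equivalently, this also drops out of linearity of $\pi_{\gamma_0}$, Theorem \ref{pi_linearc}, via $\pi_{\gamma_0}(o)=\pi_{\gamma_0}(0\cdot o)=0$).

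For the second claim I would start from $\gamma\sim\pi_{\gamma_0}(\gamma)\gamma_0$, which is \eqref{gammasimpigammadelta_0c}, and apply Lemma \ref{lemma:inversionc} to it: this gives $o\sim\pi_{\gamma_0}(\gamma)\gamma_0-\gamma\sim\gamma-\pi_{\gamma_0}(\gamma)\gamma_0$, and symmetry of $\sim$ then turns the last relation into $\gamma-\pi_{\gamma_0}(\gamma)\gamma_0\sim o$, as required. For the third claim the shortest route is to combine the second claim with reflexivity, $\gamma-\pi_{\gamma_0}(\gamma)\gamma_0\sim o\sim 0\cdot\gamma_0$, so that the uniqueness of $\gamma_0$-values gives $\pi_{\gamma_0}(\gamma-\pi_{\gamma_0}(\gamma)\gamma_0)=0$; alternatively one may expand via linearity, $\pi_{\gamma_0}(\gamma-\pi_{\gamma_0}(\gamma)\gamma_0)=\pi_{\gamma_0}(\gamma)-\pi_{\gamma_0}(\gamma)\pi_{\gamma_0}(\gamma_0)=\pi_{\gamma_0}(\gamma)-\pi_{\gamma_0}(\gamma)=0$ using $\pi_{\gamma_0}(\gamma_0)=1$.

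I do not expect a genuine obstacle here; the only point requiring a little care is to justify each step by the uniqueness of $\gamma_0$-values under {\bf NT} (Proposition \ref{LOP_theoc}) or by Lemma \ref{lemma:inversionc}, rather than by silently presupposing linearity of $\pi_{\gamma_0}$, since that linearity was itself derived from {\bf NT} in Theorem \ref{pi_linearc} and so is available but should be cited as such.
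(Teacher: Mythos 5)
Your proof is correct, but it takes a lighter route than the paper, whose entire proof is the single line ``By linearity of $\pi_{\gamma_0}$'': there, all three identities are read off from the linearity of the {\bf NT} value established in Theorem \ref{pi_linearc} together with the defining relation \eqref{gammasimpigammadelta_0c} (e.g.\ $\pi_{\gamma_0}(\gamma-\pi_{\gamma_0}(\gamma)\gamma_0)=\pi_{\gamma_0}(\gamma)-\pi_{\gamma_0}(\gamma)\pi_{\gamma_0}(\gamma_0)=0$, whence the middle relation follows from \eqref{gammasimpigammadelta_0c}). You instead work directly from the defining relation, Lemma \ref{lemma:inversionc}, and the uniqueness of $\gamma_0$-values in Proposition \ref{LOP_theoc}, invoking linearity only as an optional alternative. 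What your route buys is economy of hypotheses: it does not pass through Theorem \ref{pi_linearc}, which itself rests on the comparatively heavy scaling argument of Proposition \ref{theo:linearityc}, so the corollary is seen to need only sign-invariance and uniqueness of values under {\bf NT}. What the paper's route buys is brevity, since once linearity is available every claim is a one-line computation. One small point of care in your second claim: passing from the chain $o\sim\pi_{\gamma_0}(\gamma)\gamma_0-\gamma\sim\gamma-\pi_{\gamma_0}(\gamma)\gamma_0$ to $\gamma-\pi_{\gamma_0}(\gamma)\gamma_0\sim o$ uses transitivity (Lemma \ref{lemma:equivalencec}) as well as symmetry; both are available, so the step is sound, but transitivity deserves an explicit citation.
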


\begin{proof}
By linearity of $\pi_{\gamma_0}$.
\end{proof}


\section{Preliminary remarks on measures and integration}

\label{Preliminary remarks on measures and integration}

Let $\mathcal{M}_b = \mathcal{M}_b(\R_0^+)$ denote the set of all bounded -- i.e.~finite -- signed 
Borel measures on the non-negative real numbers equipped with the
(trace) Borel $\sigma$-algebra, $(\R_0^+,\mathcal{B})$.
Denote with $\mathcal{M}_b^+=\mathcal{M}_b^+(\R_0^+)$ all non-negative measures in $\mathcal{M}_b(\R_0^+)$.
By the Hahn-Jordan decomposition, it is clear that $\mathcal{M}_b$ is the set of 
all finite linear combinations of probability measures on $(\R_0^+,\mathcal{B})$.
As such, $\mathcal{M}_b$ is a real vector space, and $\mathcal{M}_b^+\subset\mathcal{M}_b$ 
is a positive cone which fulfills Assumption \ref{assu:cone}.

Let $C_b = C_b(\R_0^+)$ denote the set of all continuous bounded functions $f: \R_0^+ \rightarrow \R$.
One can now define an integral of some $f\in C_b$ with respect to some $\mu\in\mathcal{M}_b$ by
\begin{equation}
\label{1}
\int_0^\infty f d\mu = \int_0^\infty f d\mu^+ - \int_0^\infty f d\mu^- ,
\end{equation}
where $\mu = \mu^+ - \mu^-$ is the uniquely determined Hahn-Jordan decomposition of $\mu$ into 
two finite non-negative Borel measures, $\mu^+,\mu^-\in\mathcal{M}_b^+$, such that 
$\tilde{\mu}^+ \geq\mu^+$ and $\tilde{\mu}^- \geq \mu^-$
for any $\tilde{\mu}^+,\tilde{\mu}^-\in\mathcal{M}_b^+$ with 
$\mu = \tilde{\mu}^+ - \tilde{\mu}^-$, and where integration on the
right hand side of \eqref{1} is standard Lebesgue integration with respect to (w.r.t.)
non-negative finite measures.

The Lebesgue integral on $C_b$ for non-negative bounded measures is
positively homogeneous and additive w.r.t.~the measure. 
To see that integration of functions in $C_b$ w.r.t.~bounded signed measures in $\mathcal{M}_b$ is linear
w.r.t.~these measures, i.e.~to see that
\begin{equation}
\label{2}
\int_0^\infty f d(a\mu+b\nu) = a\int_0^\infty f d\mu + b\int_0^\infty f d\nu 
\quad \text{for all } f\in C_b;\mu,\nu\in\mathcal{M}_b; a,b\in\R ,
\end{equation}
note that, if it is without loss of generality assumed that $a,b \geq 0$, then
\begin{eqnarray}
a\mu^+ + b\nu^+ & = & (a\mu+b\nu)^+ + \rho \\
a\mu^- + b\nu^- & = & (a\mu+b\nu)^- + \rho 
\end{eqnarray}
for some $\rho\in\mathcal{M}^+_b$, and by \eqref{1}, \eqref{2} follows from the positive
homogeneity and additivity for non-negative measures.

Consider now $\mu, \mu_k \in\mathcal{M}^+_b$ for $k=1,2,\ldots$. If
\begin{equation}
\mu = \sum_{k=1}^{\infty} \mu_k ,
\end{equation}
then one has for any non-negative, continuous and bounded function $f$ on $\R_0^+$
$\sigma$-additivity for integrals in the sense that
\begin{equation}
\label{4}
\int_0^\infty f d\mu = \sum_{k=1}^\infty \int_0^\infty f d\mu_k .
\end{equation}
This can easily be seen, since, for all $n=1,2,\ldots$, one obtains from \eqref{2} that
\begin{eqnarray}
0 
\leq 
\int_0^\infty f d\mu - \sum_{k=1}^n \int_0^\infty f d\mu_k 
& = &
\int_0^\infty f d\left(\mu - \sum_{k=1}^n \mu_k\right)   \\
\nonumber
& \leq &
\left(\mu(\R_0^+) - \sum_{k=1}^{n} \mu_k(\R_0^+)\right)\sup_{\R_0^+} f ,
\end{eqnarray}
where the last expression goes to zero because of the boundedness of $f$ and $\mu$.

The null measure on $(\R_0^+,\mathcal{B})$ is denoted by $o$, that is $o(B) = 0$
for all $B\in\mathcal{B}$, and $\delta_t$ is the Dirac measure in $t\geq 0$.
The trace of $\gamma\in\mathcal{M}_b(\mathbb{R}_0^+)$ on a closed, half open, or open 
interval $I\in\mathbb{R}_0^+$ is denoted by $\gamma_{I}$, which means
that for any $B\in\mathcal{B}(\mathbb{R}_0^+)$
\begin{equation}
\gamma_{I}(B)
=
\int_0^\infty \mathbf{1}_{B} d\gamma ,
\end{equation}
where $\mathbf{1}_{B}$ is the indicator function of $B$. Furthermore,
\begin{equation}
\label{eq:trace measures}
\mathcal{M}_b(I) 
:= 
\{\gamma_{I}: \gamma\in\mathcal{M}_b(\mathbb{R}_0^+)\}
\subset
\mathcal{M}_b(\mathbb{R}_0^+) .
\end{equation}

The reader is reminded that any $\mu\in\mathcal{M}_b^+$ can be expressed
by means of a uniquely determined distribution function, or measure generating function, 
$F_\mu: \R_0^+ \rightarrow \R_0^+$ where\index{$F_\mu$}
\begin{equation}
\label{muFF3}
F_\mu(t) = \mu([0,t]) \quad \text{ for all } t\geq 0 .
\end{equation}
For instance, $F_\mu(0) = \mu(\{0\})$ and
\begin{equation}
\label{muFF}
\mu((s,t]) = F_\mu(t) - F_\mu(s) .
\end{equation}
Consider now some $\mu\in\mathcal{M}_b^+([0,T])$, which, in analogy to \eqref{eq:trace measures},
is a positive finite Borel measure with support in $[0,T]$.
It then holds for the Lebesgue integral w.r.t.~$\mu$ and the Riemann-Stieltjes integral
w.r.t.~$F_\mu$ for a continuous function $f$ that
\begin{eqnarray}
\label{Stieltjes}
\int_0^\infty f(t)\, d\mu(t) 
& =  &
f(0)\mu(\{0\}) 
+
\int_{(0,T]} f(t)\, d\mu(t) \\
\nonumber
& = &
f(0)F_\mu(0) 
+
\int_0^T f(t)\, dF_\mu(t) ,
\end{eqnarray}
where $f(0)F_\mu(0)$ has to be added to the Riemann-Stieltjes integral over $[0, T]$
(to obtain the Lebesgue integral over $[0, T]$) since,  
because of \eqref{muFF}, it cannot ``see" any mass on zero.


\section{Markets of deterministic cash flows}

\label{Markets and prices of deterministic cash flows}

Consider $\mathcal{M}_b = \mathcal{M}_b(\R_0^+)$ now as the set of all deterministic cash flows
on the non-negative real time axis $\R_0^+$. For $\gamma\in\mathcal{M}_b$, $s,t\in\R_0^+$, $s<t$,
the value $\gamma([s,t])$ stands for the total amount of cash paid by $\gamma$ during the time interval $[s,t]$.
For $B\in\mathcal{B}$, $\gamma(B)$ stands for the total amount of cash paid during the (Borel-measurable)
`amount of time' $B$. 
It is assumed that positive amounts would correspond to payments to an owner of the cash
flow, while negative amounts would correspond to payments (liabilities) by the owner. 
{\em In extremis}, all liabilities could be paid to the same external claimant, but conversely, it could also be
assumed that any positive cash flow held by a market 
participant was mirrored by a negative cash flow held by another participant. However, since the model is not
concerned about the question to whom liabilities are being paid to, the notion of a `market participant' is 
essentially irrelevant.

$\mathcal{M}_b$ obviously contains discrete time cash flows, where payments are made or received at certain
points in time, as well as time-continuous cash flows, where payments are understood to be made or received
continuously at a certain rate of payment per time unit. The latter type would be described by a density 
-- i.e.~a Radon-Nikodym derivative -- with respect to the Lebesgue measure on $(\R_0^+,\mathcal{B})$.
In reality, of course, cash flows are always discrete in time. However, cash flows
which are assumed to be paid time-continuously are common in theoretical financial and
actuarial mathematics (see examples below, and a similar comment in Norberg, 1990). 
Therefore, $\mathcal{M}_b$
seems to be the natural mathematical object which generalizes the notion of a bilateral 
deterministic cash flow, when this cash flow is assumed to be free of default risk and finite in terms of
absolute amounts paid. 

In the following, markets of deterministic cash flows shall be considered.
A market is an exchange. The crucial question in a market at present time 0 is therefore: 
What can be exchanged for what?
This is answered by the following definition. 

\begin{definition}[Market, price]
\label{market}
A market of deterministic cash flows is given by $(\mathcal{M}_b, \sim)$, where
$\sim$ is a relation on $\mathcal{M}_b$ such that 
for any $\gamma_1, {\ldots},\gamma_4\in\mathcal{M}_b$, one has
\begin{enumerate}
\i reflexivity: $\gamma_1 \sim \gamma_1$.
\i symmetry: $\gamma_1 \sim \gamma_2$ implies $\gamma_2 \sim \gamma_1$.
\i additivity:
If $\gamma_1 \sim \gamma_2$ and $\gamma_3 \sim \gamma_4$,
then $\gamma_1 + \gamma_3 \sim \gamma_2 + \gamma_4$.
\end{enumerate}
If for some $\gamma\in\mathcal{M}_b$ and $b\in\R$
\begin{equation}
\label{gamma sim bdelta_0}
\gamma \sim b\delta_0 ,
\end{equation}
then $b$ is called a price, or -- more precisely -- a spot price of $\gamma$ in this market. 
It must now also hold for $(\mathcal{M}_b, \sim)$ that
\begin{itemize}
\item[4.]
for any $\gamma\in\mathcal{M}_b^+\setminus\{o\}$ there exists at least 
one strictly positive price.
\end{itemize}
\end{definition}

It is thus denoted by $\gamma_1\sim\gamma_2$, if $\gamma_1$ can be exchanged, or traded
for, $\gamma_2$ in the market at time 0 (`at spot'), where it is assumed that this trade can 
occur at any natural quantity. The properties 1 to 3 are therefore entirely natural,
since, first, a cash flow can be exchanged for itself, since, second, 
for any party in a trade there must be a counterparty,
which then experiences the mirrored exchange,
and since, third, it is reasonable to assume that two different trades can be executed simultaneously.
This last feature, and why no divisibility of trades was assumed, will be discussed in more detail in the
next section. By the symmetry property, it is clear that the interest for borrowing and lending
in such a model is identical. In a market of peers, for instance, in an interbank market of similarly
suited banks, this would be a realistic assumption.

The reason for the definition of `price' is clear: The cash flow $\gamma$
is equivalent to, or exchangeable for, the payment of $b$ currency units at time $0$.
However, at this point it should be mentioned that the entire rest of this article could
be adapted to the situation where a different `price' was chosen by means of an
exchange rate w.r.t.~any other fixed $\gamma_0\in\mathcal{M}_b^+\setminus\{o\}$
such that (for a different $b$)
\begin{equation}
\gamma 
\sim
b\gamma_0 .
\end{equation}
This means that $\gamma_0$ would play the role of the numeraire. 
While this would be possible, the natural choice for the numeraire is, of course, $\delta_0$.

Since trades do not always involve cash prices -- as the simple example of 
swap agreements, where typically no cash changes hands at time zero, demonstrates --
it would seem appropriate to define a market for cash flows without assuming
{\em a priori} given prices, since prices finally are only a consequence of exchangeability.
However, the definition of a market would not seem very realistic if it would
potentially exclude large sets of cash flows from having a spot price, which is why Property 4
was included in Definition \ref{market}.
The price assumption (Property 4)  is natural in the sense that a strictly positive cash flow should have some
value at present. This gives the market a minimum in structure beyond the mere possibility of the
combination of trades and allows for inter-temporal substitution.

The following lemma shows that any trade can be inverted with regards to the sign of the cash flows.
Observe that this `short-selling' of cash flows does not need to be assumed, but follows from additivity.
For the lemma, and for the rest of this paper, it is important that, by the Hahn-Jordan decomposition,
the positive cone $\mathcal{M}_b^+\subset\mathcal{M}_b$ fulfills Assumption \ref{assu:cone}, and that a
market $(\mathcal{M}_b, \sim)$ as in Definition \ref{market} fulfills Assumption \ref{assu:relation}
and Assumption \ref{assu:gamma-value}.

\begin{lemma}[Short-selling]
\label{lemma:inversion}
In a market $(\mathcal{M}_b, \sim)$,
if $\gamma_1,\gamma_2\in \mathcal{M}_b$ and $\gamma_1 \sim \gamma_2$, then
$o \sim \gamma_2 - \gamma_1 \sim \gamma_1 - \gamma_2$ and 
$- \gamma_1 \sim -\gamma_2$.
\end{lemma}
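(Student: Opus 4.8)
The plan is to recognize this as a direct instance of the already-proved Lemma~\ref{lemma:inversionc} (Invariance w.r.t.~sign). First I would make the (essentially notational) observation that the real vector space $\mathcal{M}_b$, with its null element being the null measure $o$ of Definition~\ref{market}, together with the relation $\sim$ of that definition, satisfies Assumption~\ref{assu:relation}: Properties~1--3 of Definition~\ref{market} are verbatim the reflexivity, symmetry, and additivity demanded in Assumption~\ref{assu:relation}. Consequently Lemma~\ref{lemma:inversionc} applies with $\mathcal{V}=\mathcal{M}_b$, and its conclusion is precisely the assertion to be shown. This is also exactly the reason the paragraph preceding the lemma emphasizes that a market $(\mathcal{M}_b,\sim)$ fulfills Assumption~\ref{assu:relation}.

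For completeness I would briefly recall the self-contained chain of axiom applications underlying Lemma~\ref{lemma:inversionc}, since it uses nothing beyond the three relation axioms. Starting from $\gamma_1\sim\gamma_2$ and adjoining the reflexive relation $-\gamma_1\sim-\gamma_1$ via additivity yields $o\sim\gamma_2-\gamma_1$. Applying symmetry to $\gamma_1\sim\gamma_2$ to get $\gamma_2\sim\gamma_1$, then adjoining $-\gamma_2\sim-\gamma_2$, yields $o\sim\gamma_1-\gamma_2$, whose symmetric form is $\gamma_1-\gamma_2\sim o$. Adjoining this to $o\sim\gamma_2-\gamma_1$ gives $\gamma_1-\gamma_2\sim\gamma_2-\gamma_1$, i.e.\ the middle equivalence $\gamma_2-\gamma_1\sim\gamma_1-\gamma_2$. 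Finally, adjoining $-\gamma_2\sim-\gamma_2$ to $o\sim\gamma_2-\gamma_1$ gives $-\gamma_2\sim-\gamma_1$, hence $-\gamma_1\sim-\gamma_2$ by symmetry; alternatively one may invoke transitivity (Lemma~\ref{lemma:equivalencec}) to assemble the pieces.

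I do not expect a genuine obstacle here. The only points requiring any care are the bookkeeping of which reflexive relations are adjoined and in which order, and the trivial identification of the null vector of the space $\mathcal{M}_b$ with the null measure $o$ used in Definition~\ref{market}. In particular, no properties of measures, no use of the Hahn--Jordan decomposition, no structure of the cone $\mathcal{M}_b^+$, and no appeal to Assumption~\ref{assu:gamma-value} or to \textbf{NT} are needed: short-selling of cash flows is a purely algebraic consequence of additivity of the exchangeability relation.
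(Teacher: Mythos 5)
Your proposal is correct and matches the paper's approach exactly: the paper proves this lemma by citing Lemma~\ref{lemma:inversionc}, relying on the observation (stated just before the lemma) that a market $(\mathcal{M}_b,\sim)$ fulfills Assumption~\ref{assu:relation}, and the underlying chain of adjoining $-\gamma_1\sim-\gamma_1$ and $-\gamma_2\sim-\gamma_2$ via additivity is precisely the argument you spell out.
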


\begin{proof}
Lemma \ref{lemma:inversionc}.
\end{proof}

Since it is not unrealistic to assume that several trades can be carried out consecutively in, essentially, the same moment, 
the next lemma's statement is a reasonable addition to the model. Moreover, it shows that the set of all cash flows for
which a particular cash flow can be exchanged for, i.e.~the set of immediately attainable cash flows when holding this
cash flow, is an equivalence class under $\sim$.

\begin{lemma}[Transitivity, equivalence relation]
\label{lemma:equivalence}
For a market $(\mathcal{M}_b, \sim)$, $\sim$ is an equivalence relation on $\mathcal{M}_b$, since 
for any $\gamma_1, \gamma_2, \gamma_3\in\mathcal{M}_b$, one has that
if $\gamma_1 \sim \gamma_2$ and $\gamma_2 \sim \gamma_3$,
then $\gamma_1 \sim \gamma_3$.
\end{lemma}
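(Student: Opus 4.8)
The plan is to reduce the statement to the abstract result already proved, namely Lemma \ref{lemma:equivalencec}. First I would observe that in Definition \ref{market} the relation $\sim$ on $\mathcal{M}_b$ is required to satisfy Properties 1, 2, 3 (reflexivity, symmetry, additivity), which are word-for-word the three requirements of Assumption \ref{assu:relation}; hence $(\mathcal{M}_b, \sim)$ is an instance of the abstract framework of Section \ref{Some results for additive relations on linear spaces} with $\mathcal{V} = \mathcal{M}_b$. As noted in the text preceding the lemma, the Hahn--Jordan decomposition additionally makes $\mathcal{M}_b^+$ a positive cone satisfying Assumption \ref{assu:cone}, and Property 4 of Definition \ref{market} supplies Assumption \ref{assu:gamma-value} with $\gamma_0 = \delta_0$, but for the present statement only Assumption \ref{assu:relation} is needed.

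Given that, Lemma \ref{lemma:equivalencec} applies verbatim: for any $\gamma_1, \gamma_2, \gamma_3 \in \mathcal{M}_b$, $\gamma_1 \sim \gamma_2$ and $\gamma_2 \sim \gamma_3$ imply $\gamma_1 \sim \gamma_3$, which is exactly the transitivity assertion of the lemma. Combined with reflexivity and symmetry, which are Properties 1 and 2 of Definition \ref{market}, this establishes that $\sim$ is an equivalence relation on $\mathcal{M}_b$. If desired I would also recall the one-line mechanism underlying Lemma \ref{lemma:equivalencec}, since that is where the content actually lies: from $\gamma_1 \sim \gamma_2$ one has $-\gamma_2 \sim -\gamma_2$ by Lemma \ref{lemma:inversion} (short-selling, i.e.\ Lemma \ref{lemma:inversionc} in the abstract setting), and adding the three relations $\gamma_1 \sim \gamma_2$, $-\gamma_2 \sim -\gamma_2$, $\gamma_2 \sim \gamma_3$ via additivity yields $\gamma_1 - \gamma_2 + \gamma_2 \sim \gamma_2 - \gamma_2 + \gamma_3$, that is $\gamma_1 \sim \gamma_3$.

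I expect no genuine obstacle here. The only point requiring any care is to check that the market setting of Definition \ref{market} is literally a special case of the vector-space framework of Assumption \ref{assu:relation}, so that the abstract lemmas transfer without modification — and this holds essentially by design, since Properties 1--3 of the market were chosen to coincide with those of Assumption \ref{assu:relation}. Consequently the proof amounts to the single invocation ``Apply Lemma \ref{lemma:equivalencec}'', the paragraphs above merely making explicit why that invocation is legitimate.
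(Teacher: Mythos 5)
Your proof is correct and matches the paper's, whose entire proof is the single invocation of Lemma \ref{lemma:equivalencec}; your verification that Properties 1--3 of Definition \ref{market} instantiate Assumption \ref{assu:relation} (and your recap of the additivity argument via $\gamma_1\sim\gamma_2$, $-\gamma_2\sim-\gamma_2$, $\gamma_2\sim\gamma_3$) just makes that invocation explicit.
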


\begin{proof}
Lemma \ref{lemma:equivalencec}.
\end{proof}

\begin{proposition}[Existence of a price] 
\label{price existence}
In a market $(\mathcal{M}_b, \sim)$, any $\gamma\in\mathcal{M}_b$ has at least one spot price, and any
$\gamma \in \mathcal{M}_b^+\setminus\{o\}$ has at least one strictly positive spot price.
\end{proposition}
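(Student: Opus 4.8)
The plan is to recognize this proposition as a direct specialization of Lemma \ref{price existencec} to the concrete setting $\mathcal{V}=\mathcal{M}_b$, $\mathcal{V}^+=\mathcal{M}_b^+$, and $\gamma_0=\delta_0$. As already remarked in the paragraph preceding Lemma \ref{lemma:inversion}, a market $(\mathcal{M}_b,\sim)$ in the sense of Definition \ref{market} fulfills Assumption \ref{assu:relation} (its Properties 1--3 are verbatim the market axioms) and, via the Hahn-Jordan decomposition, $\mathcal{M}_b^+$ is a positive cone satisfying Assumption \ref{assu:cone}. The remaining ingredient is to check that $\delta_0$ can serve as the distinguished element $\gamma_0$ of Assumption \ref{assu:gamma-value}.

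First I would note that $\delta_0\in\mathcal{M}_b^+\setminus\{o\}$, and that Property 4 of Definition \ref{market} asserts precisely that for any $\gamma\in\mathcal{M}_b^+\setminus\{o\}$ there exists at least one strictly positive $b$ with $\gamma\sim b\delta_0$, i.e.\ at least one strictly positive $\delta_0$-value. This is exactly the hypothesis of Assumption \ref{assu:gamma-value} with $\gamma_0=\delta_0$. Next I would simply invoke Lemma \ref{price existencec}: since all of Assumptions \ref{assu:relation}, \ref{assu:cone}, \ref{assu:gamma-value} hold (the last with $\gamma_0=\delta_0$), every $\gamma\in\mathcal{M}_b$ has at least one $\delta_0$-value, and every $\gamma\in\mathcal{M}_b^+\setminus\{o\}$ has at least one strictly positive $\delta_0$-value.

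Finally I would translate the conclusion back into the language of Definition \ref{market}: a spot price of $\gamma$ is by definition a number $b$ with $\gamma\sim b\delta_0$, i.e.\ a $\delta_0$-value of $\gamma$. Hence the two assertions of Lemma \ref{price existencec} are, word for word, the two assertions of the proposition. There is no real obstacle here — the only thing that requires a line of commentary is the identification of Property 4 of Definition \ref{market} with Assumption \ref{assu:gamma-value} for the specific choice $\gamma_0=\delta_0$, after which the proof is a one-line citation of Lemma \ref{price existencec}.
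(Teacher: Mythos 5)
Your proposal is correct and matches the paper's own argument exactly: the paper proves this proposition by a one-line citation of Lemma \ref{price existencec}, relying on the remark preceding Lemma \ref{lemma:inversion} that a market $(\mathcal{M}_b,\sim)$ fulfills Assumptions \ref{assu:relation}, \ref{assu:cone}, and \ref{assu:gamma-value} (the last via Property 4 of Definition \ref{market} with $\gamma_0=\delta_0$). Your spelled-out identification of Property 4 with Assumption \ref{assu:gamma-value} is precisely the intended justification.
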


\begin{proof}
Lemma \ref{price existencec}.
\end{proof}

The proposition bears some resemblance to the completeness of stochastic market
models, meaning market models with time-dynamic trading of usually finitely 
many underlying stochastic assets, 
where it is demanded, that any payoff function can be attained by a self-financing strategy of some
initial price. This price must not be unique if arbitrage is permitted. 
As a consequence of Property 4 of Def.~\ref{market}, it is now known in the here presented case
that any cash flow in $\gamma\in\mathcal{M}_b$ can be purchased for a certain spot
price, while this price is not necessarily unique.


\section{No-arbitrage prices of deterministic cash flows}

\label{No-arbitrage prices of deterministic cash flows}

\begin{definition}[Arbitrage]
\label{def:arbitrage}
There exists an arbitrage opportunity in the market $(\mathcal{M}_b, \sim)$ if
\begin{equation}
\label{eq:arbitrage}
o \sim \gamma \quad \text{ for some }\gamma\in\mathcal{M}_b^+\setminus\{o\} .
\end{equation}
\end{definition}

Expression \eqref{eq:arbitrage} means that a market participant can trade
the null cash flow, i.e.~`nothing', for a cash flow $\gamma$ that will pay a positive
amount of money over time. Such a `free lunch', however, should not be possible in an
efficient market. If there are no arbitrage opportunities, the market can be described as arbitrage-free, 
or, equivalently, it can be said that no-arbitrage, or {\bf NA}, holds. Obviously, {\bf NA}
is the analog of {\bf NT} in Definition \ref{def:arbitragec}.

\begin{definition}[Law of One Price]
\label{LOP_def}
The Law of One Price $(${\bf LOP}$)$ holds in the market $(\mathcal{M}_b, \sim)$
if for any $\gamma\in\mathcal{M}_b$ there exists exactly one cash price.
\end{definition}

\begin{theorem}[{\bf NA} $\Leftrightarrow$ {\bf LOP}]
\label{LOP_theo}
In a market $(\mathcal{M}_b, \sim)$, {\bf NA} holds if and only if {\bf LOP} holds.
In an arbitrage-free market, there therefore exists a uniquely determined functional 
\begin{eqnarray}
\label{pi_1}
\pi:  \mathcal{M}_b & \rightarrow & \R  \index{$\pi$}
\end{eqnarray}
given by
\begin{equation}
\label{gammasimpigammadelta_0}
\gamma \sim \pi(\gamma)\delta_0 .
\end{equation}
\end{theorem}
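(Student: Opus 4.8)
The plan is to recognize that Theorem \ref{LOP_theo} is essentially a restatement of Proposition \ref{LOP_theoc} specialized to the market setting $(\mathcal{V},\sim)=(\mathcal{M}_b,\sim)$ with positive cone $\mathcal{V}^+=\mathcal{M}_b^+$ and distinguished element $\gamma_0=\delta_0$. So the first step is to verify that the hypotheses of Proposition \ref{LOP_theoc} are in force: by the remark preceding Lemma \ref{lemma:inversion}, the Hahn--Jordan decomposition shows $\mathcal{M}_b^+$ satisfies Assumption \ref{assu:cone}, Definition \ref{market} Properties 1--3 are exactly Assumption \ref{assu:relation}, and Property 4 of Definition \ref{market} together with the choice $\gamma_0=\delta_0\in\mathcal{M}_b^+\setminus\{o\}$ is exactly Assumption \ref{assu:gamma-value}. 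I would also note that {\bf NA} (Definition \ref{def:arbitrage}) is literally {\bf NT} with respect to $\mathcal{M}_b^+$ (Definition \ref{def:arbitragec}), and that a ``cash price'' in Definition \ref{market} is precisely a ``$\delta_0$-value'' in the abstract Definition, so {\bf LOP} (Definition \ref{LOP_def}) is the assertion that every $\gamma\in\mathcal{M}_b$ has a unique $\delta_0$-value.

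With these identifications in place, the equivalence {\bf NA} $\Leftrightarrow$ {\bf LOP} is an immediate application of Proposition \ref{LOP_theoc}: that proposition states, under Assumptions \ref{assu:relation} and \ref{assu:cone} and with $\gamma_0$ as in Assumption \ref{assu:gamma-value}, that {\bf NT} holds iff the $\gamma_0$-value of every $\gamma\in\mathcal{V}$ is uniquely determined. Translating back, {\bf NA} holds iff every $\gamma\in\mathcal{M}_b$ has a unique cash price, i.e.\ iff {\bf LOP} holds. For the final sentence of the theorem, I would again invoke Proposition \ref{LOP_theoc}, which in the {\bf NT} case produces the uniquely determined functional $\pi_{\gamma_0}:\mathcal{V}\to\R$ characterized by $\gamma\sim\pi_{\gamma_0}(\gamma)\gamma_0$; setting $\pi:=\pi_{\delta_0}$ gives the functional $\pi:\mathcal{M}_b\to\R$ of \eqref{pi_1} satisfying \eqref{gammasimpigammadelta_0}, namely $\gamma\sim\pi(\gamma)\delta_0$.

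The proof is therefore essentially a one-line citation of Proposition \ref{LOP_theoc}, parallel to how Lemmas \ref{lemma:inversion}, \ref{lemma:equivalence} and Proposition \ref{price existence} are proved by citing their abstract counterparts. There is no real obstacle here; the only thing requiring a modicum of care is making explicit the dictionary between the concrete vocabulary (market, cash price, arbitrage, {\bf LOP}) and the abstract vocabulary ($\sim$ on $\mathcal{V}$, $\gamma_0$-value, {\bf NT}, uniqueness of $\gamma_0$-values), so that the reader sees that Proposition \ref{LOP_theoc} applies verbatim with $\mathcal{V}=\mathcal{M}_b$, $\mathcal{V}^+=\mathcal{M}_b^+$, $\gamma_0=\delta_0$. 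Once that dictionary is stated, both assertions of the theorem follow.

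\begin{proof}
As noted before Lemma \ref{lemma:inversion}, the Hahn--Jordan decomposition shows that $\mathcal{M}_b^+$ is a positive cone fulfilling Assumption \ref{assu:cone}, a market $(\mathcal{M}_b,\sim)$ fulfills Assumption \ref{assu:relation} by Properties 1--3 of Definition \ref{market}, and Property 4 of Definition \ref{market} is Assumption \ref{assu:gamma-value} with $\gamma_0=\delta_0\in\mathcal{M}_b^+\setminus\{o\}$. With this choice of $\gamma_0$, a cash price of $\gamma$ in the sense of Definition \ref{market} is exactly a $\delta_0$-value of $\gamma$, {\bf NA} (Definition \ref{def:arbitrage}) is exactly {\bf NT} with respect to $\mathcal{M}_b^+$ (Definition \ref{def:arbitragec}), and {\bf LOP} (Definition \ref{LOP_def}) is exactly the statement that the $\delta_0$-value of every $\gamma\in\mathcal{M}_b$ is uniquely determined. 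Proposition \ref{LOP_theoc}, applied with $\mathcal{V}=\mathcal{M}_b$, $\mathcal{V}^+=\mathcal{M}_b^+$, and $\gamma_0=\delta_0$, states that {\bf NT} holds if and only if the $\delta_0$-value of every $\gamma\in\mathcal{M}_b$ is unique; that is, {\bf NA} holds if and only if {\bf LOP} holds. Under {\bf NA}, the same proposition furnishes the uniquely determined functional $\pi_{\delta_0}:\mathcal{M}_b\to\R$ with $\gamma\sim\pi_{\delta_0}(\gamma)\delta_0$ for all $\gamma\in\mathcal{M}_b$; setting $\pi:=\pi_{\delta_0}$ yields \eqref{pi_1} and \eqref{gammasimpigammadelta_0}.
\end{proof}
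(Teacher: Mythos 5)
Your proposal is correct and matches the paper's own proof, which is exactly the one-line citation of Proposition \ref{LOP_theoc} with $\gamma_0=\delta_0\in\mathcal{M}_b^+$; the paper relies on the remark before Lemma \ref{lemma:inversion} for the verification of Assumptions \ref{assu:relation}, \ref{assu:cone}, and \ref{assu:gamma-value}, which you simply spell out more explicitly. No gaps.
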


\begin{proof}
Proposition \ref{LOP_theoc}, where $\gamma_0 = \delta_0\in\mathcal{M}^+_b$.
\end{proof}

Note that in stochastic models the Law of One Price generally is not equivalent to {\bf NA},
but weaker. This was pointed out by Courtault et al.~(2004),
where, in some analogy to the stochastic Fundamental Theorem of Asset Pricing, 
{\bf LOP} is connected to the existence of certain martingales. However, even in these
models, {\bf LOP} implies {\bf NA} if the market is complete. With the earlier pointed out
similarity of the here presented model to complete stochastic models, Theo.~\ref{LOP_theo}
seems to fit in with this observation.

\begin{definition}[Present value, {\bf NA} price]
\label{pi_def}
In an arbitrage-free market $(\mathcal{M}_b, \sim)$,
$\pi(\gamma)$ of Theorem \ref{LOP_theo} is called the {\bf NA} price, the fair value, or the present value
of the cash flow $\gamma\in\mathcal{M}_b$. 
\end{definition}

$\pi$ depends on $\sim$, thus
\begin{equation}
\label{pipisim}
\pi = \pi_{\sim} .\index{$\pi_{\sim}$}
\end{equation}

\begin{theorem}[Linearity of exchangeability under {\bf NA}] 
\label{theo:linearity}
If $\gamma_1, {\ldots},\gamma_4\in\mathcal{M}_b$,
$\gamma_1 \sim \gamma_2$, $\gamma_3 \sim \gamma_4$, and $a,b\in\R$,
then $a\gamma_1 + b\gamma_3 \sim a\gamma_2 + b\gamma_4$ holds if
the market $(\mathcal{M}_b, \sim)$ is free of arbitrage.
\end{theorem}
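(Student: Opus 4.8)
The plan is to reduce the statement to the abstract result already proven for general vector spaces, namely Proposition~\ref{theo:linearityc}. Everything needed is in place: the excerpt has explicitly observed that a market $(\mathcal{M}_b, \sim)$ in the sense of Definition~\ref{market} satisfies Assumption~\ref{assu:relation} (its Properties~1--3 are identical), that the positive cone $\mathcal{M}_b^+ \subset \mathcal{M}_b$ satisfies Assumption~\ref{assu:cone} by the Hahn--Jordan decomposition, and that Property~4 of Definition~\ref{market} gives Assumption~\ref{assu:gamma-value} with $\gamma_0 = \delta_0$. Moreover, the hypothesis that the market is free of arbitrage is, by Definition~\ref{def:arbitrage}, precisely the property {\bf NT} of Definition~\ref{def:arbitragec} for $(\mathcal{M}_b, \sim)$ with respect to $\mathcal{M}_b^+$.

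So the proof is essentially one line: under {\bf NA}, all the hypotheses of Proposition~\ref{theo:linearityc} are met with $\mathcal{V} = \mathcal{M}_b$, $\mathcal{V}^+ = \mathcal{M}_b^+$, and $\gamma_0 = \delta_0$, and the conclusion of that proposition is verbatim the conclusion claimed here: $\gamma_1 \sim \gamma_2$, $\gamma_3 \sim \gamma_4$, and $a,b \in \R$ imply $a\gamma_1 + b\gamma_3 \sim a\gamma_2 + b\gamma_4$.

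I do not anticipate a genuine obstacle; the only thing to be careful about is making the correspondence of hypotheses fully explicit, since Definition~\ref{market} packages the three relation axioms and the price axiom slightly differently from Assumptions~\ref{assu:relation}--\ref{assu:gamma-value}. In particular, I would point out once more that Property~4 of Definition~\ref{market} literally asserts the existence of a strictly positive $\delta_0$-value for every $\gamma \in \mathcal{M}_b^+ \setminus \{o\}$, which is exactly Assumption~\ref{assu:gamma-value} with the distinguished element $\gamma_0 = \delta_0 \in \mathcal{M}_b^+ \setminus \{o\}$. Once that bookkeeping is stated, the result follows immediately. Accordingly, the proof reads simply:

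\begin{proof}
This follows from Proposition~\ref{theo:linearityc}, applied with $\mathcal{V} = \mathcal{M}_b$, the positive cone $\mathcal{V}^+ = \mathcal{M}_b^+$, and $\gamma_0 = \delta_0$. Indeed, a market $(\mathcal{M}_b, \sim)$ as in Definition~\ref{market} satisfies Assumption~\ref{assu:relation} (Properties~1--3 coincide), $\mathcal{M}_b^+$ satisfies Assumption~\ref{assu:cone} by the Hahn--Jordan decomposition, and Property~4 of Definition~\ref{market} is precisely Assumption~\ref{assu:gamma-value} for $\gamma_0 = \delta_0 \in \mathcal{M}_b^+ \setminus \{o\}$. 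Finally, by Definition~\ref{def:arbitrage}, freeness of arbitrage of $(\mathcal{M}_b, \sim)$ is the property {\bf NT} of Definition~\ref{def:arbitragec} with respect to $\mathcal{M}_b^+$. Hence the hypotheses of Proposition~\ref{theo:linearityc} hold, and its conclusion gives $a\gamma_1 + b\gamma_3 \sim a\gamma_2 + b\gamma_4$.
\end{proof}
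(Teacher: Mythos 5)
Your proof is correct and follows the same route as the paper, which also proves this theorem by citing Proposition~\ref{theo:linearityc}, relying on the observation (made just before Lemma~\ref{lemma:inversion}) that $(\mathcal{M}_b,\sim)$ with the cone $\mathcal{M}_b^+$ and $\gamma_0=\delta_0$ satisfies Assumptions~\ref{assu:relation}--\ref{assu:gamma-value} and that {\bf NA} is {\bf NT}. Your version merely makes this bookkeeping explicit, which is fine.
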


\begin{proof}
Proposition \ref{theo:linearityc}.
\end{proof}

Analog to the remark below Proposition \ref{theo:linearityc}, the trivial market, where all cash flows can be traded for
one another, is an example, where linearity holds without {\bf NA}, meaning that absence of arbitrage is sufficient for
linearity of $\sim$, but not necessary. 

It is possibly remarkable to a certain degree, that linearity of exchangeability
was not demanded by definition, but only the much weaker and more natural additivity of trades,
which in combination with existence of prices for non-negative cash flows and {\bf NA} then implied
linearity.
For instance, also the assumption of arbitrary divisibility or scalability of trades would have been much less natural
than additivity. If $\sim$ means exchangeability at any natural quantity, it is easy to imagine a
market participant who would execute two trades simultaneously in the following sense. Firstly,
she would simply perceive the two received cash flows as a unit (e.g.~being paid into or being paid from
the same account). Secondly, the (one) cash flow that she traded in, she could indeed trade in since,
even if it was technically impossible to simply split the cash flow in two, she herself would be able to do
so as an intermediary -- for the positive part of that cash flow -- 
by passing on received payments to the new owners (by dividing them up accordingly), or 
-- for the negative part of that cash flow -- 
by passing on received payments from the new owners (in sum) to the claimant of these payments.
A similar argument would not work if, say, a trade was tried to be carried out at half size, because another
party would have to be found that would want to execute the exact same trade such that those trades
could be bundled to create the trade that the market actually permits. 
However, nothing (in the definitions) would guarantee the existence of such another party 
with those exact same wishes, even if there was an intermediary willing to bundle the trades.

Prices also are a convenient way of determining which goods\index{good} or assets can be
exchanged for one another. The following theorem clarifies this in the context of this model,
where -- not surprisingly -- the equivalence classes of exchangeable cash flows are those of
the same price.

\begin{theorem}[Linear characterization of {\bf NA}]
\label{theo:pi_linear}
\label{Theo4}
$(\mathcal{M}_b, \sim)$ is a market where {\bf NA} holds if and only if there exists a
linear functional $\pi': \mathcal{M}_b \rightarrow \R$ with 
\begin{equation}
\pi'(\gamma)>0 \quad \text{ for all }\gamma\in\mathcal{M}_b^+\setminus\{o\} , 
\end{equation}
where for all $\gamma_1, \gamma_2 \in\mathcal{M}_b$
\begin{equation}
\label{simm_2}
\gamma_1\sim\gamma_2 \quad \Leftrightarrow \quad \pi'(\gamma_1)=\pi'(\gamma_2) .
\end{equation}
The uniquely determined {\bf NA} price is given by
\begin{equation}
\label{eq:pipigamma0b}
\pi
=
\frac{\pi'}{\pi'(\delta_0)} .
\end{equation}
\end{theorem}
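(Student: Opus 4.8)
The plan is to deduce Theorem~\ref{theo:pi_linear} directly from the abstract result Theorem~\ref{pi_linearc} by specializing the vector space and the cone. Recall that the preliminary section established that $\mathcal{M}_b = \mathcal{M}_b(\R_0^+)$ is a real vector space and that $\mathcal{M}_b^+$ is a positive cone satisfying Assumption~\ref{assu:cone} (the Hahn--Jordan decomposition supplies Property~2, and closure under non-negative linear combinations is Property~1). Likewise, a market $(\mathcal{M}_b,\sim)$ in the sense of Definition~\ref{market} satisfies Assumption~\ref{assu:relation} by Properties~1--3 of that definition, and Property~4 of Definition~\ref{market} is exactly Assumption~\ref{assu:gamma-value} with the choice $\gamma_0=\delta_0\in\mathcal{M}_b^+\setminus\{o\}$. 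Finally, {\bf NA} (Definition~\ref{def:arbitrage}) is by construction the instance of {\bf NT} (Definition~\ref{def:arbitragec}) for this $(\mathcal{V},\sim)=(\mathcal{M}_b,\sim)$ with the cone $\mathcal{M}_b^+$.

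The first step is therefore to invoke Theorem~\ref{pi_linearc} with $\mathcal{V}=\mathcal{M}_b$ and $\mathcal{V}^+=\mathcal{M}_b^+$. The ``only if'' direction of Theorem~\ref{theo:pi_linear} follows: if $(\mathcal{M}_b,\sim)$ is a market where {\bf NA} holds, then Assumptions~\ref{assu:relation}, \ref{assu:cone}, \ref{assu:gamma-value} and {\bf NT} all hold, so Theorem~\ref{pi_linearc} yields a linear functional $\pi':\mathcal{M}_b\to\R$ that is strictly positive on $\mathcal{M}_b^+\setminus\{o\}$ and satisfies \eqref{simm_2}. For the ``if'' direction, suppose such a $\pi'$ exists; then the equivalence in Theorem~\ref{pi_linearc} gives that Assumption~\ref{assu:relation}, Assumption~\ref{assu:gamma-value}, and {\bf NT} hold, and since Assumption~\ref{assu:cone} is automatic for $\mathcal{M}_b^+$, this says precisely that $(\mathcal{M}_b,\sim)$ is a market (Definition~\ref{market}: Properties 1--3 are Assumption~\ref{assu:relation}, Property~4 is Assumption~\ref{assu:gamma-value}) in which {\bf NA} holds.

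For the price formula \eqref{eq:pipigamma0b}, I would apply the last clause of Theorem~\ref{pi_linearc}, which states that for any $\pi'$ as above and any $\gamma_0\in\mathcal{V}^+\setminus\{o\}$ one has $\pi'/\pi'(\gamma_0)=\pi_{\gamma_0}$, the uniquely determined {\bf NT} value with respect to $\gamma_0$. Taking $\gamma_0=\delta_0$ and noting that $\pi_{\delta_0}=\pi$ by Definition~\ref{pi_def} together with Theorem~\ref{LOP_theo} (whose proof already identified $\pi$ with $\pi_{\delta_0}$), we obtain $\pi=\pi'/\pi'(\delta_0)$, as claimed.

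There is essentially no genuine obstacle here; the content was already done in the abstract section. The only point requiring a little care is the bookkeeping of the identifications: that Definition~\ref{market} Property~4 is literally Assumption~\ref{assu:gamma-value} for $\gamma_0=\delta_0$, that Definition~\ref{def:arbitrage} is {\bf NT} for the cone $\mathcal{M}_b^+$, and that $\pi$ as defined via \eqref{gammasimpigammadelta_0} coincides with the abstract $\pi_{\delta_0}$ of Proposition~\ref{LOP_theoc}. Once these are spelled out, the proof is the single sentence ``apply Theorem~\ref{pi_linearc} with $\mathcal{V}=\mathcal{M}_b$, $\mathcal{V}^+=\mathcal{M}_b^+$, $\gamma_0=\delta_0$.''
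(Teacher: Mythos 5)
Your proposal is correct and is essentially identical to the paper's own proof, which simply cites Theorem \ref{pi_linearc} with the identifications $\mathcal{V}=\mathcal{M}_b$, $\mathcal{V}^+=\mathcal{M}_b^+$, $\gamma_0=\delta_0$ (already noted before Lemma \ref{lemma:inversion}); your version just spells out the bookkeeping, including the use of the ``for any $\gamma_0$'' clause to recover Property 4 of Definition \ref{market} in the ``if'' direction.
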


\begin{proof}
Theorem \ref{pi_linearc}.
\end{proof}

As similarly remarked in Section \ref{Some results for additive relations on linear spaces} and
in analogy to the stochastic theory of complete markets, the set of
equivalence classes of exchangeable cash flows in an arbitrage-free market is the quotient space 
\begin{equation}
\mathcal{M}_b/\sim \; = \mathcal{M}_b/\pi^{-1}(0) .
\end{equation}
Linearity of the {\bf NA} price followed immediately from the linearity
of the market's equivalence relation under {\bf NA} (Theorem \ref{theo:linearity}). 
Often, linearity of a price is derived by arguing that a cash flow could otherwise be
broken up into parts and be sold/bought individually to create arbitrage. This argument 
requires, of course, that a linear combination of cash flows can instantly be sold/bought for
(is equivalent to) the corresponding linear combination of those cash flows' individual prices 
(the corresponding linear combination of Dirac measures in $t=0$). Linearity, and
in particular the not so natural divisibility (see earlier comment) of exchangeability
are therefore quietly assumed, while in this article, those were derived from more
basic, but explicitly stated properties of the market (e.g.~additivity), assuming absence of arbitrage.

For the following, it will be expressed as $\gamma_1 > \gamma_2$ if 
$\gamma_1, \gamma_2\in\mathcal{M}_b$ with $\gamma_1(B) \geq \gamma_2(B)$
for all $B\in\mathcal{B}$ and $\gamma_1(B) > \gamma_2(B)$ for at least
one $B\in\mathcal{B}$ or, equivalently, if $\gamma_1 - \gamma_2 \in\mathcal{M}_b^+\setminus\{o\}$.

\begin{corollary}[Properties of the {\bf NA} price]
\label{cor:1}
In an arbitrage-free market $(\mathcal{M}_b, \sim)$, 
\begin{eqnarray}
\pi(o) & = & 0 ,\\
\gamma - \pi(\gamma)\delta_0 & \sim & o ,\\
\label{gamma-pi(gamma)delta0}
\pi(\gamma - \pi(\gamma)\delta_0) & = & 0 .
\end{eqnarray}
Furthermore,  $\pi(\gamma_1) > \pi(\gamma_2)$ for any $\gamma_1, \gamma_2\in\mathcal{M}_b$ with
$\gamma_1 > \gamma_2$.
\end{corollary}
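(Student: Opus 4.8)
The plan is to derive all four assertions of Corollary \ref{cor:1} from results already established, principally Theorem \ref{theo:pi_linear} (which gives a strictly positive linear functional $\pi'$ with $\pi = \pi'/\pi'(\delta_0)$) together with Corollary \ref{cor:1c} applied to the special case $\mathcal{V} = \mathcal{M}_b$, $\mathcal{V}^+ = \mathcal{M}_b^+$, $\gamma_0 = \delta_0$. Since the excerpt already records that a market $(\mathcal{M}_b,\sim)$ satisfies Assumptions \ref{assu:relation}, \ref{assu:cone}, \ref{assu:gamma-value}, and that {\bf NA} is exactly {\bf NT}, the three displayed equations $\pi(o)=0$, $\gamma - \pi(\gamma)\delta_0 \sim o$, and $\pi(\gamma - \pi(\gamma)\delta_0)=0$ are immediate restatements of the three lines of Corollary \ref{cor:1c} with $\gamma_0 = \delta_0$. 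So the first three claims require essentially no new work beyond citing Corollary \ref{cor:1c}.

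The only genuinely new content is the monotonicity statement: $\gamma_1 > \gamma_2$ implies $\pi(\gamma_1) > \pi(\gamma_2)$. Here I would argue as follows. Suppose $\gamma_1 > \gamma_2$, which by the definition given just above the corollary means $\gamma_1 - \gamma_2 \in \mathcal{M}_b^+\setminus\{o\}$. By Theorem \ref{theo:pi_linear} there is a linear functional $\pi'$ with $\pi'(\mu) > 0$ for every $\mu \in \mathcal{M}_b^+\setminus\{o\}$ and with $\pi = \pi'/\pi'(\delta_0)$; note $\pi'(\delta_0) > 0$ since $\delta_0 \in \mathcal{M}_b^+\setminus\{o\}$. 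Applying $\pi'$ to $\gamma_1 - \gamma_2$ gives $\pi'(\gamma_1) - \pi'(\gamma_2) = \pi'(\gamma_1 - \gamma_2) > 0$, hence $\pi'(\gamma_1) > \pi'(\gamma_2)$, and dividing by the positive constant $\pi'(\delta_0)$ yields $\pi(\gamma_1) > \pi(\gamma_2)$. Equivalently one may use linearity of $\pi$ directly (from Theorem \ref{theo:linearity} / Proposition \ref{theo:linearityc}) together with the fact that $\pi(\mu) > 0$ for $\mu\in\mathcal{M}_b^+\setminus\{o\}$, which is Lemma \ref{price existencec} in the guise of Proposition \ref{price existence}; then $\pi(\gamma_1) - \pi(\gamma_2) = \pi(\gamma_1 - \gamma_2) > 0$.

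I do not anticipate any real obstacle here: the corollary is a packaging of linearity and strict positivity of $\pi$, both of which are in hand. The one point to be slightly careful about is that strict positivity of $\pi$ on $\mathcal{M}_b^+\setminus\{o\}$ must be invoked for the element $\gamma_1 - \gamma_2$, which is why the definition of $>$ was phrased precisely so that $\gamma_1 - \gamma_2$ lands in $\mathcal{M}_b^+\setminus\{o\}$ and not merely in $\mathcal{M}_b^+$; if $\gamma_1 - \gamma_2$ were the null measure the conclusion would fail, so the ``for at least one $B$'' clause in the definition is essential and should be used explicitly. With that noted, the proof is a two- or three-line citation of Corollary \ref{cor:1c} and Theorem \ref{theo:pi_linear} plus one short computation using linearity and positivity of $\pi'$ (or $\pi$).

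\begin{proof}
The first three statements are Corollary \ref{cor:1c} applied to $\mathcal{V} = \mathcal{M}_b$, $\mathcal{V}^+ = \mathcal{M}_b^+$, and $\gamma_0 = \delta_0\in\mathcal{M}_b^+\setminus\{o\}$, using that $(\mathcal{M}_b,\sim)$ fulfills Assumption \ref{assu:relation}, \ref{assu:cone}, and \ref{assu:gamma-value}, and that {\bf NA} is {\bf NT}. For the last statement, let $\gamma_1 > \gamma_2$, i.e.~$\gamma_1 - \gamma_2 \in\mathcal{M}_b^+\setminus\{o\}$. By Theorem \ref{theo:linearity}, $\pi$ is linear, so $\pi(\gamma_1) - \pi(\gamma_2) = \pi(\gamma_1 - \gamma_2)$, and by Proposition \ref{price existence} (equivalently, by the strict positivity of $\pi'$ in Theorem \ref{theo:pi_linear} together with $\pi = \pi'/\pi'(\delta_0)$ and $\pi'(\delta_0)>0$) one has $\pi(\gamma_1 - \gamma_2) > 0$. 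Hence $\pi(\gamma_1) > \pi(\gamma_2)$.
\end{proof}
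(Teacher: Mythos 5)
Your proof is correct and matches the paper's: the first three claims are read off from Corollary \ref{cor:1c} with $\gamma_0=\delta_0$, and the monotonicity claim is obtained exactly as in the paper from $\pi(\gamma_1-\gamma_2)>0$ for $\gamma_1-\gamma_2\in\mathcal{M}_b^+\setminus\{o\}$ (Property 4 / Proposition \ref{price existence} plus price uniqueness) together with additivity/linearity of $\pi$. No gaps; your parenthetical $\pi'$-based variant is just another phrasing of the same positivity argument.
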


\begin{proof}
The first three statements from Cor.~\ref{cor:1c}. The last one follows from Property 4 of Def.~\ref{market}, 
price uniqueness, and additivity, since
$\pi(\gamma_1 - \gamma_2) > 0$ for $\gamma_1 - \gamma_2 \in\mathcal{M}_b^+\setminus\{o\}$.  
\end{proof}

Eq.~\eqref{gamma-pi(gamma)delta0} means that a cash flow $\gamma$ together with the immediate payment
of its price, $\pi(\gamma)\delta_0$, has price 0 under {\bf NA}.

\begin{corollary}[Change of numeraire]
\label{cor:Change of numeraire}
In an arbitrage-free market $(\mathcal{M}_b, \sim)$ with the {\bf NA} spot price $\pi$,
the {\bf NA} price $\pi_{\gamma_0}$ w.r.t.~any other numeraire $\gamma_0\in\mathcal{M}_b^+\setminus\{o\}$
is given by
\begin{equation}
\label{eq:Change of numeraire}
\pi_{\gamma_0}
=
\frac{\pi}{\pi(\gamma_0)} .
\end{equation}
\end{corollary}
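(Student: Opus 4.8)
The plan is to derive Corollary~\ref{cor:Change of numeraire} directly from the linear-characterization results already established, in particular Theorem~\ref{theo:pi_linear} and the general Theorem~\ref{pi_linearc}. First I would recall that under {\bf NA} the market $(\mathcal{M}_b,\sim)$ satisfies Assumptions~\ref{assu:relation}, \ref{assu:cone}, and \ref{assu:gamma-value} (as noted before Lemma~\ref{lemma:inversion}), so all of the abstract machinery of Section~\ref{Some results for additive relations on linear spaces} applies with $\mathcal{V}=\mathcal{M}_b$ and $\mathcal{V}^+=\mathcal{M}_b^+$. By Theorem~\ref{theo:pi_linear}, the {\bf NA} spot price $\pi$ is linear, strictly positive on $\mathcal{M}_b^+\setminus\{o\}$, and satisfies $\gamma_1\sim\gamma_2 \Leftrightarrow \pi(\gamma_1)=\pi(\gamma_2)$; in other words $\pi$ itself serves as the functional $\pi'$ in the statement of Theorem~\ref{pi_linearc} (indeed $\pi=\pi'/\pi'(\delta_0)$, so one may simply take $\pi'=\pi$).

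Next I would invoke the ``Furthermore'' clause of Theorem~\ref{pi_linearc}: once one of the two equivalent conditions holds, the property of Assumption~\ref{assu:gamma-value} holds for \emph{any} $\gamma_0\in\mathcal{M}_b^+\setminus\{o\}$, and moreover for any such $\gamma_0$ one has $\pi'/\pi'(\gamma_0)=\pi_{\gamma_0}$, the uniquely determined {\bf NA} value with respect to $\gamma_0$. Here $\pi'(\gamma_0)=\pi(\gamma_0)>0$ is nonzero precisely because $\gamma_0\in\mathcal{M}_b^+\setminus\{o\}$ and $\pi$ is strictly positive there, so the quotient $\pi/\pi(\gamma_0)$ is well defined. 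Substituting $\pi'=\pi$ gives exactly
\[
\pi_{\gamma_0}=\frac{\pi}{\pi(\gamma_0)},
\]
which is \eqref{eq:Change of numeraire}. One should also observe that $\pi_{\gamma_0}$ as used in the corollary is the same object as $\pi_{\gamma_0}$ of Proposition~\ref{LOP_theoc} / Definition~\ref{pi_defc} — the unique functional with $\gamma\sim\pi_{\gamma_0}(\gamma)\gamma_0$ — so no ambiguity arises.

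I do not anticipate a genuine obstacle here; the corollary is essentially a restatement of \eqref{eq:pipigamma0} in the language of markets, with $\gamma_0$ in the role of a general numeraire rather than $\delta_0$. The only point requiring a moment's care is the bookkeeping: confirming that $\pi$ (the spot price, normalized so that $\pi(\delta_0)=1$) can legitimately be fed into Theorem~\ref{pi_linearc} as the witnessing linear functional $\pi'$, and that $\pi(\gamma_0)\neq 0$ so that division is permitted — both of which follow immediately from strict positivity of $\pi$ on $\mathcal{M}_b^+\setminus\{o\}$. Hence the proof is a one- or two-line citation: apply Theorem~\ref{pi_linearc} (equivalently Theorem~\ref{theo:pi_linear}) with $\pi'=\pi$ and the given $\gamma_0$.
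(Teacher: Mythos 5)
Your proposal is correct and follows exactly the paper's route: the paper's proof is simply a citation of Theorem~\ref{pi_linearc}, whose final clause gives $\pi'/\pi'(\gamma_0)=\pi_{\gamma_0}$ for any $\gamma_0\in\mathcal{M}_b^+\setminus\{o\}$, applied with $\pi'=\pi$. Your additional bookkeeping (that $\pi$ qualifies as the witnessing functional and $\pi(\gamma_0)>0$) just makes explicit what the paper leaves implicit.
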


\begin{proof}
Theorem \ref{pi_linearc}.
\end{proof}


\section{No-arbitrage pricing relative to zero-coupon bonds}

\label{No-arbitrage pricing relative to zero-coupon bonds}

\begin{definition}[UZCB prices]
\label{P_t_def}
In an arbitrage-free market $(\mathcal{M}_b, \sim)$,
the by Def.~\ref{pi_def} and by Theo.~\ref{theo:pi_linear} uniquely defined 
strictly positive function
\begin{eqnarray}
\label{P_t}\index{$P_t$}
P_\cdot: \R^+_0 & \rightarrow & \R^+ \\
\nonumber
t & \mapsto & P_t = \pi(\delta_t)
\end{eqnarray}
maps any maturity $t$ to the (spot) price of the unit zero-coupon bond (UZCB) $\delta_t$.
\end{definition}

Similar to earlier, it is important to note that $P_t$ depends on $\sim$. 
In that sense
\begin{equation}
\label{pipisim_2}
P_t = P_{\sim,t} .
\end{equation}
Clearly, $P_0 = 1$. Formula \eqref{price1} can now be established for finite discrete time
cash flows.

\begin{proposition}[Choquet formula for finite discrete cash flows]
\label{Prop2}
In an arbitrage-free market $(\mathcal{M}_b, \sim)$, the price of a finite discrete time cash flow
$\gamma\in\mathcal{M}_b$
with $k=1,{\ldots},n$ payments $c_{t_k}\in\R$ at times $t_k\in\R^+_0$, i.e.
$\gamma = \sum_{k=1}^nc_{t_k}\delta_{t_k}$, is given by
\begin{equation}
\label{pi_3}
\pi(\gamma) 
= 
\sum_{k=1}^n c_{t_k}\pi(\delta_{t_k}) 
=
\sum_{k=1}^n c_{t_k}P_{t_k} 
= 
\int_0^\infty P_t\, d\gamma(t) .
\end{equation}
\end{proposition}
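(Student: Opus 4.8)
The plan is to combine the linearity of the no-arbitrage price functional $\pi$ (Theorem \ref{theo:pi_linear}) with the linearity of the integral against a finite sum of Dirac measures, so that both the middle and the right-hand expressions in \eqref{pi_3} reduce to the same finite linear combination of the numbers $P_{t_k}$. The three equalities are essentially bookkeeping once the right structural facts are cited.

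First I would establish the first equality $\pi(\gamma) = \sum_{k=1}^n c_{t_k}\pi(\delta_{t_k})$. Since $\gamma = \sum_{k=1}^n c_{t_k}\delta_{t_k}$ is a finite linear combination of elements of $\mathcal{M}_b$ with real coefficients, and since by Theorem \ref{theo:pi_linear} (equivalently, by \eqref{eq:pipigamma0b}) the functional $\pi$ is linear on $\mathcal{M}_b$, one applies linearity $n-1$ times (formally, a trivial induction on $n$) to pull the sum and the scalars out. Second, the equality $\sum_{k=1}^n c_{t_k}\pi(\delta_{t_k}) = \sum_{k=1}^n c_{t_k}P_{t_k}$ is immediate from Definition \ref{P_t_def}, which sets $P_t = \pi(\delta_t)$; this is pure notation.

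Third, for the equality $\sum_{k=1}^n c_{t_k}P_{t_k} = \int_0^\infty P_t\, d\gamma(t)$ I would invoke the linearity of the integral of a function in $C_b$ with respect to signed measures in $\mathcal{M}_b$, which was recorded as \eqref{2} in Section \ref{Preliminary remarks on measures and integration}: since $t\mapsto P_t$ need not be continuous or bounded a priori, the cleanest route is to note that integration against the finite discrete measure $\gamma = \sum_{k=1}^n c_{t_k}\delta_{t_k}$ is, by definition of the Lebesgue integral against Dirac measures (and the Hahn--Jordan split \eqref{1} separating positive and negative $c_{t_k}$), simply the evaluation $\int_0^\infty g\, d\delta_{t_k} = g(t_k)$ applied termwise, giving $\int_0^\infty P_t\, d\gamma(t) = \sum_{k=1}^n c_{t_k} P_{t_k}$ directly. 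This is elementary measure theory and requires no continuity hypothesis because the integrand is only ever evaluated at the finitely many atoms $t_1,\dots,t_n$, where $P_{t_k}$ is a well-defined positive real by Definition \ref{P_t_def}.

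The main obstacle, such as it is, is almost purely presentational rather than mathematical: one must be slightly careful that the integral $\int_0^\infty P_t\, d\gamma(t)$ is well-defined even though $P_\cdot$ has not been shown to lie in $C_b$ — this is resolved by the observation that a finite discrete measure's integral is a finite weighted sum of point evaluations, so no integrability issue arises. Everything else is an immediate consequence of the linearity already proven in Theorems \ref{theo:linearity} and \ref{theo:pi_linear} together with the defining relation $P_t = \pi(\delta_t)$.
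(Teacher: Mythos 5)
Your proof is correct and follows essentially the same route as the paper, which likewise combines the linearity of $\pi$ from Theorem \ref{theo:pi_linear} with the definition $P_t = \pi(\delta_t)$ of Definition \ref{P_t_def}; the integral equality is, as you note, just termwise evaluation against the Dirac atoms. Your extra remark that no continuity or boundedness of $P_\cdot$ is needed here is a fair observation but does not change the argument.
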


\begin{proof}
Def.~\ref{P_t_def} and Theo.~\ref{theo:pi_linear}.
\end{proof}

Note that \eqref{pi_3} is obviously the price that market participants incur when they replicate
the cash flow $\gamma$ using unit zero-coupon bonds. In the following,
the focus lies on the question whether the pricing formula
$\pi(\gamma) = \int_0^\infty P_t\, d\gamma(t)$ can also be applied to
more general cash flows in $\mathcal{M}_b$.
However, the assumptions so far allow no direct derivation, as this seems 
to demand stronger requirements on the price functional $\pi$. For instance,
it would seem rational to suppose
\begin{equation}
P_s \stackrel{s \rightarrow t}{\longrightarrow} P_t ,
\end{equation}
since, for a market participant, payments of identical size that are only an arbitrarily 
small time interval apart should essentially be indistinguishable. Hence, $P_\cdot: t \mapsto P_t$ 
is assumed to be a continuous real function on $\R^+_0$. Experience also shows that it
is reasonable to suppose boundedness.

\begin{assumption}[Continuous, bounded UZCB prices]
\label{P_t continuous}
\begin{equation}
P_\cdot \in C^{++}_b(\R_0^+) ,
\end{equation}
where $C^{++}_b(\R_0^+)$ denotes the strictly positive, bounded, continuous real functions
on $\R_0^+$.
\end{assumption}

Note that this assumption also means that the term structure of the yields of
zero-coupon bonds, that is $y_t = (1/P_t)^{1/t}-1$ for $t\geq 0$ (a.k.a.~the
{\em term structure of interest rates}), is continuous, but not necessarily bounded.

In this context it should be remarked that some central banks, 
for instance the German {\em Bundesbank} (e.g.~Deutsche Bundesbank, 1997),
but also the Swiss National Bank, offer yields of hypothetical zero-coupon bonds in the form 
of a parametrized smooth function (by means of the Svensson method in the
case of the Bundesbank; cf.~Schich, 1997), which, in turn, means that, even for practical purposes,
$P_\cdot$ can sometimes assumed to be given as a continuous function for maturities between, 
in the case of the Bundesbank, 3 months and up to 30 years. 
Assumption \ref{P_t continuous} is therefore fairly realistic.

\begin{assumption}[Average value requirement]
\label{mean value}
Given a market $(\mathcal{M}_b, \sim)$ for which {\bf NA} and Assumption \ref{P_t continuous}
apply, it holds for any $\gamma\in\mathcal{M}^+_b$ and for $0\leq s<t$ that
there exists some $b$ with
\begin{equation}
\min\{P_r: r\in [s,t]\} \; \leq \; b \; \leq \; \max\{P_r: r\in [s,t]\} 
\end{equation}
such that
\begin{equation}
\label{gammastsim}
\pi(\gamma_{(s,t]}) = b\gamma((s,t]) .
\end{equation}
\end{assumption}
This assumption requires for a non-negative cash flow on $(s,t]$ that its price relative
to its size can be expressed as a weighted average of the unit zero-coupon bond prices
on $[s,t]$. This is reasonable to assume, as it would make no financial sense if an
amount $\gamma((s,t])$ paid between $s$ and $t$ had a higher price than the amount times the highest
price of any unit zero-coupon bond that matures during this time interval.
The reason is that the times with the highest UZCB prices would indicate which maturities the market prefers
over others, so spreading the payment should lead to no increase in price. Similarly for the lower boundary.

It is now possible to extend the price formula to $\mathcal{M}_b([0,T])$, $T>0$, 
the bounded signed Borel measures on $[0, T]$ in the sense of \eqref{eq:trace measures}.

\begin{proposition}[Choquet formula for time-bounded cash flows]
\label{Prop3}
Given an arbitrage-free market $(\mathcal{M}_b(\R_0^+), \sim)$ for which 
Assumption \ref{P_t continuous} and \ref{mean value} apply, 
it holds for the uniquely determined no-arbitrage price $\pi(\gamma)$ of any
$\gamma\in\mathcal{M}_b([0,T])$ in this market that
\begin{equation}
\pi(\gamma) = \int_0^\infty P_t\, d\gamma(t) .
\end{equation}
\end{proposition}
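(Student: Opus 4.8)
The plan is to reduce the general signed measure case to the non-negative case via the Hahn--Jordan decomposition and linearity of $\pi$ (Theorem~\ref{theo:pi_linear}) and of the integral in $C_b$ \eqref{2}, and then to establish the formula for $\gamma\in\mathcal{M}_b^+([0,T])$ by a monotone approximation argument using Assumption~\ref{mean value}. First I would note that for $\gamma\in\mathcal{M}_b([0,T])$ with Hahn--Jordan decomposition $\gamma=\gamma^+-\gamma^-$ we have $\gamma^+,\gamma^-\in\mathcal{M}_b^+([0,T])$, so if the formula holds on the positive cone then $\pi(\gamma)=\pi(\gamma^+)-\pi(\gamma^-)=\int_0^\infty P_t\,d\gamma^+(t)-\int_0^\infty P_t\,d\gamma^-(t)=\int_0^\infty P_t\,d\gamma(t)$, the last step by \eqref{1}. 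Since $P_0=1$ and, by \eqref{Stieltjes}, $\int_0^\infty P_t\,d\gamma(t)=P_0\gamma(\{0\})+\int_{(0,T]}P_t\,d\gamma(t)=\gamma(\{0\})\delta_0$-part plus the rest, and since $\pi(\gamma_{\{0\}})=\gamma(\{0\})P_0$ by Proposition~\ref{Prop2}, it suffices by linearity to treat $\gamma\in\mathcal{M}_b^+((0,T])$, i.e.\ measures with no mass at $0$.

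For such a $\gamma$, the core step is a dyadic partition argument. Fix $n$ and partition $(0,T]$ into the $2^n$ half-open intervals $I_{n,k}=(\,(k-1)T/2^n,\;kT/2^n\,]$, $k=1,\dots,2^n$. Then $\gamma=\sum_{k=1}^{2^n}\gamma_{I_{n,k}}$ as a finite sum of non-negative trace measures, so by additivity of $\pi$, $\pi(\gamma)=\sum_{k=1}^{2^n}\pi(\gamma_{I_{n,k}})$. By Assumption~\ref{mean value} applied to each $I_{n,k}=(s_{n,k},t_{n,k}]$, there is $b_{n,k}$ with $\min_{r\in[s_{n,k},t_{n,k}]}P_r\le b_{n,k}\le \max_{r\in[s_{n,k},t_{n,k}]}P_r$ and $\pi(\gamma_{I_{n,k}})=b_{n,k}\,\gamma(I_{n,k})$. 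Hence $\pi(\gamma)=\sum_{k=1}^{2^n}b_{n,k}\,\gamma(I_{n,k})$, which is exactly a Riemann--Stieltjes-type sum. I would then compare this with $\int_{(0,T]}P_t\,d\gamma(t)=\sum_{k=1}^{2^n}\int_{I_{n,k}}P_t\,d\gamma(t)$ and bound the difference:
\begin{equation}
\Bigl|\pi(\gamma)-\int_{(0,T]}P_t\,d\gamma(t)\Bigr|
\;\le\;
\sum_{k=1}^{2^n}\int_{I_{n,k}}|b_{n,k}-P_t|\,d\gamma(t)
\;\le\;
\omega_P\!\bigl(T/2^n\bigr)\,\gamma((0,T]),
\end{equation}
where $\omega_P$ is the modulus of continuity of $P_\cdot$ on the compact interval $[0,T]$, since both $b_{n,k}$ and $P_t$ for $t\in I_{n,k}$ lie within $\omega_P(T/2^n)$ of, say, $P_{t_{n,k}}$.

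The final step is to let $n\to\infty$: $P_\cdot$ is continuous on $[0,T]$ by Assumption~\ref{P_t continuous}, hence uniformly continuous there, so $\omega_P(T/2^n)\to 0$, and $\gamma((0,T])<\infty$ because $\gamma\in\mathcal{M}_b^+$. Therefore the right-hand side tends to $0$, the left-hand side is independent of $n$, and we conclude $\pi(\gamma)=\int_{(0,T]}P_t\,d\gamma(t)=\int_0^\infty P_t\,d\gamma(t)$ for $\gamma\in\mathcal{M}_b^+((0,T])$. Combining with the $\delta_0$-mass term and the Hahn--Jordan reduction gives the result for all $\gamma\in\mathcal{M}_b([0,T])$. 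The main obstacle I anticipate is purely bookkeeping: making sure Assumption~\ref{mean value} is being invoked on exactly the right trace measures $\gamma_{(s,t]}$ (those with no atom at the left endpoint, consistent with the half-open convention), and that the telescoping/additivity of $\pi$ over the finite partition is justified by Theorem~\ref{theo:pi_linear} rather than requiring any countable additivity of $\pi$ — which is precisely why a finite dyadic partition plus a uniform-continuity estimate is used instead of a direct $\sigma$-additivity argument.
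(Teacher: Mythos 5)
Your proposal is correct and follows essentially the same route as the paper: partition $[0,T]$ into half-open intervals, apply Assumption \ref{mean value} on each piece, use linearity of $\pi$ from Theorem \ref{theo:pi_linear}, exploit continuity of $P_\cdot$, and reduce the signed case via Hahn--Jordan and \eqref{1}. The only (harmless) differences are cosmetic: you split off the atom at $0$ before partitioning and pass to the limit with a direct modulus-of-continuity estimate against the Lebesgue integral over dyadic partitions, whereas the paper keeps $\{0\}$ in the partition and squeezes the sum between Darboux sums of the Riemann--Stieltjes integral $\int_0^T P_t\,dF_\gamma(t)$ via \eqref{Stieltjes}.
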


\begin{proof}
First, let $\gamma\in\mathcal{M}^+_b([0,T])$.
Consider a finite partition $\mathcal{P}$ of $[0,T]$ which is given by
\begin{equation}
[0,T] = \{t_0\} \cup \bigcup_{i=0}^{n-1} (t_i,t_{i+1}]
\end{equation}
where $0=t_0<t_1<{\ldots}<t_n=T$. Now,
\begin{equation}
\gamma 
= 
\gamma(\{0\})\delta_0 + \sum_{i=0}^{n-1} \gamma_{(t_i,t_{i+1}]} .
\end{equation}
From Assumption \ref{mean value},
\begin{equation}
\pi(\gamma_{(t_i,t_{i+1}]}) = b_i\gamma((t_i,t_{i+1}])
\end{equation}
for some $b_i$ with
\begin{equation}
\label{LowUp}
\min\{P_r: r\in [t_i,t_{i+1}]\} \; \leq \; b_i \; \leq \; \max\{P_r: r\in [t_i,t_{i+1}]\} .
\end{equation}
By Theo.~\ref{theo:pi_linear} and Eq.~\eqref{muFF},
\begin{equation}
\label{LowUp2}
\pi(\gamma) 
= 
P_0\gamma(\{0\}) + \sum_{i=0}^{n-1} b_i\gamma((t_i,t_{i+1}])
=
P_0F_\gamma(0) + \sum_{i=0}^{n-1} b_i\left[F_\gamma(t_{i+1}) - F_\gamma(t_i)\right] .
\end{equation}
Because \eqref{LowUp} and \eqref{LowUp2} hold for any finite partition $\mathcal{P}$ of $[0,T]$,
this means that the second summand ($\sum_{i=0}^{n-1} \ldots$) in \eqref{LowUp2} lies between 
the upper and the lower Darboux sum for any such partition. By \eqref{Stieltjes}, and since
$P_t$ is a continuous function by Assumption \ref{P_t continuous},
\begin{equation}
\pi(\gamma)
= 
P_0F_\gamma(0) 
+
\int_0^T P_t\, dF_\gamma(t)
= 
\int_0^\infty P_t\, d\gamma(t) .
\end{equation}
Considering the Hahn-Jordan decomposition of any $\gamma\in\mathcal{M}_b([0,T])$,
the assertion now follows by linearity of $\pi$ together with \eqref{1}.
\end{proof}

A last assumption needs to be made for the final result. 

\begin{assumption}[$\sigma$-additivity of $\sim$ on $\mathcal{M}^+_b$]
\label{sigma-additivity}
For $k=1,2,\ldots$ let $\beta_k, \gamma_k\in\mathcal{M}_b^+$, let
$\beta_k \sim \gamma_k$, and let
$\sum_{k=1}^\infty \beta_k, \sum_{k=1}^\infty \gamma_k\in\mathcal{M}_b^+$. Then,
\begin{equation}
\sum_{k=1}^{\infty} \gamma_k 
\sim 
\sum_{k=1}^\infty \beta_k .
\end{equation}
\end{assumption}

\begin{lemma}[$\sigma$-additivity of $\pi$ on $\mathcal{M}^+_b$]
\label{lemma:sigma-additivity}
Under Assumption \ref{sigma-additivity}, let $\pi$ be a {\bf NA} price in a market $(\mathcal{M}_b, \sim)$.
For $k=1,2,\ldots$ let $\gamma_k\in\mathcal{M}_b^+$ and 
$\sum_{k=1}^{\infty} \gamma_k\in\mathcal{M}_b^+$. Then,
\begin{equation}
\pi\left(\sum_{k=1}^{\infty} \gamma_k\right)
=
\sum_{k=1}^{\infty} \pi(\gamma_k) .
\end{equation}
\end{lemma}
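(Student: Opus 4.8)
Write $\gamma := \sum_{k=1}^{\infty}\gamma_k$, which by hypothesis lies in $\mathcal{M}_b^+$. The plan is in two stages: first establish that the series $\sum_{k=1}^{\infty}\pi(\gamma_k)$ converges, with sum $L\le\pi(\gamma)$; then use Assumption \ref{sigma-additivity} to upgrade this inequality to equality.

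For the first stage I would record two facts that are already available. Since each $\gamma_k\in\mathcal{M}_b^+$, Corollary \ref{cor:1} (together with $\pi(o)=0$) gives $\pi(\gamma_k)\ge 0$. Next, for the partial sums $s_n:=\sum_{k=1}^{n}\gamma_k$ one has $\gamma-s_n\in\mathcal{M}_b^+$, because $(\gamma-s_n)(B)=\sum_{k>n}\gamma_k(B)\ge 0$ for every $B\in\mathcal{B}$; hence Corollary \ref{cor:1} yields $\pi(s_n)\le\pi(\gamma)$ (with equality if $\gamma-s_n=o$, strict inequality otherwise). By linearity of $\pi$ (Theorem \ref{theo:linearity}) one has $\pi(s_n)=\sum_{k=1}^{n}\pi(\gamma_k)$, so the partial sums of $\sum_k\pi(\gamma_k)$ are nondecreasing and bounded above by $\pi(\gamma)<\infty$. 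Therefore $L:=\sum_{k=1}^{\infty}\pi(\gamma_k)$ exists in $[0,\pi(\gamma)]$.

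For the second stage I would apply Assumption \ref{sigma-additivity} with $\beta_k:=\pi(\gamma_k)\delta_0$. Each $\beta_k\in\mathcal{M}_b^+$ since $\pi(\gamma_k)\ge 0$, one has $\beta_k\sim\gamma_k$ by the definition of the {\bf NA} price, and $\sum_{k=1}^{\infty}\beta_k=L\delta_0\in\mathcal{M}_b^+$ precisely because $L<\infty$. Assumption \ref{sigma-additivity} then gives $\gamma=\sum_k\gamma_k\sim\sum_k\beta_k=L\delta_0$, so $L$ is a price of $\gamma$; by uniqueness of the {\bf NA} price (Theorem \ref{LOP_theo}) this forces $\pi(\gamma)=L=\sum_{k=1}^{\infty}\pi(\gamma_k)$, which is the claim.

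The only genuinely delicate point — and the step I expect to require the most care — is the finiteness of $L$, i.e.\ that $\sum_k\pi(\gamma_k)$ converges at all: without it the measure $\sum_k\beta_k$ need not be finite and Assumption \ref{sigma-additivity} would be inapplicable. This is exactly where monotonicity of $\pi$ on the positive cone (Corollary \ref{cor:1}) is indispensable, since it bounds every partial sum $\pi(s_n)$ by $\pi(\gamma)$. Everything else — nonnegativity of $\pi$ on $\mathcal{M}_b^+$, finite additivity, and uniqueness of prices — is routine given the earlier results.
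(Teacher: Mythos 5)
Your proof is correct and follows essentially the same route as the paper: bound the partial sums $\sum_{k=1}^{n}\pi(\gamma_k)=\pi\bigl(\sum_{k=1}^{n}\gamma_k\bigr)$ by $\pi(\gamma)$ via monotonicity and linearity to get convergence, then apply Assumption \ref{sigma-additivity} with $\beta_k=\pi(\gamma_k)\delta_0$ and conclude by uniqueness of the {\bf NA} price. The point you flag as delicate (finiteness of the sum, needed so that $\sum_k\beta_k\in\mathcal{M}_b^+$) is exactly the step the paper's proof isolates as well.
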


\begin{proof}
Because of Proposition \ref{price existence}, additivity, and Corollary \ref{cor:1},
it holds for all natural $n \geq 1$ that
\begin{equation}
0 
\leq 
\sum_{k=1}^{n} \pi(\gamma_k)
=
\pi\left(\sum_{k=1}^{n} \gamma_k\right)
\leq
\pi\left(\sum_{k=1}^{\infty} \gamma_k\right) 
<
\infty ,
\end{equation}
which implies $\sum_{k=1}^{\infty} \pi(\gamma_k) < \infty$,
and thus $\sum_{k=1}^{\infty} \pi(\gamma_k)\delta_0\in\mathcal{M}_b^+$.
Since $\gamma_k\sim\pi(\gamma_k)\delta_0$ for all $k$,
Assumption \ref{sigma-additivity} and uniqueness of the price prove the claim by
\begin{equation}
\sum_{k=1}^{\infty} \gamma_k
\sim
\sum_{k=1}^{\infty} \pi(\gamma_k)\delta_0 .
\qedhere
\end{equation}
\end{proof}

By Lemma \ref{lemma:sigma-additivity}, Assumption \ref{sigma-additivity} implies that a pricing principle
must also avoid large number arbitrage, i.e.~it cannot be that the price of a countable
sum of cash flows is different from the countable sum of the prices
of the individual cash flows. For instance, selling a cash flow and hedging it (in the limit) by
immediately, but individually, buying more and more fractions of it, will not create an arbitrage 
in the limit. 
Similar as in the case of the Law of One Price, it is somewhat difficult to draw
parallels to the stochastic theory of large number arbitrage,
where sequences of financial markets with an increasing number of tradeable assets are considered
(e.g.~Kabanov and Kramkov, 1998), and where, quite naturally, it is as well of interest to examine
the consequences of the exclusion of such arbitrage opportunities (e.g.~Klein, 2000). 

\begin{theorem}[Choquet formula for general cash flows]
\label{Theo3}
Given an arbitrage-free market $(\mathcal{M}_b, \sim)$ for which 
Assumption \ref{P_t continuous}, \ref{mean value}, and \ref{sigma-additivity}
apply, it holds for the uniquely determined no-arbitrage price $\pi(\gamma)$ 
of any $\gamma\in\mathcal{M}_b$ that
\begin{equation}
\label{main}
\pi(\gamma) = \int_0^\infty \pi(\delta_t)\, d\gamma(t) = \int_0^\infty P_t\, d\gamma(t) .
\end{equation}
\end{theorem}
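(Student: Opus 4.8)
The plan is to reduce the general case to the time-bounded case already established in Proposition \ref{Prop3} by exhausting the time axis and invoking $\sigma$-additivity. First I would treat a non-negative cash flow $\gamma\in\mathcal{M}_b^+$ and split it along the intervals $[0,1]$ and $(n,n+1]$ for $n=1,2,\ldots$; concretely, set $\gamma_0 = \gamma_{[0,1]}$ and $\gamma_n = \gamma_{(n,n+1]}$, so that $\gamma = \sum_{k=0}^\infty \gamma_k$ with each $\gamma_k\in\mathcal{M}_b^+$ having support in a bounded interval, hence in $\mathcal{M}_b([0,n+1])$ for suitable $n$. Each $\gamma_k$ can be shifted to a measure on a compact interval, but more directly, Proposition \ref{Prop3} applies verbatim to measures in $\mathcal{M}_b([0,T])$ with $T = k+1$, so $\pi(\gamma_k) = \int_0^\infty P_t\,d\gamma_k(t)$ for every $k$.

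Next I would apply Lemma \ref{lemma:sigma-additivity}, which is available because Assumption \ref{sigma-additivity} is in force, to conclude $\pi(\gamma) = \sum_{k=0}^\infty \pi(\gamma_k)$. Then I combine this with the $\sigma$-additivity of the integral established in Section \ref{Preliminary remarks on measures and integration} (equation \eqref{4}): since $P_\cdot$ is non-negative, continuous and bounded by Assumption \ref{P_t continuous}, and $\gamma = \sum_{k=0}^\infty \gamma_k$ in $\mathcal{M}_b^+$, one has $\int_0^\infty P_t\,d\gamma(t) = \sum_{k=0}^\infty \int_0^\infty P_t\,d\gamma_k(t)$. Chaining these equalities gives $\pi(\gamma) = \int_0^\infty P_t\,d\gamma(t)$ for all $\gamma\in\mathcal{M}_b^+$.

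Finally, for a general signed $\gamma\in\mathcal{M}_b$ I would invoke the Hahn-Jordan decomposition $\gamma = \gamma^+ - \gamma^-$ with $\gamma^+,\gamma^-\in\mathcal{M}_b^+$, apply the non-negative case to each part, and use linearity of $\pi$ (from Theorem \ref{theo:pi_linear}, or Corollary \ref{cor:1}) together with linearity of the integral on $C_b$ with respect to signed measures (equation \eqref{2}) to obtain $\pi(\gamma) = \pi(\gamma^+) - \pi(\gamma^-) = \int_0^\infty P_t\,d\gamma^+(t) - \int_0^\infty P_t\,d\gamma^-(t) = \int_0^\infty P_t\,d\gamma(t)$. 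Since $P_t = \pi(\delta_t)$ by Definition \ref{P_t_def}, this is exactly \eqref{main}.

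The only delicate point is the very first reduction: I must make sure Proposition \ref{Prop3} really applies to each piece $\gamma_k$. Its statement is phrased for $\gamma\in\mathcal{M}_b([0,T])$, i.e.\ measures whose support sits in $[0,T]$, and $\gamma_{(n,n+1]}$ indeed has support in $[0,n+1]$, so this is immediate — no shifting argument is actually needed, and the continuity/average-value assumptions are global hypotheses that carry over. Thus the main obstacle is not conceptual but bookkeeping: keeping the bounded-support decomposition, Lemma \ref{lemma:sigma-additivity}, and equation \eqref{4} aligned so that the two series (of prices and of integrals) are termwise equal and both converge, which follows because $\gamma(\R_0^+)<\infty$ and $P_\cdot$ is bounded.
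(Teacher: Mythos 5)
Your proposal is correct and follows essentially the same route as the paper's proof: decompose a non-negative cash flow into pieces of bounded support (the paper uses $\gamma(\{0\})\delta_0$ plus the traces on $(k-1,k]$, you use $\gamma_{[0,1]}$ plus the traces on $(n,n+1]$, which is an immaterial difference), apply Proposition \ref{Prop3} to each piece, combine via Lemma \ref{lemma:sigma-additivity} and the integral $\sigma$-additivity \eqref{4}, and finish the signed case with the Hahn-Jordan decomposition, linearity of $\pi$, and \eqref{1}. No gaps.
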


\begin{proof}
First, let $\gamma\in\mathcal{M}^+_b$, and observe that
\begin{equation}
\gamma 
= 
\gamma(\{0\})\delta_0 + \sum_{k=1}^\infty \gamma_{(k-1,k]} ,
\end{equation}
where
$$
\pi(\gamma_{(k-1,k]}) 
=
\int_0^\infty P_t\, d\gamma_{(k-1,k]}(t) \quad \text{for } k=1,2,\ldots
$$
by Proposition \ref{Prop3}, since $\gamma_{(k-1,k]}\in\mathcal{M}^+_b([0,k])$.
Lemma \ref{lemma:sigma-additivity} and \eqref{4} now imply
\begin{eqnarray}
\pi(\gamma)
& = &
\pi(\gamma(\{0\})\delta_0) 
+
\sum_{k=1}^\infty \pi(\gamma_{(k-1,k]}) \\
\nonumber
& = &
P_0\gamma(\{0\}) 
+
\sum_{k=1}^\infty \int_0^\infty P_t\, d\gamma_{(k-1,k]}(t) 
\; = \;
\int_0^\infty P_t\, d\gamma(t) .
\end{eqnarray}
The result for signed measures again follows from the
consideration of the Hahn-Jordan decomposition of any $\gamma\in\mathcal{M}_b$,
together with the linearity of $\pi$ and \eqref{1}.
\end{proof}

Choquet's Theorem (for a textbook reference see Phelps, 2001) 
states that, for a compact convex subset $\mathcal{S}$ of a normed 
vector space $\mathcal{V}$ and for any $s\in \mathcal{S}$, there exists a probability measure $\mu_s$ on
the extreme points $\mathcal{E}\subset\mathcal{S}$ such that for any linear functional $\pi$ on $\mathcal{S}$
\begin{equation}
\label{6}
\pi(s) = \int_{\mathcal{E}} \pi(e) d\mu_s(e) .
\end{equation}
A generalization to not necessarily metrizable spaces exists (Bishop and de Leeuw, 1959).
The relation to Theorem \ref{Theo3} becomes clearer if $\pi$ is considered
on $M^+_1(\R_0^+)$, the set of probability measures on $(\R_0^+,\mathcal{B})$,
which is convex, and if it is recalled that the Dirac measures $\mathcal{D}(\R_0^+)$ 
are the extreme points of $\mathcal{M}^+_1(\R_0^+)$. One therefore has in \eqref{main} the special
case of \eqref{6} where $s  = \mu_s = \gamma$, and by identification of $r$ and $\delta_r$, the set of
extreme points $\mathcal{E}=\mathcal{D}(\R_0^+)$ is equipped with the Borel $\sigma$-algebra
on $\mathbb{R}_0^+$. While an obvious norm on $\mathcal{M}_b$ would be given by
\begin{equation}
||\gamma|| 
= 
\gamma^+(\mathbb{R}_0^+) + \gamma^-(\mathbb{R}_0^+) ,
\end{equation}
$\mathcal{M}^+_1(\R_0^+)$ is not compact in the topology belonging to this norm since
$||\delta_t - \delta_s|| = 2$ for all $s\neq t$, $s,t\geq 0$, and therefore no sequence of Dirac measures
contains an accumulation point. However, although the requirements for Choquet's Theorem are
not fulfilled here, one has for any $\gamma\in\mathcal{M}_b$ that
\begin{equation}
\label{Choquet representation}
\gamma
= 
\int_{0}^\infty \delta_t d\gamma
\end{equation}
since, for any Borel set $B$,
\begin{equation}
\gamma(B) 
=
\int_B d\gamma
=
\int_{0}^\infty \mathbf{1}_B(t) d\gamma(t)
= 
\int_{0}^\infty \delta_t(B) d\gamma(t) .
\end{equation}
So, quite naturally, any such $\gamma$ is its own Choquet representation with respect to the Dirac measures,
and a formula of type \eqref{main} for a linear price functional should not surprise, even if counterexamples 
exist (see Sec.~\ref{Counterexamples}).


\section{Examples}

\begin{proposition}[Existence of {\bf NA}-markets]
\label{existence}
Let $f\in C^{++}_b(\R_0^+)$ be such that $f(0)=1$. Define for any $\gamma\in\mathcal{M}_b$
\begin{equation}
\label{41}
\pi'(\gamma) = \int_0^\infty f(t)\, d\gamma(t) ,
\end{equation}
and define the relation $\sim$ on $\mathcal{M}_b$ by
\begin{equation}
\label{42}
\gamma_1 \sim \gamma_2 
\quad \Leftrightarrow \quad
\pi'(\gamma_1) = \pi'(\gamma_2) 
\end{equation}
for any $\gamma_1, \gamma_2\in\mathcal{M}_b$.
Then, $(\mathcal{M}_b, \sim)$ is a market for deterministic cash flows as in Def.~\ref{market}
which is free of arbitrage, and the Assumptions \ref{P_t continuous}, \ref{mean value},
and \ref{sigma-additivity} hold. The uniquely determined no-arbitrage price is given
by $\pi=\pi'$, where $P_t = f(t)$.
\end{proposition}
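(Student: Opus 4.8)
The plan is to verify, in order, the list of claims in the proposition: that $(\mathcal{M}_b,\sim)$ is a market in the sense of Definition~\ref{market}, that it is arbitrage-free, that Assumptions~\ref{P_t continuous}, \ref{mean value}, and \ref{sigma-additivity} hold, and finally that $\pi=\pi'$ with $P_t=f(t)$. The decisive observation is that $\pi'$ as defined in \eqref{41} is a linear functional on $\mathcal{M}_b$ by \eqref{2}, and that it is strictly positive on $\mathcal{M}_b^+\setminus\{o\}$: indeed, for $\gamma\in\mathcal{M}_b^+\setminus\{o\}$ one has $\gamma(B)>0$ for some Borel set $B$, hence $\gamma(\R_0^+)>0$, and since $f$ is bounded below by some $\varepsilon>0$ on $\R_0^+$ (being strictly positive and continuous — here one should note $f\in C^{++}_b$ gives $\inf f>0$; if not, a short argument via $f(0)=1$ and boundedness is needed, but the paper's notation $C^{++}_b$ is intended to carry this), $\pi'(\gamma)\ge \varepsilon\,\gamma(\R_0^+)>0$. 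Once this is in hand, Theorem~\ref{pi_linearc} (equivalently Theorem~\ref{theo:pi_linear}) applies verbatim with $\mathcal{V}=\mathcal{M}_b$ and the cone $\mathcal{M}_b^+$: the existence of such a $\pi'$ is equivalent to Assumption~\ref{assu:relation}, Assumption~\ref{assu:gamma-value}, and {\bf NA} holding for $(\mathcal{M}_b,\sim)$, so the market axioms and no-arbitrage are immediate. Moreover that theorem gives $\pi=\pi'/\pi'(\delta_0)$, and since $\pi'(\delta_0)=\int_0^\infty f\,d\delta_0=f(0)=1$, we get $\pi=\pi'$ and hence $P_t=\pi(\delta_t)=\int_0^\infty f\,d\delta_t=f(t)$.

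It then remains to check the three additional assumptions for this concrete market. For Assumption~\ref{P_t continuous}: $P_\cdot=f\in C^{++}_b(\R_0^+)$ holds by hypothesis, so there is nothing to prove. For Assumption~\ref{mean value}: given $\gamma\in\mathcal{M}_b^+$ and $0\le s<t$, if $\gamma((s,t])=0$ then $\pi(\gamma_{(s,t]})=\int_{(s,t]}f\,d\gamma=0=b\cdot 0$ for any admissible $b$; if $\gamma((s,t])>0$, set $b=\pi(\gamma_{(s,t]})/\gamma((s,t])=\bigl(\int_{(s,t]}f\,d\gamma\bigr)/\gamma((s,t])$, which is a weighted average of the values $f(r)$ over $r\in(s,t]$ and therefore lies between $\min\{P_r:r\in[s,t]\}$ and $\max\{P_r:r\in[s,t]\}$ — these min and max exist since $f$ is continuous on the compact interval $[s,t]$. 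This $b$ satisfies \eqref{gammastsim} by construction. For Assumption~\ref{sigma-additivity}: if $\beta_k\sim\gamma_k$, i.e. $\pi'(\beta_k)=\pi'(\gamma_k)$, and the two series converge in $\mathcal{M}_b^+$, then by the $\sigma$-additivity of the integral \eqref{4} applied to the nonnegative continuous bounded function $f$ (noting $\sum\beta_k,\sum\gamma_k\in\mathcal{M}_b^+$ so the hypotheses of \eqref{4} are met, decomposing over nonnegative pieces), $\pi'(\sum_k\gamma_k)=\sum_k\pi'(\gamma_k)=\sum_k\pi'(\beta_k)=\pi'(\sum_k\beta_k)$, whence $\sum_k\gamma_k\sim\sum_k\beta_k$ by \eqref{42}.

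I expect the only real subtlety to be the strict positivity of $\pi'$ on $\mathcal{M}_b^+\setminus\{o\}$ via a uniform lower bound on $f$. If $C^{++}_b$ is read literally as merely "strictly positive, bounded, continuous," a function like $f(t)=(1+t)^{-1}$ has infimum $0$, yet $\pi'$ is still strictly positive on $\mathcal{M}_b^+\setminus\{o\}$: for nonzero $\gamma\in\mathcal{M}_b^+$ pick a compact interval $[0,N]$ with $\gamma([0,N])>0$, and then $\pi'(\gamma)\ge\bigl(\min_{[0,N]}f\bigr)\gamma([0,N])>0$ since $\min_{[0,N]}f>0$ by continuity and strict positivity on the compact set $[0,N]$. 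So the argument goes through without a global lower bound; this is the one place where a line of care is warranted rather than a bare citation. Everything else is a direct application of Theorem~\ref{pi_linearc} together with the elementary properties of the integral \eqref{2} and \eqref{4} established in Section~\ref{Preliminary remarks on measures and integration}.
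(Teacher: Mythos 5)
Your proposal is correct and follows essentially the same route as the paper: linearity and strict positivity of $\pi'$ feed into Theorem~\ref{theo:pi_linear} to get the arbitrage-free market with $\pi=\pi'$ and $P_t=f(t)$, Assumptions~\ref{P_t continuous} and \ref{mean value} are checked directly, and Assumption~\ref{sigma-additivity} follows from the $\sigma$-additivity of the integral \eqref{4}. Your extra care about strict positivity of $\pi'$ without a uniform lower bound on $f$ (via a compact interval carrying positive $\gamma$-mass) is a correct filling-in of a detail the paper leaves implicit, not a different approach.
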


\begin{proof}
By \eqref{2}, $\pi'$ is linear on $\mathcal{M}_b$ and $\pi'(\gamma)>0$ for $\gamma\in\mathcal{M}^+_b\setminus\{o\}$. Furthermore,  $\pi'(\delta_0) = f(0) = 1$.
Theorem \ref{Theo4} therefore implies that a {\bf NA} market is defined by \eqref{42}, and 
that $\pi = \pi'$ and $P_t = f(t)$. 
The properties stated in the Assumptions \ref{P_t continuous}
and \ref{mean value} follow immediately.
For $k=1,2,\ldots$ let $\beta_k, \gamma_k\in\mathcal{M}_b^+$, let
$\beta_k \sim \gamma_k$, and let
$\beta=\sum_{k=1}^\infty \beta_k, \gamma=\sum_{k=1}^\infty \gamma_k\in\mathcal{M}_b^+$. 
It  follows with \eqref{4} that
\begin{eqnarray}
\pi(\beta) 
& = &
\int_0^\infty f\, d\beta
=
\sum_{k=1}^{\infty} \int_0^\infty f\, d\beta_k 
=
\sum_{k=1}^\infty \pi(\beta_k) 
<
\infty ,
\end{eqnarray}
and, similarly, 
$\pi(\gamma) 
=
\sum_{k=1}^\infty \pi(\gamma_k) 
<
\infty$,
where all $\pi(\beta_k)=\pi(\gamma_k) \geq 0$. Therefore, $\pi(\beta) = \pi(\gamma)$, 
and Assumption \ref{sigma-additivity} follows with \eqref{42}, since $\beta\sim\gamma$.
\end{proof}

With Proposition \ref{existence}, it is now clear, how an arbitrage-free market for deterministic cash flows, 
which exhibits the uniquely determined price functional \eqref{main}, can be constructed. For instance, one could
set up an environment with a constant effective annual rate of interest $i\geq 0$ by assuming
\begin{equation}
f(t) = (1+i)^{-t} ,\quad t \geq 0 ,
\end{equation}
where $t$ is measured in years. Under this term structure, the {\bf NA} price given by \eqref{main}
is consistent with the standard formulae for so-called present values of typical cash flows which can
be found in the actuarial literature.

For a very simple discrete example, consider the price of an annuity that pays annually one currency 
unit in arrears for $n$ years, i.e. the present value $a_{\yr{n}}$ of
\begin{equation}
\gamma = \sum_{k=1}^n \delta_k .
\end{equation}
Using international actuarial notation (with $v = 1/(1+i))$, \eqref{main} yields the standard result
\begin{equation}
a_{\yr{n}} 
= 
\pi(\gamma) = \sum_{k=1}^n P_k = \sum_{k=1}^n v^k = \frac{1-v^n}{i} ,
\end{equation}
(Gerber (1997), p.~9, McCutcheon and Scott (1986), p.~45). 
A time-continuously payable annuity of length $n$, where
$\gamma = \lambda_{[0,n]}$,
with $\lambda_{[0,n]}$ the trace Lebesgue measure on $[0,n]$, 
results in the well-known present value of
\begin{equation}
\label{cont_ann2}
\overline{a}_{\yr{n}} 
= 
\pi(\lambda_{[0,n]}) 
= 
\int_0^n P_t\, dt = \frac{1-v^n}{\log(1+i)} ,
\end{equation}
since $P_t = v^t = \exp(-t\cdot \log(1+i))$ (p.~51 in McCutcheon and Scott, 1986).
If a cash flow was given as a combination $\gamma = \gamma_1 + \gamma_2$, where 
\begin{equation}
\gamma_1 = \sum_{k=1}^n c_{t_k}\delta_{t_k} , \quad
c_{t_k}\in\mathbb{R} \text{ for all } k=1,\ldots,n ,
\end{equation}
and where $\gamma_2$ was given in terms of a signed Radon-Nikodym density $\rho$ w.r.t. the
Lebesgue measure $\lambda$, i.e.
\begin{equation}
\gamma_2(B) = \int_B \rho(t) d\lambda(t) \quad \text{for all } B\in\mathcal{B} ,
\end{equation}
where it can be assumed that $\rho$ has none, or only finitely many, discontinuities, then,
because of
\begin{equation}
\pi(\gamma_2) 
=
\int_0^\infty P_t d\gamma_2
=
\int_0^\infty P_t\rho(t) d\lambda(t)
=
\int_0^\infty \rho(t)P_t dt ,
\end{equation}
\eqref{main} turns into 
\begin{equation}
\pi(\gamma) 
=
\pi(\gamma_1) + \pi(\gamma_2) 
=
\sum_{k=1}^n c_{t_k}P_{t_k} 
+
\int_0^\infty \rho(t)P_t dt .
\end{equation}
Again, this formula can be found in textbooks, e.g.~McCutcheon and Scott (1986), p.~22, 
in the case of a flat term structure with the notation $v(t) = (1+i)^{-t}= P_t$.


\section{Counterexamples} 

\label{Counterexamples} 

This section describes an example of an arbitrage-free market, where price formula \eqref{main}
does not hold. This is mainly achieved by dropping Assumption \ref{mean value},
but first some technical preliminaries are needed.

Any measure $\mu\in\mathcal{M}_b^+(\R_0^+)$ has a unique so-called 
Lebesgue decomposition
\begin{equation}
\label{gammagammagamma}
\mu = \ol\mu + \hat\mu ,
\end{equation}
with $\ol\mu\in\mathcal{M}_b^+$ a measure continuous w.r.t.~$\lambda=\lambda_{\R_0^+}$, 
i.e.~there exists a Radon-Nikodym derivative, $\frac{d\ol\mu}{d\lambda}$,
and with $\hat\mu\in\mathcal{M}_b^+$ orthogonal to $\lambda$,
in the sense that there exists a Lebesgue nullset
$N\in\mathcal{B}$, i.e.~$\lambda(N)=0$, such that $\hat\mu(\R_0^+\setminus N)=0$.
For $\mu\in\mathcal{M}_b(\R_0^+)$ and $\mu = \mu^+ - \mu^-$ with 
$\mu^+,\mu^-\in\mathcal{M}_b^+$, one obtains
\begin{eqnarray}
\label{gammagammagammagammagamma}
\mu 
& = &
(\ol\mu^+ - \ol\mu^-) + (\hat\mu^+  - \hat\mu^-) \\
\nonumber
& = &
\ol\mu + \hat\mu ,
\end{eqnarray}
where $\ol\mu$ and $\hat\mu$ have the same properties as before, i.e.~there is a signed 
Radon-Nikodym derivative for $\ol\mu$ w.r.t.~$\lambda$, and there exists 
a Lebesgue nullset $N\in\mathcal{B}$ (as the union of the corresponding nullsets for 
$\mu^+$  and $\mu^-$) 
such that for any Borel set $B\in\mathcal{B}$ with $B\subset\R_0^+\setminus N$
one has $\hat\mu(B)=0$.
In the proposition below, it is shown that this decomposition exists for signed measures and
is unique as well. 
Note that the uniqueness in the signed case is possibly not entirely obvious, since a signed measure
can have many representations as a difference of two non-negative measures, of which
the Hahn-Jordan decomposition is only a special (minimal) case.

\begin{proposition}[Lebesgue decomposition for finite signed Borel measures]
\label{mumumu_2}
Let $\mu\in\mathcal{M}_b(\R_0^+)$, then there is a unique decomposition
\begin{equation}
\label{mumumu}
\mu = \ol\mu + \hat\mu 
\end{equation}
with $\ol\mu\in\mathcal{M}_b(\R_0^+)$ having a signed Radon-Nikodym derivative 
and $\hat\mu\in\mathcal{M}_b(\R_0^+)$ being orthogonal
w.r.t.~$\lambda_{\R_0^+}$ in the sense that there exists 
a Lebesgue nullset $N\in\mathcal{B}$ 
such that for any Borel set $B\in\mathcal{B}$ with $B\subset \R_0^+\setminus N$ 
one has $\hat\mu(B)=0$.
\end{proposition}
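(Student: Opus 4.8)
The plan is to establish existence and uniqueness separately, reducing the signed case to the classical non-negative Lebesgue decomposition theorem by working with the Hahn-Jordan decomposition.

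For existence, I would write $\mu = \mu^+ - \mu^-$ via the Hahn-Jordan decomposition, apply the classical Lebesgue decomposition to each of $\mu^+,\mu^-\in\mathcal{M}_b^+$ to get $\mu^\pm = \ol{\mu^\pm} + \hat{\mu^\pm}$ with $\ol{\mu^\pm}\ll\lambda$ and $\hat{\mu^\pm}\perp\lambda$ (say $\hat{\mu^+}$ concentrated on a Lebesgue-null set $N^+$ and $\hat{\mu^-}$ on $N^-$), and then set $\ol\mu = \ol{\mu^+} - \ol{\mu^-}$ and $\hat\mu = \hat{\mu^+} - \hat{\mu^-}$. Then $\ol\mu$ has a signed Radon-Nikodym derivative (the difference of the two densities, which is $\lambda$-integrable since both $\mu^\pm$ are finite), and $\hat\mu$ is concentrated on the Lebesgue-null set $N = N^+\cup N^-$, so $\hat\mu(B)=0$ for every Borel $B\subset\R_0^+\setminus N$. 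This is essentially the computation already displayed in \eqref{gammagammagammagammagamma}, so existence is routine.

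The substantive part is uniqueness, and the remark preceding the proposition flags exactly the subtlety: a signed measure has many representations as a difference of non-negative measures, so one cannot simply quote uniqueness of the non-negative decomposition. The plan is to suppose $\mu = \ol\mu_1 + \hat\mu_1 = \ol\mu_2 + \hat\mu_2$ are two decompositions of the required type, and consider $\nu := \ol\mu_1 - \ol\mu_2 = \hat\mu_2 - \hat\mu_1$. The left-hand expression shows $\nu$ is absolutely continuous w.r.t.\ $\lambda$ (it has a signed density, namely the difference of the two densities). The right-hand expression shows $\nu$ is orthogonal to $\lambda$: if $\hat\mu_1$ is concentrated on a Lebesgue-null set $N_1$ and $\hat\mu_2$ on a Lebesgue-null set $N_2$, then $\nu$ is concentrated on the Lebesgue-null set $N_1\cup N_2$. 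A finite signed measure that is simultaneously absolutely continuous and singular w.r.t.\ $\lambda$ must be the zero measure: for any Borel $B$, writing $N = N_1\cup N_2$, one has $\nu(B) = \nu(B\cap N) + \nu(B\setminus N)$; the first term vanishes because $\nu\ll\lambda$ and $\lambda(B\cap N)=0$, and the second term vanishes because $\nu$ is concentrated on $N$. Hence $\nu = o$, so $\ol\mu_1=\ol\mu_2$ and $\hat\mu_1=\hat\mu_2$.

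The only mild care needed is in phrasing ``absolutely continuous'' correctly for signed measures — I would state it as: $\nu$ has a signed Radon-Nikodym density, hence $\lambda(C)=0$ implies $\nu(C)=0$ — and in noting that singularity is witnessed by concentration on a $\lambda$-null set, which is stable under finite unions. I expect no real obstacle; the ``clash of AC and singular forces zero'' argument is the heart of it, and it is short. I would present existence in one short paragraph mirroring \eqref{gammagammagammagammagamma}, then uniqueness in a second paragraph via the $\nu$ argument above, and close.

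\begin{proof}
\emph{Existence.} Let $\mu = \mu^+ - \mu^-$ be the Hahn-Jordan decomposition, with $\mu^+,\mu^-\in\mathcal{M}_b^+(\R_0^+)$. By the classical Lebesgue decomposition theorem for finite non-negative measures, $\mu^+ = \ol{\mu^+} + \hat{\mu^+}$ and $\mu^- = \ol{\mu^-} + \hat{\mu^-}$, where $\ol{\mu^\pm}\in\mathcal{M}_b^+$ possess Radon-Nikodym derivatives w.r.t.~$\lambda=\lambda_{\R_0^+}$, and $\hat{\mu^\pm}\in\mathcal{M}_b^+$ are concentrated on Lebesgue nullsets $N^+, N^-\in\mathcal{B}$ respectively. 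Setting $\ol\mu := \ol{\mu^+} - \ol{\mu^-}$ and $\hat\mu := \hat{\mu^+} - \hat{\mu^-}$ gives $\mu = \ol\mu + \hat\mu$ as in \eqref{gammagammagammagammagamma}. Then $\ol\mu$ has the signed Radon-Nikodym derivative $\frac{d\ol{\mu^+}}{d\lambda} - \frac{d\ol{\mu^-}}{d\lambda}$, which is $\lambda$-integrable since $\ol{\mu^+},\ol{\mu^-}$ are finite. Moreover, with $N := N^+\cup N^-\in\mathcal{B}$ one has $\lambda(N)=0$, and for every Borel set $B\subset\R_0^+\setminus N$ it holds that $\hat\mu(B) = \hat{\mu^+}(B) - \hat{\mu^-}(B) = 0$. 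Hence \eqref{mumumu} holds with the stated properties.

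\emph{Uniqueness.} Suppose $\mu = \ol\mu_1 + \hat\mu_1 = \ol\mu_2 + \hat\mu_2$ are two decompositions of the required type, where $\ol\mu_i$ has a signed Radon-Nikodym derivative w.r.t.~$\lambda$, and $\hat\mu_i$ satisfies $\hat\mu_i(B)=0$ for all Borel $B\subset\R_0^+\setminus N_i$, for some $N_i\in\mathcal{B}$ with $\lambda(N_i)=0$ ($i=1,2$). Put
\begin{equation}
\nu := \ol\mu_1 - \ol\mu_2 = \hat\mu_2 - \hat\mu_1 \in\mathcal{M}_b(\R_0^+) .
\end{equation}
From the first representation, $\nu$ has a signed Radon-Nikodym derivative w.r.t.~$\lambda$, so $\lambda(C)=0$ implies $\nu(C)=0$ for any $C\in\mathcal{B}$. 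From the second representation and $N := N_1\cup N_2$ (a Lebesgue nullset), $\nu(B)=0$ for every Borel set $B\subset\R_0^+\setminus N$. Now let $B\in\mathcal{B}$ be arbitrary. Then
\begin{equation}
\nu(B) = \nu(B\cap N) + \nu(B\setminus N) .
\end{equation}
Here $\nu(B\cap N)=0$ because $\lambda(B\cap N)=0$, and $\nu(B\setminus N)=0$ because $B\setminus N\subset\R_0^+\setminus N$. Hence $\nu(B)=0$ for all $B\in\mathcal{B}$, i.e.~$\nu = o$. Therefore $\ol\mu_1 = \ol\mu_2$ and $\hat\mu_1 = \hat\mu_2$, which proves uniqueness.
\end{proof}
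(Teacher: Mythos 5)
Your proof is correct and follows essentially the same route as the paper: existence by applying the non-negative Lebesgue decomposition to the Hahn--Jordan parts, and uniqueness by observing that the difference $\nu=\ol\mu_1-\ol\mu_2=\hat\mu_2-\hat\mu_1$ is both absolutely continuous and concentrated on the Lebesgue nullset $N_1\cup N_2$, then splitting an arbitrary Borel set into its intersection with that nullset and its complement — exactly the paper's $\ol A\cup\hat A$ argument. No gaps.
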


\begin{proof}
By contradiction. Existence is clear from the preceeding remarks.
Assume now that there are two such decompositions, 
$\mu = \ol\mu + \hat\mu = \ol\mu' + \hat\mu'$. Hence, 
\begin{equation}
\label{olmuolmuhatmuhatmu}
(\ol\mu - \ol\mu') = (\hat\mu' - \hat\mu) ,
\end{equation}
and there exists a Lebesgue nullset
$N\cup N'$ such that $(\hat\mu' - \hat\mu)(B)=0$ for any Borel set $B\subset\R_0^+\setminus (N\cup N')$.
Consider now any Borel set $A\subset\R_0^+$ and the disjoint union $A = \ol A \cup \hat A$ with 
$\ol A = A \cap (N\cup N')$, and $\hat A = A \cap (\R_0^+\setminus (N\cup N'))$. 
Using \eqref{olmuolmuhatmuhatmu},
\begin{eqnarray}
(\ol\mu - \ol\mu')(A) 
= 
(\ol\mu - \ol\mu')(\ol A \cup \hat A) 
& = & (\ol\mu - \ol\mu')(\ol A) + (\ol\mu - \ol\mu')(\hat A) \\
\nonumber
& = &
0 + (\hat\mu' - \hat\mu)(\hat A) = 0 .
\end{eqnarray}
Therefore, $\ol\mu = \ol\mu'$ and, by \eqref{olmuolmuhatmuhatmu}, $\hat\mu' = \hat\mu$.
\end{proof}

For a counterexample, let now $f$ and $g$ be bounded continuous functions
$\R_0^+ \rightarrow \R^+$ with $f(0)=1$, and generally $f \neq g$, for instance
one could choose $g = 2f$. Consider now the functional
\begin{eqnarray}
\label{pi1pi2}
\tilde\pi:  \mathcal{M}_b(\R_0^+) & \rightarrow & \R \\ \index{$\pi$}
\nonumber \gamma 
& \mapsto & 
\int_0^\infty g(t)\, d\ol\gamma(t) + \int_0^\infty f(t)\, d\hat\gamma(t) ,
\end{eqnarray}
where $\gamma = \ol\gamma + \hat\gamma$
is the Lebesgue decomposition of Proposition \ref{mumumu_2}.
By Eq.~\eqref{2},
\eqref{pi1pi2} is linear in $\gamma$ since a union of two Lebesgue nullsets is again a nullset,
and therefore, if $\gamma = a\alpha + b\beta$ for $\gamma, \alpha, \beta\in\mathcal{M}_b$
and $a,b\in\R$, then
$\ol\gamma = a\ol\alpha + b\ol\beta$
and 
$\hat\gamma = a\hat\alpha + b\hat\beta$.
Define now an equivalence relation
\begin{equation}
\gamma_1\sim\gamma_2 \; \Leftrightarrow \; \tilde\pi(\gamma_1) = \tilde\pi(\gamma_2) 
\end{equation}
for $\gamma_1, \gamma_2\in\mathcal{M}_b$, and note that 
$\ol\delta_t = o$, $\hat\delta_t = \delta_t$\index{$\hat\delta_t$} and
$\tilde\pi(\delta_0)=f(0)=1$.
By Theo.~\ref{Theo4}, {\bf NA}\, holds on the market $(\mathcal{M}_b, \sim)$, and 
$\pi = \pi_{\sim} = \tilde\pi$, as well as $P_t = P_{\sim,t} = f(t)$.
However, $f \neq g$ will quite obviously generally mean that
\begin{equation}
\pi(\gamma)  \neq  \int_0^\infty P_t\, d\gamma(t) .
\end{equation}
A simple example for this would be the continuously paid annuity $\gamma = \lambda_{[0,n]}$,
where the assumption $g(t) = 2f(t) = 2(1+i)^{-t}$, which violates Assumption \ref{mean value},
would now -- in contrast to \eqref{cont_ann2} -- result in a {\bf NA} price of
\begin{equation}
\pi(\lambda_{[0,n]}) = \int_0^n 2P_tdt = 2\frac{1-v^n}{\log(1+i)} = 2\overline{a}_{\yr{n}} .
\end{equation}
It should be noted that this example, of course, does not contradict Choquet's Theorem.
As pointed out at the end of Sec.~\ref{No-arbitrage pricing relative to zero-coupon bonds},
the requirements of Choquet's Theorem are not given in the here considered setup, which is the
reason that, despite of the generally valid Choquet representation \eqref{Choquet representation},
not every linear functional on $\mathcal{M}_b$ needs to be of the corresponding Choquet form,
and extra requirements, such as Assumption \ref{mean value}, are needed to obtain \eqref{main}.


\section{Forward rates and prices of forward cash flows} 

\label{Forward rates} 

Before dealing with FX cash flows or streams of commodities in the next section, 
note that in an arbitrage-free market $(\mathcal{M}_b, \sim)$ it holds for $\gamma\in\mathcal{M}_b$ that
for $t\geq 0$
\begin{equation}
P_t\gamma \sim P_t\pi(\gamma)\delta_0 \sim \pi(\gamma)\delta_t
\end{equation}
which implies
\begin{equation}
\gamma \sim \frac{\pi(\gamma)}{P_t}\delta_t ,
\end{equation}
and justifies, why $\pi(\gamma)/P_t$ can be called the time $t$ forward price of $\gamma$. 
However, it is important to note here that in this trade, $\gamma$ must be delivered (handed over)
immediately, while the amount $\pi(\gamma)/P_t$ is only paid in $t$. The price $\pi(\gamma)/P_t$
would only be a forward price in the actual sense, if the cash flow $\gamma$ was zero before time $t$, 
i.e.~if $\gamma_{[0,t)} = o$, since then $\gamma$ could only be delivered at time $t$ as well.
With this in mind, the following definition can be made.

\begin{definition}[Forward price of cash flows]
\label{def:forward price}
In an arbitrage-free market $(\mathcal{M}_b, \sim)$,
\begin{equation}
\label{forward price}
\pi_t = \frac{\pi}{P_t} , \quad t\geq 0 ,
\end{equation}
is called the time $t$ forward price.
\end{definition}

Obviously, the so far considered {\bf NA} price $\pi$ is identical to $\pi_0$, and
$\pi_t(\delta_s) = P_s/P_t$. By Corollary \ref{cor:Change of numeraire}, it is also
clear that the forward price $\pi_t$ simply is the {\bf NA} price under the numeraire
$\delta_t$, i.e.~$\pi_t = \pi_{\delta_t}$.

\begin{corollary}[Consistency of forward prices]
\label{Consistency of forward prices}
In an arbitrage-free market $(\mathcal{M}_b, \sim)$,
$\pi_t$ is linear for all $t\geq 0$, and $\pi_t(o)=0$. Furthermore,
\begin{eqnarray}
\label{forward price2}
\pi_s(\gamma)
& = &
\pi_s(\pi_t(\gamma)\delta_t)  , \quad s, t\geq 0 , \gamma\in\mathcal{M}_b .
\end{eqnarray}
\end{corollary}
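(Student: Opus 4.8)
The plan is to derive all three assertions directly from the definition $\pi_t = \pi/P_t$ together with the properties of $\pi$ established in Theorem \ref{theo:pi_linear} and Corollary \ref{cor:1}, using the fact that $P_t$ is a fixed strictly positive real number for each $t \geq 0$. First I would observe that scaling a linear functional by the nonzero constant $1/P_t$ preserves linearity, so $\pi_t$ is linear on $\mathcal{M}_b$ for each $t \geq 0$; and since $\pi(o) = 0$ by Corollary \ref{cor:1}, we get $\pi_t(o) = \pi(o)/P_t = 0$. Both of these are immediate and require no real work.

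The main content is the consistency identity \eqref{forward price2}. Here I would compute the right-hand side: by linearity of $\pi_s$, we have $\pi_s(\pi_t(\gamma)\delta_t) = \pi_t(\gamma)\,\pi_s(\delta_t)$, since $\pi_t(\gamma)$ is a scalar. Now $\pi_s(\delta_t) = \pi(\delta_t)/P_s = P_t/P_s$, and $\pi_t(\gamma) = \pi(\gamma)/P_t$, so the product is $(\pi(\gamma)/P_t)(P_t/P_s) = \pi(\gamma)/P_s = \pi_s(\gamma)$. This closes the argument. Alternatively, and perhaps more in the spirit of the surrounding exposition, one could phrase it via the change-of-numeraire identity: since $\pi_t = \pi_{\delta_t}$ by the remark following Definition \ref{def:forward price} and Corollary \ref{cor:Change of numeraire}, the relation $\gamma \sim \pi_t(\gamma)\delta_t$ holds in the market, and applying the linear functional $\pi_s$ to both sides of this equivalence (which is legitimate because $\gamma_1 \sim \gamma_2 \Leftrightarrow \pi'(\gamma_1) = \pi'(\gamma_2)$ and hence $\Leftrightarrow \pi_s(\gamma_1) = \pi_s(\gamma_2)$) yields $\pi_s(\gamma) = \pi_s(\pi_t(\gamma)\delta_t)$ at once.

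I do not anticipate a genuine obstacle here; the statement is essentially a bookkeeping consequence of $\pi$ being a strictly positive linear functional and $P_\cdot$ being a strictly positive function. The only point requiring a moment's care is making sure the scalar $\pi_t(\gamma)$ is genuinely pulled out of $\pi_s$ as a real multiplier — this is exactly linearity of $\pi_s$ over $\R$, which we already have — and noting that no continuity, boundedness, or $\sigma$-additivity assumptions (Assumptions \ref{P_t continuous}, \ref{mean value}, \ref{sigma-additivity}) are needed for this corollary, since it concerns only the abstract {\bf NA} structure and not the Choquet representation \eqref{main}. I would therefore keep the proof to two or three lines, citing Theorem \ref{theo:pi_linear} for linearity of $\pi$, Corollary \ref{cor:1} for $\pi(o)=0$, and Definition \ref{P_t_def} for $P_t = \pi(\delta_t)$.

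\begin{proof}
Linearity of $\pi_t = \pi/P_t$ for each fixed $t \geq 0$ follows from the linearity of $\pi$ (Theorem \ref{theo:pi_linear}), since $P_t > 0$ is a constant. Moreover, $\pi_t(o) = \pi(o)/P_t = 0$ by Corollary \ref{cor:1}. For \eqref{forward price2}, fix $s,t \geq 0$ and $\gamma \in \mathcal{M}_b$. Since $\pi_t(\gamma) \in \R$ is a scalar, linearity of $\pi_s$ and Definition \ref{P_t_def} give
\begin{equation}
\pi_s(\pi_t(\gamma)\delta_t)
=
\pi_t(\gamma)\,\pi_s(\delta_t)
=
\frac{\pi(\gamma)}{P_t}\cdot\frac{\pi(\delta_t)}{P_s}
=
\frac{\pi(\gamma)}{P_t}\cdot\frac{P_t}{P_s}
=
\frac{\pi(\gamma)}{P_s}
=
\pi_s(\gamma) .
\end{equation}
\end{proof}
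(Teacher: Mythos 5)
Your proof is correct and follows essentially the same route as the paper: the paper's argument is precisely the one-line computation $\pi_s(\pi_t(\gamma)\delta_t)=\pi_t(\gamma)\pi_s(\delta_t)=\pi_0(\gamma)\pi_0(\delta_t)/(P_tP_s)=\pi_0(\gamma)/P_s$, which matches your chain of equalities, with the linearity and $\pi_t(o)=0$ claims left implicit there and spelled out (correctly) by you.
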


\begin{proof}
\begin{equation}
\pi_s(\pi_t(\gamma)\delta_t) 
=
\pi_t(\gamma)\pi_s(\delta_t)  
=
\frac{\pi_0(\gamma)\pi_0(\delta_t)}{P_tP_s}
=
\frac{\pi_0(\gamma)}{P_s} .
\qedhere
\end{equation}
\end{proof}
Eq.~\eqref{forward price2} means that the time $s$ forward value of a cash flow
always equals the time $s$ forward value of the payment of that cash flow's time $t$ forward value in $t$. 
This obviously implies $\pi_s(\gamma - \pi_t(\gamma)\delta_t) = 0$, 
which is a generalization of \eqref{gamma-pi(gamma)delta0}. It also implies that
$\pi_0(\gamma) = \pi_0(\pi_s(\gamma)\delta_s) = \pi_0(\pi_t(\gamma)\delta_t)$ for $t>s>0$, meaning 
that no arbitrage can be achieved by selling forward and buying back forward a cash flow
at different times, since the present value of such forward deals is always the same as the present value
of the cash flow.

Under the conditions of Theorem \ref{Theo3},
one can for $t > s \geq 0$ also define the effective forward rates
\begin{equation}
\label{eq:forward rates}
f_{s,t} 
= 
\left(\frac{P_s}{P_t}\right)^{\frac{1}{t-s}} - 1 ,
\end{equation}
which deserve their name since, with $\delta_s/P_s \sim \delta_t/P_t \sim \delta_0$,
\begin{equation}
\label{762}
\delta_s 
\sim 
\delta_t(1 + f_{s,t})^{t-s} .
\end{equation}
Further, define $f_{t,t} = 0$. 
In the special case of $s=0$, the rate $y_t := f_{0,t} = (1/P_t)^{1/t}-1$ is called the effective spot rate
for the maturity $t$, since this rate is the yield of a zero-coupon bond maturing at $t$. 
From \eqref{762} one then obtains the well-known equality
\begin{equation}
\label{77}
(1 + y_t)^{t}
=
(1 + y_s)^{s}(1 + f_{s,t})^{t-s} ,
\end{equation}
which in turn implies for $t,r,s \geq 0$ that
\begin{equation}
(1 + f_{r,r+s+t})^{s+t}
=
(1 + f_{r,r+s})^{s}
(1 + f_{r+s,r+s+t})^{t} .
\end{equation}

Still under the conditions of Theorem \ref{Theo3}, Definition \ref{def:forward price} yields
\begin{eqnarray}
\label{79}
\pi_t(\gamma) 
=
\frac{\pi(\gamma)}{P_t}
& = &
\frac{1}{P_t} \int_0^\infty P_s\, d\gamma(s) \\
\nonumber
& = & 
\int_{[0,t)} (1+f_{s,t})^{t-s}\, d\gamma(s) 
+
\int_t^\infty (1+f_{t,s})^{t-s}\, d\gamma(s) .
\end{eqnarray}
If the cash flow $\gamma$ is zero before time $t$, then \eqref{79} turns into the simpler
\begin{equation}
\label{eq:92-2}
\pi_t(\gamma) 
= 
\int_0^\infty (1+f_{t,s})^{t-s}\, d\gamma(s) .
\end{equation}
Price formula \eqref{main} is simply the special case where $t=0$, 
i.e.~where $(1+f_{t,s})^{t-s} = (1+f_{0,s})^{-s} = (1+y_s)^{-s} = P_s$.

Finally, a remark on arbitrage opportunities characterized as ``beating the risk-free rate of return".
With the introduced notation, the risk-free growth factor achievable over a time period from $s$
to $t$, $t>s$, is $(1+f_{s,t})^{t-s}$. Investing at $s$ and getting all money out at $t$,
an investor in the here presented model would have to realize a cash flow of 
$r\delta_t - \delta_s$ with $r>(1+f_{s,t})^{t-s}$ at no cost
to beat the risk-free rate of return. However, by \eqref{eq:forward rates},
\begin{equation}
\pi_0(r\delta_t - \delta_s) 
= 
rP_t - P_s 
= 
rP_t - (1+f_{s,t})^{t-s}P_t
>
0 ,
\end{equation}
such that this kind of arbitrage is not possible while excluding \eqref{eq:arbitrage}.
Moreover, for a strategy where $\gamma\in\mathcal{M}_b^+([s,t])\setminus\{o\}$
was purchased for $\pi_r(\gamma)$ at time $r\in [0, s]$, a yield (internal rate of return) 
can be defined as being given by a constant $i$ such that 
\begin{equation}
\int_s^t (1+i)^{r-u}d\gamma(u) = \pi_r(\gamma) .
\end{equation}
With $f_\text{max} = \max_{s\leq u \leq t} f_{r,u}$, it holds by \eqref{eq:92-2} that
\begin{equation}
\pi_r(\gamma)
\geq
\int_s^t (1+f_\text{max})^{r-u}d\gamma(u) ,
\end{equation}
which implies $i \leq f_\text{max}$. Therefore, the investor's yield does not beat the 
highest of all (risk-free) forward rates $f_{r,u}$, $u\in [s,t]$.


\section{Norberg's theory of cash flow valuation} 

\label{Norberg's theory of consistent cash flow valuation} 

In 1990, Ragnar Norberg published an article in the Scandinavian Actuarial Journal in which he presented a theory
of ``consistent" valuation functions for cash flows (Norberg, 1990). Since Norberg's resulting valuation principle is
for certain cash flows, and when interpreted the right way, identical to the one presented here, this section will
explore the parallels in more detail. Before doing so, it is pointed out that only the first half of Norberg's paper
is referred to here, while the second half, in which he considers inter-relationships of certain cash flows and
probability distributions on payment measures, goes beyond the theory presented in here.

First, it should be mentioned that Norberg's perspective was a more actuarial one. His approach does not
present itself as a no-arbitrage theory in the modern sense, although the requirements posed on valuation principles
for being ``consistent" will turn out to be analogs to some of the no-arbitrage results presented earlier. 
Secondly, Norberg's intention in the first part of his paper is to derive a consistent theory of the deterministic
valuation of deterministic cash flows at any point on the real time axis. As such, what in this paper was
presented as forward prices, he presents as deterministic prices in the future, but, of course, in reality the value of a 
cash flow in the future is unknown, since the future term structure and the future value of received
payments must be unknown, for the latter ones, because it is unknown how they will be reinvested up to the point of 
valuation.
Similarly, he considers a past (``previous investments can be cashed with addition of earned interest"; Norberg, 1990), 
while this article had to neglect the past for the just explained reason of obviously unknown
reinvestment strategies.
As such, there is a philosophical difference between the here presented theory and Norberg (1990),
which should be kept in mind.

The cash flows Norberg (1990) considers are modeled as $\sigma$-finite, non-negative Borel measures on $\mathbb{R}$,
which have a non-decreasing payment function ``$B$" in the sense that any time $t$ can be mapped to the finite measure
lying on $(-\infty, t]$ (compare \eqref{muFF3}).
His cash flows are therefore less general in the sense that they are only positive, but more general in that the
total amount paid can be infinite.
To any such cash flow he then assigns a value $V(t,B)$, ``which is the single payment against which $B$ can be
exchanged at time $t$" (Norberg, 1990). So, the first parallel to draw would be that the earlier introduced no-arbitrage
forward price $\pi_t(\cdot)$ may be some analog to Norberg's $V(t,\cdot)$ as long as 
cash flows in $\mathcal{M}_b^+$ are considered, and $t \geq 0$. However, there are several more
analogs, which the reader can find in Table \ref{tab:Comparison of notation}.

\begin{table}[htb]
\begin{center}
\begin{tabular}{c|c||c|c}
\multicolumn{2}{c||}{Norberg (1990)} & \multicolumn{2}{c}{This article} \\
\hline
Description & \multicolumn{2}{c|}{Symbol} & Description \\
\hline\hline
payment function & $B$ & $F_{\gamma}$ & distribution function \\
value & $V(t,B)$ & $\pi_t(\gamma)$, $t\geq 0$ & {\bf NA} forward price \\
& $V(0,B)$ & $\pi(\gamma)=\pi_0(\gamma)$ & {\bf NA} price \\
& & $o$ & null measure \\
null stream & $B_0$ & $F_o$ & \\
unit mass concentr.~at $t$ & $\varepsilon_t$ & $\delta_t$ & Dirac measure in $t$\\
$(t-u)$-y.~int.~fact.~at $t$ & $v(t,u) = V(t,\varepsilon_u)$ & $\pi_t(\delta_u) = P_u/P_t$\\
discount function & $v(t)=v(0,t)$ & $P_t$ & UZCB price
\end{tabular}
\end{center}
\caption{\label{tab:Comparison of notation}Comparison of notation. Symbols on the right are only defined on the non-negative
time axis.}
\end{table}

Norberg now calls the ``valuation function" ``consistent" if five properties are fulfilled:

{[i]} $V(t,B_0) = 0$, which, by Table \ref{tab:Comparison of notation}, corresponds to 
$\pi_t(o)=0$, $t\geq 0$, which clearly holds for the  {\bf NA} forward price \eqref{forward price}.

{[ii]} $V(t,B_1) \leq V(t,B_2)$ if the measure belonging to $B_2$ is not smaller than that belonging to $B_1$,
with strict inequality if the total amount paid by $B_2$ is larger than that of $B_1$. For $t\geq 0$,
this is implied by the last statement of Corollary \ref{cor:1} of this paper.

{[iii]} $\sigma$-additivity of $V(t,B)$ in $B$ if a countable sum of payment measures is again a 
payment measure in Norberg's sense.
This is Lemma \ref{lemma:sigma-additivity}, which was implied by Assumption \ref{sigma-additivity}.

{[iv]} A finite value for measures of finite support. This obviously holds in the presented model since
all prices are finite.

{[v]} $V(s, B) = V(s, V(t,B)\varepsilon_t)$, which has its analog in \eqref{forward price2} of  
Corollary \ref{Consistency of forward prices}.

Therefore, all the properties that define Norberg's consistency are fulfilled by the no-arbitrage price of
the here presented model, under the obvious restriction that not the mathematically exact same 
cash flows are considered. An important difference to Norberg (1990) is, that these properties were 
not {\em a priori} required, but they follow as a consequence of the basic market definition, the no-arbitrage
assumption, and the additional Assumption \ref{sigma-additivity}

Norberg then goes on by defining a ``consistent" discount function $v$ as a strictly positive function, which
is bounded on finite intervals, and for which $v(0) = 1$. In his Theorem 1, he then shows that
consistent discount functions define consistent valuation functions through
\begin{equation}
\label{Norberg1}
V(t,B) = \frac{1}{v(t)}V(B)
\end{equation}
with $V(B) = V(0,B)$ and
\begin{equation}
\label{Norberg2}
V(B) = \int v dB .
\end{equation}
This result is comparable to Proposition \ref{existence}, which showed how the Choquet formula \eqref{main}
or \eqref{Norberg2} can define an arbitrage-free market via the corresponding arbitrage-free price, where 
the integrand $f$ had to fulfill certain conditions (more than Norberg required for $v$, more details follow below
in the ``regular" case).
Furthermore, his Theorem 1 shows that consistent valuation functions determine consistent discount functions through 
$v(t,u) = V(t,\varepsilon_u)$ and $v(t)=v(0,t)$. An analog property follows in the here presented model
for the UZCB prices under no-arbitrage, which are strictly positive and equal $1$ for the immediately paying bond,
however, without further assumptions, no boundedness on finite intervals as in Norberg's case would follow.
Moreover, it is shown that if $V$ is a consistent valuation function, then
\eqref{Norberg1} and \eqref{Norberg2} hold for discrete payment measures.
Obviously, \eqref{Norberg1} corresponds to \eqref{forward price}, and \eqref{Norberg2} to \eqref{main},
which hold for the no-arbitrage price of any cash flow (not just discrete ones) under the assumptions of 
Theorem \ref{Theo3}, which are of course stricter than Norberg's.

He then carries on to define a ``regular" discount function as a consistent one which is continuous and non-increasing.
This corresponds to Assumption \ref{P_t continuous}, which in fact is weaker, since only boundedness
(and not monotony) is required. ``Regular" valuation functions he then defines via two additional
properties:

{[vi]} $V(t,B)$ should be continuous in $t$ for any $B$. Since Norberg has before shown that \eqref{Norberg1}
holds in general, this implies continuity of $v$, and vice versa.

{[ii']} $V(t,B_1) \leq V(t,B_2)$ if $B_2(t) \geq B_1(t)$, $t\in\mathbb{R}$,
for the payment function, and with strict inequality if the
total amount paid by $B_2$ is larger than that of $B_1$. This property -- Norberg refers to the
saying ``time is money" -- values earlier payments higher than latter ones. In the here presented theory,
this was not required, and also does not follow, for the simple reason that the model allows for negative interest
rates and thus for UZCB prices that are larger than $1$. In the middle of the second decade of this millenium,
negative interest rates have become an economic reality that cannot be ignored.

In Theorem 2, Norberg then proves that for any regular discount function the valuation function defined
via \eqref{Norberg1} and \eqref{Norberg2} is regular. This is not surprising, given that 
non-increasing continuous $v$ was required. A comparable statement is Proposition \ref{existence} in the here
presented theory, which lacks the ``time is money" property for the earlier explained reason.
Furthermore, it is shown that  the discount function of a regular valuation function
is regular, which again has no direct analog in this paper because of the possibility of negative interest rates.
However, if in analogy one required $\pi_t(\gamma)$ to be continuous in $t$ for any $\gamma$, then it would
similarly follow from \eqref{forward price} that $P_t$, the analog of $v$, had to be continuous.
Moreover, Norberg's Theorem 2 then shows that, for a regular valuation function, $V(t,B)$ is for any payment
measure uniquely determined by \eqref{Norberg1} and \eqref{Norberg2}, which obviously
corresponds to Theorem \ref{Theo3}. In this context it should be mentioned that continuity 
of the (forward) {\bf NA} price in $t$ followed in the here presented case from 
Assumption \ref{P_t continuous}, but to actually obtain Theorem \ref{Theo3} the
``average value requirement" of Assumption \ref{mean value} was needed as well, while
Norberg needed the additional requirement {[ii']}. So, even where
direct comparison must fail (for allowing negative interest rates), there still exist parallels
between the two approaches.

Summing up, in addition to the identical valuation formula of Choquet type, there are some striking similarities to
Norberg's paper, which the author of this work
did not know at the time when most of the here presented results were obtained. However, the article was
improved after reading Norberg's inspiring piece, such that the parallels are more obvious now.
The assumptions made in this article, namely Definition \ref{market} for the basic market, 
no-arbitrage, and the additional Assumption \ref{sigma-additivity}, implied Norberg's ``consistency", and 
-- when Assumption \ref{P_t continuous} is added -- in parts ``regularity" for the 
valuation principle and the UZCB prices (discount function). The presented approach can possibly be seen
as a modern version of Norberg's, with assumptions that seem somewhat more basic, or natural,
given the background of arbitrage theory. 


\section{Forward markets and combined markets} 

\label{Forward markets} 

The main result of \eqref{main} was derived without the specification of any currency unit, which
means that it holds for any market of deterministic `asset flows'. For instance, one could consider a market
in which deterministic streams of natural gas or crude oil are exchanged for one another. 
Similarly, different currencies
could be considered, such as U.S.--Dollars and Euros. For instance, a European money market 
$(\mathcal{M}_b, \stackrel{\text{\euro}}{\sim})$ and a U.S.~money market 
$(\mathcal{M}_b, \stackrel{\$}{\sim})$
could be considered as a combined, or common, market with the obvious linear operations on
cash flows $(\gamma_\text{\euro}, \gamma_\$)\in\mathcal{M}_b \oplus \mathcal{M}_b$, 
which represent the simultaneous
cash flow of $\gamma_\text{\euro}$ in Euros and $\gamma_\$$ in U.S.--Dollars.

\begin{definition}[Combined markets]
\label{def:combined markets}
\begin{enumerate}
\item[]
\item
$(\mathcal{M}_b \oplus \mathcal{M}_b, \sim)$ is called a combined market
if $\sim$ has the Properties 1--4 of Definition \ref{market}, where Property 4
is understood such that for any 
$(\gamma_1, \gamma_2)\in \mathcal{M}^+_b \oplus \mathcal{M}^+_b \setminus\{(o,o)\}$
there exists $b > 0$ such that $(\gamma_1, \gamma_2) \sim b(\delta_0, o)$.
\item
For $(\gamma_1, \gamma_2)\in \mathcal{M}_b \oplus \mathcal{M}_b$, $b\in\mathbb{R}$ is called
a 1-price of $(\gamma_1, \gamma_2)$
if $(\gamma_1, \gamma_2) \sim b(\delta_0, o)$, and a 2-price
if $(\gamma_1, \gamma_2) \sim b(o, \delta_0)$.
\item
There exists arbitrage in the combined market
$(\mathcal{M}_b \oplus \mathcal{M}_b, \sim)$ if
\begin{equation}
(o,o)\sim(\gamma_1, \gamma_2) \quad \text{ for at least one }
(\gamma_1, \gamma_2)
\in \mathcal{M}^+_b \oplus \mathcal{M}^+_b \setminus\{(o,o)\} .
\end{equation}
\end{enumerate}
\end{definition}

\begin{proposition}[Results for combined markets]
\label{lemma:forward_market}
Lemma \ref{lemma:inversion} and \ref{lemma:equivalence}, Proposition \ref{price existence}, 
Theorem \ref{LOP_theo}, \ref{theo:linearity}, and \ref{theo:pi_linear},
and Corollary \ref{cor:1}, \ref{cor:Change of numeraire}, and \ref{Consistency of forward prices}
apply to the combined market ${(\mathcal{M}_b \oplus \mathcal{M}_b, \sim)}$
regarding 1-prices, if $(o, o)$ takes the role of $o$,
$\mathcal{M}^+_b \oplus \mathcal{M}^+_b \setminus\{(o,o)\}$ takes the role of
$\mathcal{M}^+_b \setminus\{o\}$, $(\delta_0, o)$ takes the role of $\delta_0$, and
if $P^1_t = \pi_1(\delta_t, o)$ takes the role of $P_t$.
\end{proposition}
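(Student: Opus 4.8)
The plan is to observe that the combined market $(\mathcal{M}_b \oplus \mathcal{M}_b, \sim)$, read with respect to 1-prices, is simply an instance of the abstract setup of Section \ref{Some results for additive relations on linear spaces}, so that each cited result follows by the very same reasoning that was used to specialize its abstract counterpart to $(\mathcal{M}_b, \sim)$ in Sections \ref{No-arbitrage prices of deterministic cash flows}--\ref{Forward rates}. Concretely, I would set $\mathcal{V} = \mathcal{M}_b \oplus \mathcal{M}_b$, which is a real vector space with null element $(o,o)$, take $\mathcal{V}^+ = \mathcal{M}^+_b \oplus \mathcal{M}^+_b$, and fix $\gamma_0 = (\delta_0, o) \in \mathcal{V}^+\setminus\{(o,o)\}$.

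First I would check that $(\mathcal{V},\sim)$ satisfies Assumption \ref{assu:relation}: by item 1 of Definition \ref{def:combined markets}, $\sim$ satisfies Properties 1--3 of Definition \ref{market}, that is, reflexivity, symmetry and additivity. Next I would check that $\mathcal{V}^+$ satisfies Assumption \ref{assu:cone}: closure of $\mathcal{M}^+_b \oplus \mathcal{M}^+_b$ under non-negative linear combinations holds componentwise, since $\mathcal{M}^+_b$ is a positive cone, and for any $(\gamma_1,\gamma_2)\in\mathcal{V}$ the componentwise Hahn-Jordan decompositions give $(\gamma_1,\gamma_2) = (\gamma_1^+,\gamma_2^+) - (\gamma_1^-,\gamma_2^-)$ with both terms in $\mathcal{V}^+$. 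Then I would check Assumption \ref{assu:gamma-value}: the price requirement in item 1 of Definition \ref{def:combined markets} states precisely that every $(\gamma_1,\gamma_2)\in\mathcal{M}^+_b\oplus\mathcal{M}^+_b\setminus\{(o,o)\}$ admits some $b>0$ with $(\gamma_1,\gamma_2)\sim b(\delta_0,o)=b\gamma_0$, i.e.\ a strictly positive $\gamma_0$-value. Finally, I would note that the arbitrage notion in item 3 of Definition \ref{def:combined markets} --- the existence of $(\gamma_1,\gamma_2)\in\mathcal{V}^+\setminus\{(o,o)\}$ with $(o,o)\sim(\gamma_1,\gamma_2)$ --- is verbatim the negation of the non-triviality property {\bf NT} of Definition \ref{def:arbitragec} for $(\mathcal{V},\sim)$ relative to $\mathcal{V}^+$; thus ``the combined market is arbitrage-free'' is identical to ``{\bf NT} holds on $(\mathcal{V},\sim)$''.

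Once these identifications are in place, the conclusions are immediate: Lemma \ref{lemma:inversionc} gives Lemma \ref{lemma:inversion}; Lemma \ref{lemma:equivalencec} gives Lemma \ref{lemma:equivalence}; Lemma \ref{price existencec} gives Proposition \ref{price existence}; Proposition \ref{LOP_theoc} (with $\gamma_0=(\delta_0,o)$) gives Theorem \ref{LOP_theo}, and in particular the uniquely determined 1-price functional $\pi_1=\pi_{(\delta_0,o)}$; Proposition \ref{theo:linearityc} gives Theorem \ref{theo:linearity}; Theorem \ref{pi_linearc} gives Theorem \ref{theo:pi_linear}; and Corollary \ref{cor:1c} gives the first three assertions of Corollary \ref{cor:1}, while its last (monotonicity) assertion transfers with the componentwise order on $\mathcal{M}_b\oplus\mathcal{M}_b$ playing the role of $>$, by the same argument (price requirement, price uniqueness, additivity) as in the proof of Corollary \ref{cor:1}. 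Corollary \ref{cor:Change of numeraire} and Corollary \ref{Consistency of forward prices}, although originally stated for $(\mathcal{M}_b,\sim)$, follow from the already transferred Theorem \ref{theo:pi_linear} and from the linearity of $\pi_1$ together with $\pi_1(\delta_t,o)=P^1_t$, exactly as in their original proofs, once $P^1_t=\pi_1(\delta_t,o)$ replaces $P_t$ and $(\delta_t,o)$ replaces $\delta_t$.

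The main obstacle is organizational rather than mathematical: one has to apply the substitution dictionary consistently to every cited statement, taking particular care with the two ingredients that live outside Section \ref{Some results for additive relations on linear spaces} --- the strict order $>$ used in the last part of Corollary \ref{cor:1}, and the zero-coupon bond prices $P_t$ used in the forward-price corollary --- which must be read as the componentwise order on $\mathcal{M}_b\oplus\mathcal{M}_b$ and as $P^1_t=\pi_1(\delta_t,o)$, respectively; and with confirming that the short proofs of Corollary \ref{cor:Change of numeraire} and Corollary \ref{Consistency of forward prices} really invoke nothing beyond linearity, so that no hidden extra assumption on $\sim$ is required. Everything else is just a reading-off of the Section \ref{Some results for additive relations on linear spaces} machinery.
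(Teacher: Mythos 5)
Your proposal is correct and follows essentially the same route as the paper: identify $\mathcal{M}_b\oplus\mathcal{M}_b$ with the abstract vector space of Section \ref{Some results for additive relations on linear spaces}, verify Assumptions \ref{assu:relation}, \ref{assu:cone}, and \ref{assu:gamma-value} for the cone $\mathcal{M}^+_b\oplus\mathcal{M}^+_b$ and the numeraire $(\delta_0,o)$ via the componentwise Hahn--Jordan decomposition, and then read off the listed results from their abstract counterparts. Your extra care with the monotonicity claim in Corollary \ref{cor:1} and with Corollaries \ref{cor:Change of numeraire} and \ref{Consistency of forward prices} only spells out what the paper compresses into ``the listed results emerge in analogy.''
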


\begin{proof}
$\mathcal{M}_b \oplus \mathcal{M}_b$ is a real vector space with the null vector $(o,o)$.
The positive cone $\mathcal{M}^+_b \oplus \mathcal{M}^+_b\subset\mathcal{M}_b \oplus \mathcal{M}_b$
fulfills Assumption \ref{assu:cone}, because any $(\gamma_1, \gamma_2)\in \mathcal{M}_b \oplus \mathcal{M}_b$
has a Hahn-Jordan type decomposition $(\gamma_1, \gamma_2) = 
(\gamma^+_1, \gamma^+_2) - (\gamma^-_1, \gamma^-_2)$, with
$(\gamma^+_1, \gamma^+_2), (\gamma^-_1, \gamma^-_2)\in
\mathcal{M}^+_b \oplus \mathcal{M}^+_b$. 
Moreover, $(o,o)\neq (\delta_0, o)\in\mathcal{M}^+_b \oplus \mathcal{M}^+_b$, which means
that the combined market of Definition \ref{def:combined markets} together $\gamma_0=(\delta_0, o)$
fulfills Assumption \ref{assu:relation} and \ref{assu:gamma-value}.
Thus, Lemma \ref{lemma:inversionc}, \ref{lemma:equivalencec}, and \ref{price existencec}, 
Proposition \ref{LOP_theoc} and \ref{theo:linearityc}, Theorem \ref{pi_linearc}, and Corollary \ref{cor:1c}
apply, and the listed results emerge in analogy.
\end{proof}

In analogy to Definition \ref{def:forward price}, a forward 1-price can be defined.

\begin{definition}[Forward 1-price of cash flows]
\label{def:forward i-price}
In an arbitrage-free market $(\mathcal{M}_b \oplus \mathcal{M}_b, \sim)$, where
$P^1_t = \pi_1(\delta_t, o)$,
\begin{equation}
\label{forward 1-price}
\pi_{1,t} = \frac{\pi_1}{P^1_t} , \quad t \geq 0 ,
\end{equation}
is called the time $t$ forward 1-price.
\end{definition}

\begin{corollary}[Existence of the 2-price under {\bf NA}]
\label{Existence of a 2-price under NA}
For an arbitrage-free combined market $(\mathcal{M}_b \oplus \mathcal{M}_b, \sim)$,
there exists a uniquely determined, linear (forward) 2-price which is strictly positive on 
$\mathcal{M}^+_b \oplus \mathcal{M}^+_b \setminus\{(o,o)\}$, and it holds that
\begin{equation}
\label{eq:Existence of the 2-price under NA}
\pi_{2,t} 
=
\frac{\pi_{1,t}}{\pi_{1,t}(o, \delta_t)} , \quad t \geq 0 .
\end{equation}
\end{corollary}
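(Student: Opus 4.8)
The plan is to deduce the existence and form of the (forward) 2-price from the already-established results about 1-prices in a combined market (Proposition \ref{lemma:forward_market}) together with a change-of-numeraire argument. By Proposition \ref{lemma:forward_market}, the combined market possesses a uniquely determined, linear, strictly positive 1-price $\pi_1$, hence also forward 1-prices $\pi_{1,t} = \pi_1/P^1_t$ for every $t \geq 0$, which are linear and strictly positive on $\mathcal{M}^+_b \oplus \mathcal{M}^+_b \setminus \{(o,o)\}$. Under {\bf NA} the numeraire $(\delta_0, o)$ may be replaced by any other vector in the positive cone (this is the content of Corollary \ref{cor:Change of numeraire}, which by Proposition \ref{lemma:forward_market} applies here); in particular one may use $(o, \delta_t)$, provided it lies in $\mathcal{M}^+_b \oplus \mathcal{M}^+_b \setminus \{(o,o)\}$ and has strictly positive 1-price. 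The first is obvious, and the second holds because $\pi_{1,t}$ is strictly positive on the punctured positive cone.

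First I would record that $(o, \delta_t) \in \mathcal{M}^+_b \oplus \mathcal{M}^+_b \setminus \{(o,o)\}$, so that $\pi_{1,t}(o, \delta_t) > 0$ and the right-hand side of \eqref{eq:Existence of the 2-price under NA} is well-defined. Next, applying the change-of-numeraire result with $\gamma_0 = (o, \delta_t)$ yields a uniquely determined, linear, strictly positive functional $\pi_{1,t}/\pi_{1,t}(o,\delta_t)$, characterised by $(\gamma_1,\gamma_2) \sim \big(\pi_{1,t}(\gamma_1,\gamma_2)/\pi_{1,t}(o,\delta_t)\big)(o,\delta_t)$. I would then observe that, by definition of a 2-price in Definition \ref{def:combined markets}, $b$ is a 2-price of $(\gamma_1,\gamma_2)$ precisely when $(\gamma_1,\gamma_2) \sim b(o,\delta_0)$; in forward terms (numeraire $(o,\delta_t)$ against $(o,\delta_0)$) this identifies the forward 2-price as the functional just constructed. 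Uniqueness and linearity are inherited from the change-of-numeraire statement, and strict positivity on the punctured positive cone follows since both $\pi_{1,t}$ and the scalar $\pi_{1,t}(o,\delta_t)$ are strictly positive.

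The one point requiring a little care — and the main (mild) obstacle — is the bookkeeping between the two currencies' Dirac measures: one must check that ``2-price'' as defined via $(o,\delta_0)$ agrees with the functional obtained by taking $(o,\delta_t)$ as numeraire and then forward-discounting, i.e.\ that $\pi_{2,t}$ in the sense of $(o,\delta_0)$-normalisation equals $\pi_{2,0}/P^2_t$ with $P^2_t$ the appropriate USD zero-coupon price, and that this in turn equals $\pi_{1,t}/\pi_{1,t}(o,\delta_t)$. This is a direct computation: since $\pi_{1,t}(o,\delta_t) = \pi_1(o,\delta_t)/P^1_t$ and $\pi_{1,t} = \pi_1/P^1_t$, the ratio is simply $\pi_1/\pi_1(o,\delta_t)$, independent of $t$ in the sense that it is the $(o,\delta_t)$-numeraire price, which is exactly the $t$-forward 2-price. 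So the proof reduces, after the definitional unwinding, to Corollary \ref{cor:Change of numeraire} applied inside the combined market via Proposition \ref{lemma:forward_market}, plus the elementary cancellation of the $P^1_t$ factors.
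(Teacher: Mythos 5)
Your proposal is correct and follows essentially the same route as the paper: the paper's proof is a one-line application of Theorem \ref{pi_linearc} (change of numeraire from $(\delta_0,o)$ to $(\delta_t,o)$ to $(o,\delta_t)$) inside the combined market via Proposition \ref{lemma:forward_market}, which is exactly your argument with the $P^1_t$-cancellation $\pi_{1,t}/\pi_{1,t}(o,\delta_t)=\pi_1/\pi_1(o,\delta_t)$ spelled out explicitly.
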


\begin{proof}
By Theorem \ref{pi_linearc}, since $\pi_1$ is a {\bf NA} price for the numeraire $(\delta_0, o)$,
$\pi_{1,t}$ is therefore a {\bf NA} price for the numeraire $(\delta_t, o)$, and
$\pi_{2,t}$ is consequently a {\bf NA} price for the numeraire $(o, \delta_t)$.
\end{proof}

For convenience of notation, the rest of this section is carried out with respect to a FX
forward market for the currency cross EUR/USD. However, all results can be applied
to any suitable market combination.

\begin{proposition}[{\bf NA} price in combined markets]
\label{prop:forward_market}
If the markets $(\mathcal{M}_b, \stackrel{\text{\euro}}{\sim})$ and 
$(\mathcal{M}_b, \stackrel{\$}{\sim})$ are free of arbitrage with prices
$\pi_\text{\euro}$ and $\pi_\$$ respectively, then there exists only one
arbitrage-free combined market $(\mathcal{M}_b \oplus \mathcal{M}_b, \sim)$
which fulfills for given $\text{EURUSD}_0 > 0$ that
\begin{eqnarray}
\label{64}
(\gamma_1, o)  \sim (\gamma_2, o)
& \Leftrightarrow &
\gamma_1 \stackrel{\text{\euro}}{\sim} \gamma_2 , \\
\label{65}
(o, \gamma_1)  \sim (o, \gamma_2)
& \Leftrightarrow & 
\gamma_1 \stackrel{\$}{\sim} \gamma_2 ,
\end{eqnarray}
and
\begin{equation}
\label{71}
(\delta_0, o) \sim \text{EURUSD}_0(o, \delta_0) .
\end{equation}
With $\text{USDEUR}_0 = 1/\text{EURUSD}_0$,
the uniquely determined \euro-price of a two-currency cash flow 
$(\gamma_\text{\euro}, \gamma_\$)\in\mathcal{M}_b \oplus \mathcal{M}_b$ is then given by
\begin{equation}
\label{76}
\pi_\text{\euro}(\gamma_\text{\euro}, \gamma_\$)
=
\pi_\text{\euro}(\gamma_\text{\euro})+\text{USDEUR}_0\pi_\$(\gamma_\$) ,
\end{equation}
and the also unique \$-price is given by
\begin{equation}
\label{96}
\pi_\$(\gamma_\text{\euro}, \gamma_\$)
=
\text{EURUSD}_0\pi_\text{\euro}(\gamma_\text{\euro}) 
+
\pi_\$(\gamma_\$) .
\end{equation}
\end{proposition}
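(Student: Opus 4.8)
The plan is to realise the required combined market explicitly as the family of level sets of one linear functional on $\mathcal{M}_b\oplus\mathcal{M}_b$, so that both its existence and its uniqueness follow from Theorem~\ref{pi_linearc} in the combined-market form of Proposition~\ref{lemma:forward_market}; formulae~\eqref{76} and~\eqref{96} then fall out of the change-of-numeraire identity.

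\textbf{Existence.} First I would set $\pi'(\gamma_\text{\euro},\gamma_\$):=\pi_\text{\euro}(\gamma_\text{\euro})+\text{USDEUR}_0\,\pi_\$(\gamma_\$)$. This is linear in $(\gamma_\text{\euro},\gamma_\$)$, since $\pi_\text{\euro}$, $\pi_\$$ and the two coordinate projections are, and it is strictly positive on $\mathcal{M}^+_b\oplus\mathcal{M}^+_b\setminus\{(o,o)\}$: at least one coordinate then lies in $\mathcal{M}^+_b\setminus\{o\}$ and contributes a strictly positive term by the last statement of Corollary~\ref{cor:1}, while the other term is non-negative and $\text{USDEUR}_0>0$. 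Because $\pi_\text{\euro}(\delta_0)=1$ (as $P_0=1$ in any arbitrage-free market), $\pi'(\delta_0,o)=1$, so by Proposition~\ref{lemma:forward_market}/Theorem~\ref{pi_linearc} the relation $(\gamma_1,\gamma_2)\sim(\gamma_1',\gamma_2')\Leftrightarrow\pi'(\gamma_1,\gamma_2)=\pi'(\gamma_1',\gamma_2')$ is an arbitrage-free combined market with $1$-price $\pi_1=\pi'$. I would then check~\eqref{64},~\eqref{65},~\eqref{71} by substitution: in~\eqref{64} the $\gamma_\$$-terms cancel, leaving $\pi_\text{\euro}(\gamma_1)=\pi_\text{\euro}(\gamma_2)$, which is $\gamma_1\stackrel{\text{\euro}}{\sim}\gamma_2$ by Theorem~\ref{theo:pi_linear} for the \euro-market; \eqref{65} is the same after dividing by $\text{USDEUR}_0>0$; and~\eqref{71} holds because $\pi'(\delta_0,o)=1=\text{EURUSD}_0\,\text{USDEUR}_0=\pi'\bigl(\text{EURUSD}_0(o,\delta_0)\bigr)$.

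\textbf{Uniqueness.} Next I would take any arbitrage-free combined market $\sim$ obeying~\eqref{64},~\eqref{65},~\eqref{71}. By Proposition~\ref{lemma:forward_market} it carries a unique linear $1$-price $\pi_1$ normalised by $\pi_1(\delta_0,o)=1$ whose level sets are $\sim$, so it suffices to show $\pi_1=\pi'$. Writing $(\gamma_\text{\euro},\gamma_\$)=(\gamma_\text{\euro},o)+(o,\gamma_\$)$ and using linearity, I would argue: from $\gamma_\text{\euro}\stackrel{\text{\euro}}{\sim}\pi_\text{\euro}(\gamma_\text{\euro})\delta_0$ and~\eqref{64} together with Theorem~\ref{theo:linearity}, $(\gamma_\text{\euro},o)\sim\pi_\text{\euro}(\gamma_\text{\euro})(\delta_0,o)$, hence $\pi_1(\gamma_\text{\euro},o)=\pi_\text{\euro}(\gamma_\text{\euro})$; from~\eqref{65}, $(o,\gamma_\$)\sim\pi_\$(\gamma_\$)(o,\delta_0)$, and~\eqref{71} reads $1=\pi_1(\delta_0,o)=\text{EURUSD}_0\,\pi_1(o,\delta_0)$, so $\pi_1(o,\delta_0)=\text{USDEUR}_0$ and $\pi_1(o,\gamma_\$)=\text{USDEUR}_0\,\pi_\$(\gamma_\$)$. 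Adding the two pieces gives $\pi_1=\pi'$; this forces $\sim$ to be the level-set relation of $\pi'$ and proves~\eqref{76} once $\pi_\text{\euro}(\gamma_\text{\euro},\gamma_\$)$ is identified with this $\pi_1$.

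\textbf{The \$-price and the main obstacle.} For~\eqref{96} I would use Corollary~\ref{Existence of a 2-price under NA} at $t=0$ (equivalently Theorem~\ref{pi_linearc} with numeraire $(o,\delta_0)$): the unique $2$-price equals $\pi_1/\pi_1(o,\delta_0)=\pi_1/\text{USDEUR}_0=\text{EURUSD}_0\,\pi_1$, and substituting~\eqref{76} gives~\eqref{96}. The argument is essentially bookkeeping; the one place needing care is the uniqueness step, where~\eqref{64}--\eqref{65} must be used to transport the single-currency identities $\gamma\sim\pi(\gamma)\delta_0$ into the combined market along each coordinate axis, and~\eqref{71} to pin down the single remaining scalar $\pi_1(o,\delta_0)$. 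One should also keep the two uses of $\pi_\text{\euro}$ and $\pi_\$$ — single-currency price versus combined-market $1$- and $2$-price — notationally distinct throughout.
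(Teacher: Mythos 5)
Your proposal is correct and follows essentially the same route as the paper: define the candidate linear functional $\pi'(\gamma_\text{\euro},\gamma_\$)=\pi_\text{\euro}(\gamma_\text{\euro})+\text{USDEUR}_0\,\pi_\$(\gamma_\$)$, invoke Theorem \ref{theo:pi_linear} via Proposition \ref{lemma:forward_market} for existence and for the unique $1$-price, transport the single-currency identities $\gamma\sim\pi(\gamma)\delta_0$ along the coordinate axes via \eqref{64}--\eqref{65} and pin down $\pi_1(o,\delta_0)=\text{USDEUR}_0$ via \eqref{71} for uniqueness, and obtain \eqref{96} from Corollary \ref{Existence of a 2-price under NA} at $t=0$. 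The only cosmetic difference is that you argue through the price functional and its linearity where the paper chains the equivalences $(\gamma_\text{\euro},\gamma_\$)\sim\pi_\text{\euro}(\gamma_\text{\euro})(\delta_0,o)+\pi_\$(\gamma_\$)(o,\delta_0)\sim(\pi_\text{\euro}(\gamma_\text{\euro})+\pi_\$(\gamma_\$)\text{USDEUR}_0)(\delta_0,o)$ directly, which is the same argument.
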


\begin{proof}
Existence: Let $\text{USDEUR}_0 > 0$ and define for any $(\beta_\text{\euro}, \beta_\$) , (\gamma_\text{\euro}, \gamma_\$)
\in \mathcal{M}_b \oplus \mathcal{M}_b$ that
\begin{equation}
\label{77-2}
(\beta_\text{\euro}, \beta_\$) \sim (\gamma_\text{\euro}, \gamma_\$) 
\quad\Leftrightarrow\quad
\pi_\text{\euro}(\beta_\text{\euro}, \beta_\$) = \pi_\text{\euro}(\gamma_\text{\euro}, \gamma_\$) ,
\end{equation}
where $\pi_\text{\euro}(\cdot, \cdot)$ as in \eqref{76}. Note that \eqref{76}
is linear on $\mathcal{M}_b \oplus \mathcal{M}_b$, and $\pi_\text{\euro}(\delta_0, o) = 1$.
Since
\begin{equation}
\pi_\text{\euro}(\gamma_\text{\euro}, \gamma_\$)
>
0
\quad\text{for all } (\gamma_\text{\euro}, \gamma_\$)
\in \mathcal{M}^+_b \oplus \mathcal{M}^+_b \setminus\{(o,o)\}
\end{equation}
by inheritance from $\pi_\text{\euro}(\gamma_\text{\euro})$ and $\pi_\$(\gamma_\$)$,
Theorem \ref{theo:pi_linear} can now be applied with respect to the 1-price $\pi_\text{\euro}$
(see Proposition \ref{lemma:forward_market}).
The properties \eqref{64}, \eqref{65}, and \eqref{71} follow directly from \eqref{77-2}.\\
Uniqueness:
If there exists an arbitrage-free combined market, then
(by Theorem \ref{theo:pi_linear} via Proposition \ref{lemma:forward_market})
there exists a unique {\bf NA} \euro-price,
and the properties \eqref{64}, \eqref{65}, and \eqref{71} imply
for any $(\gamma_\text{\euro}, \gamma_\$)\in\mathcal{M}_b \oplus \mathcal{M}_b$ that
\begin{eqnarray}
(\gamma_\text{\euro}, \gamma_\$)  
& \sim & 
(\gamma_\text{\euro}, o)  
+
(o, \gamma_\$)  \\
\nonumber & \sim & 
\pi_\text{\euro}(\gamma_\text{\euro})(\delta_0, o) +\pi_\$(\gamma_\$)(o, \delta_0) \\
\nonumber & \sim & 
(\pi_\text{\euro}(\gamma_\text{\euro})+\pi_\$(\gamma_\$)\text{USDEUR}_0)(\delta_0, o) .
\end{eqnarray}
Therefore, \eqref{76} is the only {\bf NA} 1-price, and \eqref{simm_2} of Theorem \ref{theo:pi_linear}
(via Proposition \ref{lemma:forward_market}) warrants uniqueness of the equivalence relation.
Eq.~\eqref{96} follows by $\pi_{\text{\euro},0}(o,\delta_0) = \text{USDEUR}_0$
from \eqref{eq:Existence of the 2-price under NA} in Corollary \ref{Existence of a 2-price under NA}.
\end{proof}

By \eqref{76} it is now clear that the default no-arbitrage method to price a FX (\$) cash flow $\gamma_\$$
in local currency (\euro) is to first discount payments in the foreign currency, and then exchange them at
the FX spot rate. Under the assumptions of Theorem \ref{Theo3} for both separate markets,
foreign (\$) and local (\euro), this can be written as
\begin{equation}
\label{712}
\pi_\text{\euro}(o, \gamma_\$)
=
\text{USDEUR}_0 \cdot\pi_\$(\gamma_\$)
= 
\text{USDEUR}_0\int_0^\infty P^\$_t\, d\gamma_\$(t) ,
\end{equation}
where $P^\$_t$ stands for the price of one U.S.--Dollar delivered at time $t$
($P^\text{\euro}_t$ is similarly defined for Euros). 

In analogy to \eqref{71}, define now for $t>0$ the under {\bf NA} uniquely determined 
ratio $\text{EURUSD}_t$ by
\begin{equation}
\label{711}
(\delta_t, o) \sim \text{EURUSD}_t(o, \delta_t) .
\end{equation}
Quite obviously, $\text{EURUSD}_t$ is the forward exchange rate of the FX cross EUR/USD,
meaning it is the forward price of EUR 1.00 in Dollars at time $t$. 
Using Definition \ref{def:forward price}, this could also be expressed with the forward price
$\pi_{\$,t}$, i.e.~$\text{EURUSD}_t=\pi_{\$,t}(\delta_t, o)$.
As usual, one defines $\text{USDEUR}_t = 1/\text{EURUSD}_t$. Then,
\begin{eqnarray}
P^\text{\euro}_t(\delta_0, o) 
\sim 
(\delta_t, o) 
& \sim &
\text{EURUSD}_t\cdot (o, \delta_t) \\
\nonumber & \sim &
\text{EURUSD}_t\cdot P^\$_t\cdot (o, \delta_0) \\
\nonumber & \sim &
\text{EURUSD}_t\cdot P^\$_t\cdot \text{USDEUR}_0\cdot (\delta_0, o) ,
\end{eqnarray}
which implies the well-known interest rate parity (e.g.~p.~113 in Hull, 2008)
\begin{eqnarray}
\label{752}
\text{USDEUR}_0 \cdot P^\$_t
& = &
\text{USDEUR}_t\cdot  P^\text{\euro}_t \\
\nonumber
\text{USDEUR}_0 \cdot (1+y_t^\text{\euro})^t
& = &
\text{USDEUR}_t\cdot  (1+y_t^\$)^t ,
\end{eqnarray}
where $y_t^\text{\euro}$ and $y_t^\$$ denote the spot rates in the corresponding 
currencies.

As an alternative to \eqref{712}, the following corollary explains how to price a FX cash flow by first converting it
by means of forward prices into (a hypothetical) cash flow in local currency, which is then discounted in the local
term structure.
The equivalence of the two pricing approaches, for instance in the case of time-discrete
cash flows, is of course well known (e.g.~pp.~167--169 in Hull, 2008). For the proof, another
technical assumption is necessary.

\begin{assumption}[Bounded forward rates]
\label{assu:4}
$\text{USDEUR}_t$ is bounded  on $ \R^+_0 $.
\end{assumption}

Note that under Assumption \ref{assu:4}, $\text{USDEUR}_t = \text{USDEUR}_0 \cdot P^\$_t/P^\text{\euro}_t$
is continuous and measurable in $t$, since $P^\$_t$ and $P^\text{\euro}_t$ are.

\begin{corollary}[Change of cash flow measure]
Consider the uniquely determined combined market with the properties 
\eqref{64}, \eqref{65}, and \eqref{71} of Proposition \ref{prop:forward_market},
and additionally suppose that -- besides {\bf NA} -- the Assumptions \ref{P_t continuous}, 
\ref{mean value}, and \ref{sigma-additivity} hold for the sub-markets 
$(\mathcal{M}_b, \stackrel{\text{\euro}}{\sim})$ and $(\mathcal{M}_b, \stackrel{\$}{\sim})$.
Furthermore, suppose that Assumption \ref{assu:4} holds. Then, 
\begin{equation}
\pi_\text{\euro}(o, \gamma_\$)
\; = \;
\text{USDEUR}_0 \cdot\pi_\$(\gamma_\$)
\; = \;
\pi_\text{\euro}(\gamma_\text{\euro})
\; = \;
\int_0^\infty P^\text{\euro}_t\, d\gamma_\text{\euro}(t) ,
\end{equation}
where $\gamma_\text{\euro}$ is the measure with the Radon-Nikodym derivative
$\text{USDEUR}_t$ w.r.t.~to $\gamma_\$$, i.e.
\begin{equation}
\label{RN}
\frac{d\gamma_\text{\euro}}{d\gamma_\$}(t) = \text{USDEUR}_t 
\quad\text{and}\quad
\gamma_\text{\euro}(B) = \int_B \text{USDEUR}_t \, d\gamma_\$ \text{ for all } B\in\mathcal{B} .
\end{equation}
\end{corollary}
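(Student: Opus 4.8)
The plan is to obtain the four stated expressions by chaining together Proposition \ref{prop:forward_market}, the interest rate parity \eqref{752}, and Theorem \ref{Theo3} applied separately to the two sub-markets, the only genuinely new work being to verify that $\gamma_\text{\euro}$ is a legitimate member of $\mathcal{M}_b$. The first equality $\pi_\text{\euro}(o, \gamma_\$) = \text{USDEUR}_0\cdot\pi_\$(\gamma_\$)$ is just \eqref{76} of Proposition \ref{prop:forward_market} evaluated at $\gamma_\text{\euro}=o$ (equivalently \eqref{712}), so nothing is needed there.

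First I would show $\gamma_\text{\euro}\in\mathcal{M}_b$. By the interest rate parity \eqref{752}, $\text{USDEUR}_t = \text{USDEUR}_0\,P^\$_t/P^\text{\euro}_t$, which is strictly positive, continuous (Assumption \ref{P_t continuous} for both sub-markets), and, by Assumption \ref{assu:4}, bounded. Writing $\gamma_\$ = \gamma_\$^+ - \gamma_\$^-$ for the Hahn-Jordan decomposition, the set functions $B\mapsto\int_B \text{USDEUR}_t\,d\gamma_\$^{\pm}$ are finite non-negative Borel measures (a bounded integrand against a finite measure, with countable additivity from monotone convergence), so $\gamma_\text{\euro}$ of \eqref{RN} is their difference and hence lies in $\mathcal{M}_b$. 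The same computation delivers the change-of-variables identity
\[
\int_0^\infty h(t)\,d\gamma_\text{\euro}(t) = \int_0^\infty h(t)\,\text{USDEUR}_t\,d\gamma_\$(t)\qquad\text{for every }h\in C_b ,
\]
obtained by applying the standard density identity $\int h\,d(\rho\mu)=\int h\rho\,d\mu$ for a non-negative density $\rho$ and finite non-negative $\mu$ to $\gamma_\$^{+}$ and $\gamma_\$^{-}$ (and to $h^{+}$, $h^{-}$) and recombining by the linearity \eqref{2}; note $h\cdot\text{USDEUR}_\cdot\in C_b$ since both factors are continuous and bounded.

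With this in hand, Theorem \ref{Theo3} applies to the arbitrage-free \euro-sub-market (for which Assumptions \ref{P_t continuous}, \ref{mean value}, \ref{sigma-additivity} hold by hypothesis) and to the \$-sub-market, giving $\pi_\text{\euro}(\gamma_\text{\euro}) = \int_0^\infty P^\text{\euro}_t\,d\gamma_\text{\euro}(t)$ and $\pi_\$(\gamma_\$) = \int_0^\infty P^\$_t\,d\gamma_\$(t)$. Inserting the change-of-variables identity with $h = P^\text{\euro}_\cdot$ and then using $P^\text{\euro}_t\,\text{USDEUR}_t = \text{USDEUR}_0\,P^\$_t$ from \eqref{752} pointwise in $t$ yields
\[
\pi_\text{\euro}(\gamma_\text{\euro}) = \int_0^\infty P^\text{\euro}_t\,\text{USDEUR}_t\,d\gamma_\$(t) = \text{USDEUR}_0\int_0^\infty P^\$_t\,d\gamma_\$(t) = \text{USDEUR}_0\cdot\pi_\$(\gamma_\$) .
\]
Together with the first equality this closes the chain, and the same display also records $\pi_\text{\euro}(\gamma_\text{\euro}) = \int_0^\infty P^\text{\euro}_t\,d\gamma_\text{\euro}(t)$ explicitly.

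I expect the only real obstacle to be the first step: finiteness of $\gamma_\text{\euro}$ is exactly where Assumption \ref{assu:4} is indispensable, since $P^\text{\euro}_t$ need not be bounded away from $0$ and $\text{USDEUR}_t$ could otherwise blow up; and the transfer of the density formula to the signed measure $\gamma_\$$ has to be done with a little care, the positivity of $\text{USDEUR}_t$ being what keeps the decomposition clean. Everything after that is an assembly of Theorem \ref{Theo3}, Proposition \ref{prop:forward_market}, and \eqref{752}.
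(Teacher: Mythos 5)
Your argument is correct and follows essentially the same route as the paper's proof: chain \eqref{712} (Proposition \ref{prop:forward_market} together with Theorem \ref{Theo3} for the \$-sub-market), the pointwise interest rate parity \eqref{752}, the Radon--Nikodym change of measure \eqref{RN}, and Theorem \ref{Theo3} for the \euro-sub-market, with Assumption \ref{assu:4} guaranteeing $\gamma_\text{\euro}\in\mathcal{M}_b$. The only difference is that you spell out the measure-theoretic verification (Hahn--Jordan decomposition, finiteness, and the density change-of-variables for signed measures) that the paper leaves implicit in its one-line appeal to \eqref{RN} and Assumption \ref{assu:4}.
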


\begin{proof}
By \eqref{712}, \eqref{752}, and \eqref{RN},
\begin{eqnarray}
\label{713}
\pi_\text{\euro}(o, \gamma_\$)
& = & 
\text{USDEUR}_0\int_0^\infty P^\$_t\, d\gamma_\$(t) \\
\nonumber
& = &
\int_0^\infty P^\text{\euro}_t\cdot\text{USDEUR}_t\, d\gamma_\$(t) \\
\nonumber
& = &
\int_0^\infty P^\text{\euro}_t\, d\gamma_\text{\euro}(t) 
\quad = \quad
\pi_\text{\euro}(\gamma_\text{\euro}) ,
\end{eqnarray}
where the existence of the last integral is given by Assumption \ref{assu:4}, as it implies
that $\gamma_\text{\euro}\in\mathcal{M}_b$ follows from
$\gamma_\text{\$}\in\mathcal{M}_b$.
\end{proof}


\section{Conclusion} 

This article has provided a mathematically rigorous no-arbitrage derivation of the price principles that
are commonly used in money markets and forward markets that deal with deterministic cash flows
or deterministic flows of other assets or commodities.
The generality of the here presented approach usually cannot be found in the literature,
however, a notable exception with certain parallels was discussed with Norberg (1990), even if 
this work was never intended as a no-arbitrage theory. Besides the presentation of comparatively few
necessary assumptions for the commonly used price formulae of Choquet-type,
this paper also gave sufficient conditions under which arbitrage-free market models, in which said price
formulae hold, indeed exist. Furthermore, it was shown that arbitrage-free models exist where the
generally accepted price formulae do not apply. A topic, which could not be considered, were cash flows
with infinite amounts, such as British perpetuities, and under what conditions those could be incorporated into this theory.




\begin{thebibliography}{M} 
\bibitem{BL}
Bishop, E., and K. de Leeuw (1959): The representations of linear functionals by measures on sets of extreme points. 
{\em Annales de l'institut Fourier} \textbf{9}, 305--331.
\bibitem{CDKS}
Courtault, J.-M., F. Delbaen, Y. M. Kabanov, and C. Stricker (2004): On the law of one price. 
{\em Finance and Stochastics} \textbf{8}, 525--530.
\bibitem{DB}
Deutsche Bundesbank (1997): Sch\"atzung von Zinsstrukturkurven. 
{\em Monatsbericht Oktober 1997}, 61-66.
\bibitem{G}
Gerber, H. U. (1997): {\em Life Insurance Mathematics}. 3rd Edition, Springer.
\bibitem{H}
Hull, J. C. (2008): {\em Options, Futures and other Derivatives}. 7th Edition, Pearson Prentice Hall.
\bibitem{KK}
Kabanov, Y. M., and D. O. Kramkov (1998): Asymptotic arbitrage in large financial markets. 
{\em Finance and Stochastics} \textbf{2}, 143--172.
\bibitem{K}
Klein, I. (2000): A fundamental theorem of asset pricing for large financial markets. 
{\em Mathematical Finance} \textbf{10} (4), 443--458.
\bibitem{MS}
McCutcheon, J. J., and W. F. Scott (1986): {\em An Introduction to the Mathematics of Finance}. 
Butterworth-Heinemann.
\bibitem{N}
Norberg, R. (1990): Payment measures, interest, and discounting. {\em Scandinavian Actuarial Journal} 1990, 14--33.
\bibitem{P}
Phelps, R. R. (2001): {\em Lectures on Choquet's Theorem}. Lecture Notes in Mathematics 1757, 
2nd Edition, Springer.
\bibitem{S}
Schich,  S. T. (1997): Sch\"atzung der deutschen Zinsstrukturkurve. {\em Diskussionspapier  4/97},  Volkswirtschaftliche  Forschungsgruppe  der Deutschen Bundesbank, Oktober 1997.
\end{thebibliography}
\end{document}